\documentclass[a4paper,UKenglish,cleveref, autoref]{lipics-v2021}
\usepackage{amsmath}
\usepackage{stmaryrd}
\usepackage{esint}
\usepackage{mdframed}
\usepackage{quiver}
\SetSymbolFont{stmry}{bold}{U}{stmry}{m}{n}

\usepackage{makecell}
\usepackage{preamble}
\usepackage{graphicx}
\usepackage{bm} 
\usepackage{bbm} 
\usepackage{newmacros}

\pdfoutput=1 
\hideLIPIcs  

\title{Directed type theory, with a twist} 

\author{Fernando Chu}{Utrecht University, Netherlands}{johnqpublic@dummyuni.org}{https://orcid.org/0009-0005-9125-769X}{}

\author{Paige Randall North}{Utrecht University, Netherlands}{p.r.north@uu.nl}{https://orcid.org/0000-0001-7876-0956}{}

\authorrunning{F. Chu and P.\,R. North} 

\Copyright{Fernando Chu and Paige Randall North} 

\ccsdesc[500]{Theory of computation~Type theory}

\keywords{Directed Type Theory, Category Theory, Homotopy Type Theory} 

\category{} 

\relatedversion{} 

\acknowledgements{The authors thank Kenji Maillard, Andrea Laretto, Christian Merten, Rob Schellingerhout, Ulrik Buchholtz, and \'El\'eonore Mangel for insightful discussions about this work.}

\EventEditors{John Q. Open and Joan R. Access}
\EventNoEds{2}
\EventLongTitle{42nd Conference on Very Important Topics (CVIT 2016)}
\EventShortTitle{CVIT 2016}
\EventAcronym{CVIT}
\EventYear{2016}
\EventDate{December 24--27, 2016}
\EventLocation{Little Whinging, United Kingdom}
\EventLogo{}
\SeriesVolume{42}
\ArticleNo{23}

\begin{document}

\nolinenumbers 
\maketitle

\begin{abstract}
  In recent years, Homotopy Type Theory (HoTT) has had great success both as a foundation of mathematics and as internal language to reason about $\infty$-groupoids (a.k.a. spaces).
  However, in many areas of mathematics and computer science, it is often the case that it is categories, not groupoids, which are the more important structures to consider.
  For this reason, multiple directed type theories have been proposed, i.e., theories whose semantics are based on categories.
  In this paper, we present a new such type theory, Twisted Type Theory (TTT).
  It features a novel ``twisting'' operation on types: given a type that depends both contravariantly and covariantly on some variables, its twist is a new type that depends only covariantly on the same variables.
  To provide the semantics of this operation, we introduce the notion of dependent 2-sided fibrations (D2SFibs), which generalize Street's notion of 2-sided fibrations.
  We develop the basic theory of D2SFibs, as well as characterize them through a straightening-unstraightening theorem.
  With these results in hand, we introduce a new elimination rule for Hom-types.
  We argue that our syntax and semantics satisfy key features that allow reasoning in a HoTT-like style, which allows us to mimic the proof techniques of that setting.
  We end the paper by exemplifying this, and use TTT to reason about categories, giving a syntactic proof of Yoneda's lemma.
\end{abstract}

\section{Introduction}
In a seminal paper \cite{hofmann_groupoid_1998}, Hofmann and Streicher constructed the \emph{groupoid model}: a semantic interpretation of Martin-Löf Type Theory (MLTT) \cite{martin-lof_intuitionistic_1975} where contexts are interpreted as groupoids and types are interpreted as indexed groupoids.
Years later, Voevodsky et al.\ \cite{kapulkin_simplicial_2021} generalized this result by constructing the \emph{$\infty$-groupoid model}.
Further, they observed that this model satisfies the so-called \emph{univalence axiom}, which roughly states that equivalent types are equal.
These results marked the beginning of Homotopy Type Theory (HoTT) \cite{the_univalent_foundations_program_homotopy_2013}.

After multiple years of development, HoTT has proven to be both (i) a successful foundation for developing mathematics and (ii) a useful synthetic language to reason about $\infty$-groupoids (a.k.a. spaces).
As an example of (i), in HoTT one can use the Structure Identity Principle (SIP) to transport properties between equivalent algebraic structures \cite{coquand_isomorphism_2013,ahrens_higher_2020}; this gives a rigorous foundation to the often used practice in mathematics of equating equivalent structures.
As an example of (ii), one can encode CW-complexes as Higher Inductive Types (HITs) and compute their homotopy groups synthetically \cite{brunerie_james_2019,hutchison_pi_n_2013}.

However, in many areas of mathematics and computer science, it is often the case that it is categories, not groupoids, which are the more important structures to consider.
Hence, multiple researchers have proposed \emph{directed type theories} \cite{ahrensBicategoricalTypeTheory2023,gratzer_directed_2024,larettoDiDirectedFirstOrder2026,licata_2-dimensional_2011,nasuInternalLogicVirtual2024,neumannSynthetic1CategoriesDirected2025,newFormalLogicFormal2023,north_towards_2019,andreas_towards_2015,riehl_type_2017}.
That is, type theories with a notion of non-symmetric ``directed equality'', whose semantics are based on categories.
The hope is then, in analogy to HoTT, that these theories would be (i) via a directed version of the SIP, an even more powerful foundation of mathematics and (ii) a useful synthetic language to reason about categories.

In this paper, we present a new directed type theory, which we call \emph{Twisted Type Theory (TTT)}.
One of our overarching design principles was to follow very closely Hofmann and Streicher's groupoid model of MLTT.
Therefore, we build upon North's category model \cite{north_towards_2019}, which generalizes it.
Indeed, in the category model, contexts are interpreted as categories, while types are interpreted as indexed categories.

We focus on generalizing the groupoid model as it is simpler than its $\infty$-dimensional counterpart, yet still retains enough features to validate a restricted version of points (i) and (ii) above.
Following closely the groupoid model will therefore allow us to replicate the syntactic techniques of HoTT, while maintaining some level of simplicity in our semantic considerations.
We leave it to future work to explore the higher dimensional models of TTT.

More specifically, TTT was designed to satisfy the following three desiderata, each of which is a categorical adaptation of a feature the groupoid model satisfies.

\begin{enumerate}
  \item \emph{Types as categories}.
    Being a type in the empty context should semantically correspond to being a category.
    This has the benefit of reducing the problem of whether a new construction is a category to a typechecking exercise.
    This is in contrast to other work, such as Simplicial Type Theory (STT) \cite{riehl_type_2017}, that extends the universe of types to include additional structures that are not the object of interest, i.e.\ they have types that are not categories.
    Categories are then recovered as types satisfying some additional conditions, and therefore one has to provide a proof that a newly constructed type is a category.
    While there are some results that ensure that certain constructions in STT preserve categories, one is usually also interested in obtaining categories by applying constructions on the more general base types.
    For example, discrete types in STT are defined as types satisfying a certain discreteness condition. Ad-hoc arguments are then used to show that these satisfy the so-called Segal condition, see Section 7 in op.\ cit.

  \item \emph{Emergent categorical structure}.
    There should be a type of directed equalities, i.e.\ a $\Hom$-type, given by formation, introduction, elimination and computation rules.
    This condition eases a generalization of the type theory to higher dimensions.
    Indeed, having $\Hom$-types allow encoding the $n$-cells of a type as the $n$-th iteration of the $\Hom$-type.
    Further, from an appropriate $\Hom$-elimination rule, the coherences will follow; this is what happens with the $\Id$-elimination rule in HoTT, which gives types the structure of $\infty$-groupoids.
    In contrast, other approaches would have to significantly change the type theory to accommodate the $n$-cells and their coherences.
    For example, a higher dimensional analogue of the 2-dimensional type theory of Licata \& Harper \cite{licata_2-dimensional_2011}, which does not have $\Hom$-types, would have to manually specify an $n$-categorical structure to the contexts. Similarly, STT would have to add a calculus of new shapes \cite[p. 152]{riehl_type_2017}.

  \item \emph{Trivial directed equalities through directed path objects}. We require that the $\Hom$-introduction rule corresponds to the factorization $A \to A^\to \to A\times A$ of the diagonal map $\delta:A \to A\times A$ through the arrow category $A^\to$ of a category $A$ (see Section \ref{sec:d2sfibs}).
    This is a generalization of the fact that, in a model category, path objects, i.e.\ appropriate factorizations $A \to A^I \to A \times A$ of the diagonal map $\delta$, can be used to give an interpretation of the formation and introduction rules of $\Id$-types \cite{awodey_homotopy_2009}.
    This technical condition ensures that the HoTT notion of homotopies between functions, i.e.\ functions $\Pi_{x:X}f(x) = g(x)$, can be readily generalized to the directed case.
    That is, it allows for reasoning about natural transformations (see Subsection \ref{subsec:nats}).
    This had previously been a difficulty in some directed type theories, see e.g.\ Neumann \& Altenkirch \cite[Section 6]{neumannSynthetic1CategoriesDirected2025}.
\end{enumerate}

The work of North \cite{north_towards_2019} already satisfied the first two desiderata.
Our contribution is then satisfying also the third.
We achieve this by introducing a novel ``twisting'' operation on types: given a type that depends both contravariantly and covariantly on some variables, its twist is a new type that depends only covariantly on the same variables.
Crucially, the interpretation of the twist of the $\Hom$-type of a type $A$ gives the arrow category $A^\to$ of $A$.
This enables us to state a $\Hom$-introduction rule, syntactically analogous to the $\Id$-introduction rule, whose interpretation satisfies desideratum (3).

More generally, the interpretation of the twisting of a type makes use of our notion of \emph{dependent 2-sided fibrations (D2SFibs)}, a generalization of Street's notion of a 2-sided fibration \cite{streetFibrationsYonedasLemma1974}.
Indeed, the twist of a type corresponds then to the unstraightening part of a straightening-unstraightening characterization we give of D2SFibs.

The theory TTT can then be summarized as an extension of North's type theory \cite{north_towards_2019} featuring twisted types and modified $\Hom$-types.
We will also make use of other features already well known in the literature, such as restricted $\Pi$-types, opposite contexts, discrete universes, among others \cite{licata_2-dimensional_2011,neumannSynthetic1CategoriesDirected2025}.




\subparagraph*{Outline and main contributions.}
\begin{itemize}
  \item Section \ref{sec:review} reviews the categorical semantics of MLTT and North's category model \cite{north_towards_2019}, as well as introduces the displayed category model of MLTT.
  \item Section \ref{sec:d2sfibs} introduces D2SFibs and develops the basic theory of these, which is used to give the semantics of the twist of a type.
  \item Section \ref{sec:TTT} introduces TTT, which presents novel rules for $\Hom$-types and adapts some additional features previously considered in the literature to our context.
  \item Section \ref{sec:applications} presents some applications of TTT, including a proof of Yoneda's lemma.
  \item Section \ref{sec:future-work} concludes the paper and sketches future work directions.
\end{itemize}

\subparagraph*{Related work.}

The first published work on directed type theory was presented by Licata \& Harper \cite{licata_2-dimensional_2011}.
Their theory's main feature is that the syntax had a primitive notion of transformations between parallel substitutions of contexts, thereby giving a 2-categorical structure to the category of contexts.
Since they did not introduce $\Hom$-types, their work does not satisfy our desiderata (2) nor (3).
Later work by North \cite{north_towards_2019} introduced $\Hom$-types for the first time, but did so by restricting the $\Hom$-introduction rule to the groupoid core of a type, hence their theory does not meet our desiderata (3).
More recently, Riehl \& Shulman \cite{riehl_type_2017} extended MLTT with a directed interval and other shapes, resulting in a powerful calculus reminiscent of Cubical Type Theory \cite{cohenCubicalTypeTheory2018}.
In order to do this, their semantic interpretation of a type is a bisimplicial set, and hence only some types correspond to categories; therefore our desiderata (1) and (2) are not satisfied.

Further work has extended and refined each of these three works in different directions: Ahrens et al.\ \cite{ahrensBicategoricalTypeTheory2023} generalize Licata \& Harper's work by giving the category of contexts a bicategorical structure, Neumann \& Altenkirch \cite{neumannSynthetic1CategoriesDirected2025} extend North's type theory by also adding groupoid cores of contexts and hence obtaining new rules for $\Hom$-types, and Gratzer et al.\ \cite{gratzer_directed_2024} add modalities to Riehl \& Shulman's type theory, greatly increasing the expressivity of the theory.
Nonetheless, these new works satisfy the same desiderata as the work their building upon, and so none satisfy all three of our criteria.
Table \ref{table:work} summarizes this discussion.

\begin{table}[h]
  \begin{tabular}{lllll}
    \cline{2-5}

    \multicolumn{1}{c|}{\multirow{2}{*}{}} & \multicolumn{1}{c|}{\multirow{2}{*}{Works}} & \multicolumn{3}{c|}{Desiderata}                                                   \\ \cline{3-5}
    \multicolumn{1}{c|}{}                   & \multicolumn{1}{c|}{}                   & \multicolumn{1}{c|}{(1)} & \multicolumn{1}{c|}{(2)} & \multicolumn{1}{c|}{(3)} \\ \hline
    \multicolumn{1}{|l|}{\makecell[l]{Directed type theories with two-dimensional\\ categories of contexts}}                  & \multicolumn{1}{l|}{\makecell[l]{Licata \& Harper \cite{licata_2-dimensional_2011}, \\ Ahrens et al.\ \cite{ahrensBicategoricalTypeTheory2023}}}                  & \multicolumn{1}{l|}{Yes} & \multicolumn{1}{l|}{No} & \multicolumn{1}{l|}{No} \\ \hline
    \multicolumn{1}{|l|}{\makecell[l]{Directed type theories with operations\\ involving groupoids}}                  & \multicolumn{1}{l|}{\makecell[l]{North \cite{north_towards_2019}, Neumann \\ \& Altenkirch \cite{neumannSynthetic1CategoriesDirected2025}}}                  & \multicolumn{1}{l|}{Yes} & \multicolumn{1}{l|}{Yes} & \multicolumn{1}{l|}{No} \\ \hline
    \multicolumn{1}{|l|}{Directed type theories with directed intervals}                  & \multicolumn{1}{l|}{\makecell[l]{Riehl \& Shulman \cite{riehl_type_2017}, \\Gratzer et al.\ \cite{gratzer_directed_2024}}}                  & \multicolumn{1}{l|}{No} & \multicolumn{1}{l|}{No} & \multicolumn{1}{l|}{Yes} \\ \hline
  \end{tabular}
  \caption{Summary of related work and desiderata satisfied.}
  \label{table:work}
\end{table}



A different line of work on directed type theory comes from theories that do not extend MLTT.
These include the simply typed theory of Laretto et al.\ \cite{larettoDiDirectedFirstOrder2026}, which gives a (co)end calculus for profunctors; as well as the work of New \& Licata \cite{newFormalLogicFormal2023} and Nasu \cite{nasuInternalLogicVirtual2024}, which introduce type theories whose semantics are based on virtual equipments and virtual double categories, respectively.
While these theories informed the present work, their lack of dependent types prevents us from directly comparing them with TTT.

The work of Nuyts \cite{andreas_towards_2015} is also relevant, as it introduces a modal directed type theory expanding on the work of Licata \& Harper \cite{licata_2-dimensional_2011}.
However, it does not provide semantics for their calculus, and so does not satisfy any of our desiderata.

\section{Categorical models of MLTT}\label{sec:review}
For the rest of this paper, all our (op)fibrations are cloven.
We begin by recalling the definition of comprehension category.

\begin{definition}[Comprehension categories \cite{jacobsComprehensionCategoriesSemantics1993}]
  Let $\mathcal{C}$ be a  category. A \emph{(split) comprehension category (CC) over} $\mathcal{C}$ is a (split) fibration $p:\mathcal{T}\to \mathcal{C}$ equipped with a functor $\chi: \mathcal{T} \to \mathcal{C}^\to$, the \emph{comprehension}, such that it preserves cartesian morphisms and such that the following diagram commutes.
  \[
    \begin{tikzcd}[sep=scriptsize]
      \mathcal{T} && \mathcal{C}^\to \\
      & \mathcal{C}
      \arrow["\chi", from=1-1, to=1-3]
      \arrow["p"', from=1-1, to=2-2]
      \arrow["\cod", from=1-3, to=2-2]
  \end{tikzcd}\]
  We write $(\mathcal{C}, \mathcal{T}, \chi)$ for such a CC. We say the CC is \emph{full} if $\chi$ is full and faithful.
\end{definition}

Jacobs \cite{jacobsComprehensionCategoriesSemantics1993} showed that full split CC's provide semantics for MLTT.
Briefly, the interpretation of the syntax is as follows:
\begin{itemize} 
  \item A context $\Gamma$ is interpreted as an object $\Interp{\Gamma} \in \mathcal{C}$.
  \item A type $\Gamma \vdash A\,\Ty$ is interpreted as an object $\Interp{A}\in \mathcal{T}_\Interp{\Gamma}$, i.e.\ the fibre in $\mathcal{T}$ over $\Interp{\Gamma}$.
  \item The context extension $\Gamma, a:A$ is interpreted as the domain of the comprehension of $A$, and so $\Interp{\Gamma, x:A} \defeq \mathsf{dom}(\chi(\Interp{A}))$.
  \item A term $\Gamma \vdash a :A$ is interpreted as a section $\Interp{a}: \Interp{\Gamma} \to \Interp{\Gamma.A}$ of $\chi(\Interp{A})$.
\end{itemize}

We now will give two categorical models of MLTT, i.e.\ two CC's over $\Cat$. To simplify the following discussion, we introduce the following notations.
\begin{notation}
  Let $F : A \to \Cat$ (resp. $G : B^\op \to \Cat$) be a functor. We write $A \ctxe F$ (resp. $B \ctxe^- G$) for the covariant (resp. contravariant) Grothendieck construction of $F$ (resp. $G$). We write $\pi:A\ctxe F \to A$ (resp. $\pi:B \ctxe^- G \to B)$ for its canonical projection.
\end{notation}

\subsection{The category model of MLTT}

The category model of MLTT in terms of a CC is given by North \cite[Construction 3.3]{north_towards_2019}, which we now recall.
\begin{definition}[The category model]\label{def:catmodel}
  Consider the functor $[-,\Cat]_\textnormal{lax}:\Cat^\op \to \Cat$
  that maps a category $A$ to the category of strict functors from $A$ to $\Cat$ and lax natural transformations. Write $\ICat_\textnormal{lax}$ for the category $\Cat \ctxe^- [-,\Cat]_\textnormal{lax}$ of indexed categories and lax natural transformations.

  The \emph{category model} is the full split CC over $\Cat$ given by the canonical projection of the $\pi:\ICat_\textnormal{lax} \to \Cat$, together with the comprehension functor $\ctxe^*{} : \ICat_\textnormal{lax} \to \Cat^\to$ given by mapping an object $(A,B)$ of $\ICat_\textnormal{lax}$ to $A \ctxe B \xrightarrow{\pi} A$ in $\Cat^\to$.
\end{definition}

It follows from the above discussion that, in the category model, a context is interpreted as a category $\Interp{\Gamma}$, while a type $\Gamma \vdash A\,\Ty$ interpreted as an indexed category $\Interp{A}:\Interp{\Gamma} \to \Cat$.

\begin{notation}
  From now on we write $\cxe$ for the context extension operation in MLTT, i.e.\ we write $\Gamma \cxe a : A$ for $\Gamma, a : A$.
  For the category model, we then have $\Interp{\Gamma \cxe a : A} \defeq \Interp{\Gamma} \ctxe \Interp{A}$.
  This notation will be useful to disambiguate from additional context extension operations we will introduce later.
\end{notation}

\begin{proposition}[\cite{licata_2-dimensional_2011,north_towards_2019}]\label{prop:Sigma}
  The category model has strong $\Sigma$-types and, if a countable hierarchy of inaccessible cardinals is assumed, universes $\UU_i$.
  Further, it also validates the following rules about opposite types and $\Hom$-types.
  {
    \small
    \begin{mathparpagebreakable}
      \inferrule*[right=$\op$\textnormal{\textsc{-Form}}]{
        \IsTy{A}
      }{
        \IsTy{A^\op}
      } \and
      \inferrule*[right=$\Hom$\textnormal{\textsc{-Form}}]{
        \IsTy{A}
      }{
        \IsTy[\Gamma \cxe a:A^\op \cxe b:A]{\Hom_{A}(a,b)}
      } \and
      \inferrule*[right=$\op$\textnormal{\textsc{-Inv}}]{
        \IsTy{A}
      }{
        \EqTy{(A^\op)^\op}{A}
      } \and
      \inferrule*[right=$\Hom$\textnormal{\textsc{-}}$\op$]{
        \IsTy{A}
      }{
        \IsTy[\Gamma \cxe a:A^\op \cxe b:A]{\Hom_{A}(a,b)^\op \jeq \Hom_{A}(a,b)}
      }
    \end{mathparpagebreakable}
  }
\end{proposition}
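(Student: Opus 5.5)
The plan is to verify each piece directly in the category model of Definition~\ref{def:catmodel}, where a type over $\Interp{\Gamma}$ is a strict functor $\Interp{\Gamma}\to\Cat$ and substitution is precomposition, so that stability under substitution is automatic whenever a construction is defined pointwise and functorially in the fibre.

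For $\Sigma$-types, I take $\Gamma\vdash A\,\Ty$ and $\Gamma\cxe a:A\vdash B\,\Ty$, i.e.\ $\Interp{A}:\Interp{\Gamma}\to\Cat$ and $\Interp{B}:\Interp{\Gamma}\ctxe\Interp{A}\to\Cat$. I would define $\Interp{\Sigma_{a:A}B}(\gamma) \defeq \Interp{A}(\gamma)\ctxe \Interp{B}(\gamma,-)$, the fibrewise Grothendieck construction, with functorial action on $f:\gamma\to\gamma'$ sending $(a,b)$ to $(\Interp{A}(f)a,\Interp{B}(f,\mathrm{id})b)$. Strongness amounts to an isomorphism $\Interp{\Gamma}\ctxe(\Interp{A}\ctxe\Interp{B}) \cong (\Interp{\Gamma}\ctxe\Interp{A})\ctxe\Interp{B}$ over $\Interp{\Gamma}$, which I would check by comparing objects $(\gamma,a,b)$ and morphisms $(f,g,h)$ on both sides. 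The one subtlety is that morphisms in $\ICat_\textnormal{lax}$ are lax transformations, so terms (sections of the comprehension) must be matched with lax data. I would handle this by using fullness of the comprehension and working directly with sections of $\pi$, which are functors $\gamma\mapsto(\gamma,s(\gamma))$, so that pairing and projections become ordinary functor manipulations.

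For the universes, given an inaccessible $\kappa_i$, I would take $\Interp{\UU_i}$ to be the constant functor at the category $\Cat_{\kappa_i}$ of $\kappa_i$-small categories and functors. The pointwise construction is strict, so $\El$ is just inclusion. Closure under $\Sigma$ and under the operations below follows because these operations preserve smallness, and the hierarchy $\UU_i:\UU_{i+1}$ uses $\Cat_{\kappa_i}\in\Cat_{\kappa_{i+1}}$.

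For $\op$ and $\Hom$, I would set $\Interp{A^\op}\defeq(-)^\op\circ\Interp{A}$, postcomposing with the covariant involution $(-)^\op:\Cat\to\Cat$. Since $(\mathcal{C}^\op)^\op=\mathcal{C}$ strictly, $\op$-\textsc{Inv} holds on the nose. For $\Hom$, the context $\Gamma\cxe a:A^\op\cxe b:A$ has objects $(\gamma,a,b)$ with $a,b\in\Interp{A}(\gamma)$, and a morphism $(\gamma,a,b)\to(\gamma',a',b')$ is $f:\gamma\to\gamma'$ together with $u:a'\to \Interp{A}(f)a$ and $v:\Interp{A}(f)b\to b'$. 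I would define $\Hom_A(\gamma,a,b)$ to be the \emph{discrete} category on the set $\Interp{A}(\gamma)(a,b)$, with action $\phi\mapsto v\circ\Interp{A}(f)\phi\circ u$. Functoriality uses strictness of $\Interp{A}$ and associativity. Because the values are discrete, $\Hom_A(a,b)^\op=\Hom_A(a,b)$ strictly, giving $\Hom$-$\op$. Stability of all of these under substitution (precomposition) is immediate.

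The main obstacle I expect is the bookkeeping around lax morphisms and strictness in the strong $\Sigma$ isomorphism and its substitution stability; everything else is pointwise. If that becomes unwieldy, I would appeal to the cited constructions of \cite{north_towards_2019} and to the 2-dimensional analysis in \cite{licata_2-dimensional_2011}.
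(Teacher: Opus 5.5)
Your proposal is correct and follows the same route as the paper's sketch: $\Sigma$-types via the fibrewise Grothendieck construction, $A^\op$ by postcomposition with $\op:\Cat\to\Cat$, and $\Hom_A(\gamma,a,b)$ as the hom-set $\hom_{A(\gamma)}(a,b)$ viewed as a discrete category. You also spell out details the paper leaves implicit, and they all check out: the functorial action $\phi\mapsto v\circ\Interp{A}(f)\phi\circ u$, strongness as associativity of the Grothendieck construction, universes as the constant functor at $\Cat_{\kappa_i}$, and discreteness as the reason $\Hom$-$\op$ holds on the nose.
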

\begin{proof}[Proof sketch]
  Let $\Gamma \vdash A\,\Ty$ and $\Gamma \cxe a : A \vdash B\,\Ty$ be types.
  The interpretation of $\Gamma \vdash \sum_AB\,\Ty$ is given by setting $\Interp{\sum_AB}(\gamma)$ to be the category that has as objects $(a,b) \in \coprod_{a\in A(\gamma)}B(a,\gamma)$, while a morphism $(a,b) \to (a',b')$ is a pair $(\alpha,\beta)$ with $\alpha:a \to a'$ in $A(\gamma)$ and $\beta:B(\id_\gamma, \alpha)b \to b'$.
  The interpretation of opposite types is given by postcomposition with the opposite functor $\op : \Cat \to \Cat$. Finally, the interpretation of $\Hom$-types is given by $\Interp{\Hom_A}(\gamma,x,y) \defeq \hom_{A(\gamma)}(x,y)$.
\end{proof}

\begin{remark}\label{remark:univs}
  For the concerned reader, the specific universes the category model validates (under the assumption above) are Coquand universes \cite{coquandPresheafModelType2013,gratzer_multimodal_2021}.
  However, for convenience, we will use Russell universes with implicit indices, understanding that these should be elaborated to the Coquand ones.
\end{remark}

\begin{notation}
  For a functor $F: C \to \Cat$, we will write $F^\op$ for $\op \circ F$. With this notation, for a type $A$, we have $\Interp{A^\op} \defeq \Interp{A}^\op$.
\end{notation}

The following well-known result, due to Grothendieck \cite{grothendieckRevetementsEtalesGroupe2003}, gives us a different perspective on this comprehension category.
\begin{proposition}[Straightening-unstraightening for opfibrations]\label{prop:straight-icat}
  Write $\Opfib^\textup{split}_\textup{lax}$ for the full fibred subcategory of $\Cat^\to$ spanned by the split opfibrations.
  The comprehension functor $\ctxe^*{} : \ICat_\textnormal{lax} \to \Cat^\to$ restricts to a fibred equivalence of categories $\ctxe^*\, : \ICat_\textnormal{lax} \xrightarrow{\simeq} \Opfib^\textup{split}_\textup{lax}$ over $\Cat$.
\end{proposition}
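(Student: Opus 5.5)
The plan is to verify directly, by hand, the three ingredients of a fibred equivalence. These are: that $\ctxe^*$ lands in split opfibrations, that it is fully faithful, and that it is essentially surjective. Finally I check that it preserves cartesian morphisms, so the equivalence is fibred over $\Cat$. Throughout I use the standard description of $A \ctxe F$. Its objects are pairs $(a,x)$ with $x \in F(a)$. A morphism $(a,x)\to(a',x')$ is a pair $(\alpha,\phi)$ with $\alpha : a \to a'$ and $\phi : F(\alpha)x \to x'$ in $F(a')$.

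\emph{Step 1: image.} For $\alpha : a \to a'$ and $(a,x)$, I take the chosen lift to be $(\alpha,\id_{F(\alpha)x})$. I check directly that it is cocartesian: a map $(\beta\alpha,\psi)$ factors uniquely through it via $(\beta,\psi)$. Strict functoriality of $F$, namely $F(\id)=\id$ and $F(\beta\alpha)=F(\beta)F(\alpha)$, makes the cleavage split. So $\pi : A\ctxe F \to A$ is an object of $\Opfib^\textup{split}_\textup{lax}$. Since this category is full in $\Cat^\to$, its morphisms over $f : B \to A$ are just functors $H : B\ctxe G \to A \ctxe F$ with $\pi H = f\pi$. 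No preservation of cocartesian arrows is required; this is exactly why lax transformations appear.

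\emph{Step 2: full faithfulness.} A morphism $(B,G)\to(A,F)$ in $\ICat_\textnormal{lax}$ over $f$ is a lax transformation $\theta : G \Rightarrow F\circ f$. Its data are components $\theta_b : G(b)\to F(fb)$ and 2-cells $\theta_\beta : F(f\beta)\theta_b \Rightarrow \theta_{b'}G(\beta)$, satisfying unit and composition coherences.

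From $\theta$ I build $H$ by
\[
H(b,y) = (fb,\theta_b y), \qquad H(\beta,\psi) = \bigl(f\beta,\ \theta_{b'}(\psi)\circ(\theta_\beta)_y\bigr).
\]
Conversely, from $H$ over $f$ I recover the data as follows.
\begin{itemize}
\item $\theta_b$ is the second component of $H$ restricted to the fibre over $b$; this is functorial because vertical maps go to vertical maps.
\item $(\theta_\beta)_y$ is the second component of $H$ applied to the cocartesian lift $(\beta,\id)$.
\end{itemize}
Naturality of $\theta_\beta$ and the lax coherence laws come from functoriality of $H$ together with the factorisation $(\beta,\psi) = (\id,\psi)\circ(\beta,\id)$. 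The two constructions are mutually inverse, again because every morphism factors uniquely as cocartesian followed by vertical.

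\emph{Step 3: essential surjectivity and fibredness.} Given a split opfibration $p : E \to A$, I define $F(a) = E_a$ and let $F(\alpha)$ be pushforward along the chosen lifts. Splitness makes $F$ a strict functor. The comparison $A\ctxe F \to E$ sends $(\alpha,\phi)$ to $\phi\circ\overline{\alpha}$, where $\overline{\alpha}$ is the chosen lift of $\alpha$. It is an isomorphism over $A$ by the unique cocartesian–vertical factorisation.

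For fibredness, a cartesian morphism in $\ICat_\textnormal{lax}$ over $f$ is $(f,\id) : (B,F\circ f)\to(A,F)$. Its image is the square relating $B\ctxe (F\circ f)$ and $A\ctxe F$. This square is a strict pullback $B\ctxe(Ff)\cong B\times_A(A\ctxe F)$, which one checks on objects and morphisms. Pullback squares are exactly the cartesian arrows of $\cod : \Cat^\to\to\Cat$, and $\Opfib^\textup{split}_\textup{lax}$ is closed under them.

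The step I expect to be the main obstacle is Step 2. The difficulty is the bookkeeping that matches the coherence axioms of lax transformations with functoriality of $H$, and in particular fixing the direction convention (lax versus oplax) so that $\theta_\beta$ goes from $F(f\beta)\theta_b$ to $\theta_{b'}G(\beta)$. I would settle the direction first by examining $H(\beta,\id)$, and then verify the composition coherence by computing $H((\beta',\id)\circ(\beta,\id))$ in two ways.
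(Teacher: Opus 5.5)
The paper gives no proof of this proposition; it cites it as Grothendieck's classical result and uses it as a black box. Your proposal is a correct, self-contained verification of it, and it differs from the paper only in actually carrying out the argument.

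The individual steps check out.
\begin{itemize}
\item The chosen lifts $(\alpha,\id_{F(\alpha)x})$ form a split cleavage, by strictness of $F$.
\item Your formulas $H(b,y)=(fb,\theta_b y)$ and $H(\beta,\psi)=(f\beta,\theta_{b'}(\psi)\circ(\theta_\beta)_y)$ give a bijection, with inverse read off from $H(\id_b,\psi)$ and $H(\beta,\id)$. The factorisation $(\beta,\psi)=(\id_{b'},\psi)\circ(\beta,\id)$ makes naturality of $\theta_\beta$ and the unit and composition laws correspond exactly to functoriality of $H$.
\item The comparison functor $A\ctxe F\to E$ is an isomorphism over $A$.
\item The square for $(f,\id)$ is a strict pullback.
\end{itemize}

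What your route buys over the citation is that it pins down a convention the paper leaves implicit. For the comprehension functor to be defined on morphisms at all, a lax transformation $\theta\colon G\Rightarrow F\circ f$ must have 2-cells $F(f\beta)\theta_b\Rightarrow\theta_{b'}G(\beta)$. It also explains why lax rather than pseudo transformations appear: morphisms in the full subcategory need not preserve opcartesian arrows.

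Two small points would tighten the argument.
\begin{itemize}
\item The cartesian arrows of $\ICat_\textnormal{lax}$ are all $(f,\theta)$ with $\theta$ invertible, not only $(f,\id)$. Checking the chosen lifts suffices, because every cartesian arrow is a chosen one composed with a vertical isomorphism, and functors preserve isomorphisms.
\item To pass from ``cartesian, fully faithful, and essentially surjective via vertical isomorphisms'' to a fibred equivalence, observe that a fully faithful functor over the base reflects cartesian arrows. Hence a fibrewise pseudo-inverse is automatically cartesian.
\end{itemize}

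Finally, the pullback $B\ctxe(F\circ f)\cong B\times_A(A\ctxe F)$ from your Step 3 already reduces morphisms over an arbitrary $f$ to morphisms over $\id_B$. You could therefore run Step 2 only over identities and save some of the bookkeeping.
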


Note that the fibration $\cod:\Opfib^{\text{split}}_\text{lax}\to \Cat$ is not split, and hence cannot be directly used to interpret MLTT.
Hence, we will carefully apply Proposition \ref{prop:straight-icat}, ensuring that we are not breaking the splitness of our constructions.

\subsection{The displayed category model of MLTT}
As we mentioned in the introduction, for a category $A$, we are interested in a factorization $A \xrightarrow{\iota} A^\to \xrightarrow{\langle\mathsf{dom},\mathsf{cod}\rangle} A \times A$ of the diagonal map $\delta :A \to A \times A$.
More specifically, we would like $\iota$ to be defined by $\iota(a) \defeq \mathsf{id}_a$.
In Hofmann \& Streicher's groupoid model \cite{hofmann_groupoid_1998}, the interpretation of the term $a:A \vdash \Refl_a : \mathsf{Id}(a,a)$ carries precisely this data, as the canonical projection $\pi:\Interp{a:A,b:A,\Id(a,b)} \to \Interp{a:A,b:A}$ is isomorphic to the functor $\langle\mathsf{dom},\mathsf{cod}\rangle:\Interp{A}^\to \to \Interp{A} \times \Interp{A}$.

Following this pattern, for the category model case, we are led to consider a type $a:A \cxe b:A \vdash H(a,b) \,\Ty$, such that $\pi:\Interp{a:A \cxe b : A \cxe H(a,b)} \to \Interp{a:A\cxe b : B}$ is isomorphic to $\langle\mathsf{dom},\mathsf{cod}\rangle:A^\to \to A \times A$.
However, this approach cannot work: by Proposition \ref{prop:straight-icat}, the projection $\pi$ is always an opfibration, but $\langle\mathsf{dom},\mathsf{cod}\rangle:A^\to \to A \times A$ is not.

We therefore introduce a whole new family of types, the \emph{displayed types}, which will allow us to consider our desired projection from the arrow category.
Indeed, while types in the category model correspond to opfibrations via Proposition \ref{prop:straight-icat}, displayed types will correspond to arbitrary functors over a base category.
First, we begin by recalling a well-known straightening-unstraightening result due to B\'enabou \cite{benabou2000distributors}.
\begin{proposition}[Straightening-unstraightening for displayed categories]\label{prop:dcat-model}
  Write $\Cat_\Prof$ for the bicategory that has as objects categories and as morphisms profunctors (not to be confused with the category $\Prof$ of Definition \ref{def:dprof}). Consider also the functor $[-,\Cat_\Prof]^{\textnormal{lax,normal}} : \Cat^\op \to \Cat$ that maps a category $A$ to the category of lax normal functors from $A$ to $\Cat_\Prof$.
  Write $\DCat := \Cat \ctxe^- [-,\Cat_\Prof]^{\textnormal{lax,normal}}$ for the category of \emph{displayed categories}.

  There exists a functor $(\ctxe^d)^*\, : \DCat \xrightarrow{\simeq} \Cat^\to$ which induces a fibred equivalence of categories over $\Cat$.
\end{proposition}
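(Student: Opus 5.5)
I will prove the statement by constructing the functor $(\ctxe^d)^*$ explicitly as a "displayed Grothendieck construction", building an inverse up to isomorphism, and then checking that both are compatible with reindexing over $\Cat$.

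\textbf{Object part.} Let $F : A \to \Cat_\Prof$ be a lax normal functor. For each object $a \in A$ it gives a category $F(a)$. For each morphism $f : a \to a'$ it gives a profunctor $F(f) : F(a)^\op \times F(a') \to \Set$. It also gives laxity cells
\[ \mu_{f,g} : F(g) \odot F(f) \Rightarrow F(gf), \]
and normality means $F(\id_a) = \hom_{F(a)}$. I define the category $A \ctxe^d F$ as follows.
\begin{itemize}
\item Its objects are pairs $(a,x)$ with $x \in F(a)$.
\item A morphism $(a,x) \to (a',x')$ is a pair $(f,\phi)$ with $f : a \to a'$ and $\phi \in F(f)(x,x')$.
\item Composition is induced by $\mu$: the coend defining $F(g)\odot F(f)$ receives the pair $(\phi,\psi)$, and $\mu$ sends it into $F(gf)$.
\item Identities come from normality.
\end{itemize}
The associativity and unit coherences of $\mu$ give associativity and unitality of composition. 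The projection $\pi : A\ctxe^d F \to A$ is the image of $(A,F)$ in $\Cat^\to$.

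\textbf{Morphism part.} A morphism $(u,\theta) : (B,G) \to (A,F)$ in $\DCat$ consists of a functor $u : B \to A$ and a lax transformation $\theta : G \Rightarrow F\circ u$. Such a $\theta$ gives functors $\theta_b : G(b) \to F(ub)$, and 2-cells
\[ G(g) \Rightarrow F(ug)(\theta_b-,\theta_{b'}-). \]
Together these define a functor $B \ctxe^d G \to A\ctxe^d F$ over $u$, giving a commuting square in $\Cat^\to$. Functoriality is then routine.

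\textbf{Inverse construction.} Given an arbitrary functor $p : E \to A$, I take the following data.
\begin{itemize}
\item $F(a)$ is the fibre $E_a$.
\item $F(f)(x,x')$ is the set of morphisms in $E$ from $x$ to $x'$ lying over $f$. This is a profunctor via pre- and postcomposition with fibre morphisms.
\item The laxity map $\mu$ is composition in $E$. It is well defined on the coend because composition is associative.
\item Normality holds because the morphisms over $\id_a$ are exactly the morphisms of the fibre.
\end{itemize}
A commuting square from $q : D \to B$ to $p$ gives a lax transformation in the same manner. This yields a functor $\Cat^\to \to \DCat$ over $\Cat$.

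\textbf{The equivalence.} The composite $\Cat^\to \to \DCat \to \Cat^\to$ sends $p$ to a functor isomorphic to $p$ over $A$, naturally in $p$; this is essentially the identity on data. For the other composite, starting from $F$, the fibre of $A \ctxe^d F$ over $a$ is $F(a)$. This uses normality, which is exactly why the restriction to normal functors is needed. The morphisms over $f$ are exactly $F(f)$, and the recovered $\mu$ is the original one. So this composite is isomorphic to the identity as well.

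\textbf{Fibredness.} Both functors commute strictly with the projections to $\Cat$, so the equivalence is one over $\Cat$. It remains to check that cartesian morphisms are preserved. In $\DCat$ these are the morphisms $(u,\id)$ with $F\circ u$ the reindexing. In $\Cat^\to$ they are the pullback squares. I will check that $B\ctxe^d (F\circ u) \cong B\times_A (A\ctxe^d F)$, which is immediate from the explicit description of both categories.

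\textbf{Main obstacle.} I expect the main difficulty to be the coend bookkeeping. Composition in $A\ctxe^d F$ is only well defined because $\mu$ is defined on the coend $F(g)\odot F(f)$, and the coherence axioms for a lax functor into the bicategory $\Cat_\Prof$ must be matched exactly with associativity of composition in $E$. I would handle this by first showing that a lax normal functor $A\to\Cat_\Prof$ is the same as a category enriched/indexed over $A$ in the sense of Bénabou's original formulation, and then quoting that correspondence.
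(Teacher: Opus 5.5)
Your proposal is correct and is essentially the paper's argument. The paper's sketch, following Loregian and B\'enabou, defines $(\ctxe^d)^*F$ as the strict pullback along $F$ of the forgetful map $U\colon(\Cat_\Prof)_*\to\Cat_\Prof$ from pointed categories and pointed profunctors. That pullback unpacks to exactly your category of pairs $(a,x)$ and $(f,\phi)$ with $\phi\in F(f)(x,x')$, composed through $\mu$. You additionally spell out the inverse (fibres, and morphisms over $f$) and the reindexing check, which the paper leaves implicit.

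One point you should state explicitly rather than tacitly: the morphisms of $[A,\Cat_\Prof]^{\textnormal{lax,normal}}$ must be lax transformations whose components are functors, which is what you use when you write $\theta_b\colon G(b)\to F(ub)$. With arbitrary profunctor components the equivalence already fails for $A=\mathbf{1}$, since morphisms $\mathbf{1}\to\mathbf{1}$ would then be sets rather than the single functor.
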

\begin{proof}[Proof sketch.]
  We follow Loregian \cite[Theorem 5.4.5]{loregianCoendCalculus2021}.
  Consider the forgetful functor $U: (\Cat_\Prof)_* \to \Cat_\Prof$, where $(\Cat_\Prof)_*$ is the bicategory that has as objects pointed categories and as morphisms pointed profunctors.
  The functor $(\ctxe^d)^*$ is defined on objects by mapping a lax normal functor $F : A \to \Cat_\Prof$ to the strict pullback $F^*U : A \times_{\Cat_\Prof} (\Cat_\Prof)_* \to A$.
\end{proof}

\begin{remark}
  The term ``displayed category'' is borrowed from Ahrens \& Lumsdaine work \cite{ahrensDisplayedCategories2019}, which studies functors to a base category in the context of type theory.
\end{remark}

\begin{definition}[Displayed category model]\label{def:dcat-model}
  The full split CC over $\Cat$ that has as its comprehension the unstraightening functor $(\ctxe^d)^*\, : \DCat \to \Cat^\to$ is called the \emph{displayed category model}.
\end{definition}

\begin{remark}
  The existence of the displayed category model is folklore. That is, it is known by researchers in the field, but, to our knowledge, there is no published work mentioning it.
\end{remark}

In order to use this second model, we now introduce a second copy of MLTT to our syntax.
For this copy, the typing judgement, term-in-type judgement, and context extension operation are written $\Gamma \vdash A\,\Ty_d$, $\Gamma \vdash a \dvar A$ and $\Gamma \cxe^d a: A$ respectively.
We call the types $\Gamma \vdash A\,\Ty_d$ the \emph{displayed types}.
Whenever we interpret these judgements, it will be using the displayed category model.

Now, we can relate the two copies of MLTT by the following proposition.

\begin{proposition}
  There is a strict family inclusion \cite{kovacs_generalized_2022} from the category model to the displayed category model.
  That is, the following two rules are validated.
  \begin{mathparpagebreakable}
    \inferrule*[right=$\Ty$\textnormal{\textsc{-Disp}}]{
      \Gamma \vdash A\,\Ty
    }{
      \Gamma \vdash A\,\Ty_d
    } \and
    \inferrule*[right=\textnormal{\textsc{Ctx-Ext-Disp}}]{
      \Gamma \vdash A\,\Ty
    }{
      \Gamma \cxe^d a : A \jeq \Gamma \cxe a : A
    }
  \end{mathparpagebreakable}
\end{proposition}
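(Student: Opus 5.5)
To validate the two rules, I will construct a morphism of full split comprehension categories over $\Cat$ from the category model $(\Cat,\ICat_\textnormal{lax},\ctxe^*)$ to the displayed category model $(\Cat,\DCat,(\ctxe^d)^*)$. It should be the identity on the base $\Cat$, split-cartesian (it commutes strictly with reindexing), and strictly commute with the comprehension functors. Under such a morphism, \textsc{$\Ty$-Disp} is interpreted by sending $\Interp{A}\in[\Interp{\Gamma},\Cat]_\textnormal{lax}$ to its image in $[\Interp{\Gamma},\Cat_\Prof]^{\textnormal{lax,normal}}$. \textsc{Ctx-Ext-Disp} then follows from the strict equality of comprehensions, $\Interp{\Gamma}\ctxe^d \iota(\Interp{A}) = \Interp{\Gamma}\ctxe\Interp{A}$ as objects of $\Cat^\to$. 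A family inclusion in the sense of Kovács additionally requires injectivity on types, so I will also check that the assignment is faithful on the relevant data.

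\textbf{The functor.} Define $\iota_A:[A,\Cat]_\textnormal{lax}\to[A,\Cat_\Prof]^{\textnormal{lax,normal}}$ by postcomposition with the representable embedding $\Cat\to\Cat_\Prof$, which sends a functor $f:X\to Y$ to the profunctor $Y(f-,-)$ (or its companion, matching the variance convention of Proposition \ref{prop:dcat-model}). This embedding is pseudofunctorial, so postcomposing a strict functor $F$ yields a normal pseudo, hence lax normal, functor. A lax natural transformation between strict functors maps to a transformation of lax functors, using the mate of each naturality $2$-cell. Strict naturality in $A$, i.e.\ $\iota_{A'}(F\circ u)=\iota_A(F)\circ u$, is immediate because both sides are postcomposition. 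Hence $\iota$ induces a functor $\ICat_\textnormal{lax}\to\DCat$ over $\Cat$ that preserves the split cleavage.

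\textbf{Comprehensions agree strictly.} This is the step I expect to be the main obstacle. I must show that the strict pullback $(\iota F)^*U$ from the proof sketch of Proposition \ref{prop:dcat-model} is literally the Grothendieck construction $A\ctxe F$, not merely isomorphic to it; otherwise \textsc{Ctx-Ext-Disp} holds only up to isomorphism. Objects of the pullback are pairs $(a,x)$ with $x\in F(a)$, matching those of $A\ctxe F$. A morphism over $f:a\to a'$ is an element of the profunctor $F(a')(F(f)x,x')$, i.e.\ a morphism $F(f)x\to x'$, which matches the morphisms of $A\ctxe F$. Composition agrees because the lax structure of $\iota F$ is the composition of representables. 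If the construction in Proposition \ref{prop:dcat-model} only yields an equality up to canonical isomorphism, I will first try to redefine the displayed-model comprehension on the image of $\iota$ so that it is strictly $\ctxe$; this remains a valid choice because only an equivalence was asserted and splitness is unaffected, given that reindexing in both models is precomposition.

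Finally, I will check that the mapping is preserved on morphisms (sections), so that terms of $A$ are terms of $A$ as a displayed type, which follows from the equality of comprehensions. I will also check injectivity on objects, which holds because a strict functor is recovered from its postcomposite by reading off the representing functors.
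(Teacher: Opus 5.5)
Your proposal is correct and takes the same route as the paper. \textsc{$\Ty$-Disp} is obtained by postcomposing with the embedding $\Cat \hookrightarrow \Cat_\Prof$, and \textsc{Ctx-Ext-Disp} by unwinding the displayed comprehension on such images; you add details the paper leaves implicit, namely strict stability under reindexing, that $(\iota F)^*U$ is $A \ctxe F$ on the nose, and injectivity.
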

\begin{proof}
  The rule \textsc{$\Ty$-Disp} is validated by the embedding $\ICat_{\text{lax}} \to \DCat$ given by postcomposing a functor $A:\Gamma \to \Cat$ with the embedding $\Cat \hookrightarrow \Cat_\Prof$.
  The other rule follows by definition of $\cxe^d$.
\end{proof}

\begin{remark}
  An equivalent perspective, mentioned in op.\ cit., is to view strict family inclusions as degenerate modalities.
  Indeed, the rule \textsc{$\Ty$-Disp} can be seen as a dependent right adjoint \cite{birkedalModalDependentType2020} to the identity functor on the category of contexts.
\end{remark}

In the next section, we introduce the operation of ``twisting'' a type.
Crucially, the twist of the $\Hom$ type will allow us to recover our desired functor $\langle\mathsf{dom},\mathsf{cod}\rangle:A^\to \to A \times A$.

\section{Twisted types and dependent 2-sided fibrations}\label{sec:d2sfibs}
It was first observed by Street \cite{street_fibrations_1980} that for a category $A$, the functor $\langle \mathsf{dom},\mathsf{cod}\rangle: A^\to \to A\times A$ can be obtained from applying a general procedure to the profunctor $\hom_A:A^\op \times A \to \Set$.
Specifically, he introduced the Grothendieck construction on profunctors.
This operation takes an arbitrary profunctor $C:A^\op \times B \to \Set$ and gives back a new category $G(C)$ together with a span $A \xleftarrow{l_C} G(C) \xrightarrow{r_C} B$.
Further, he also introduced the notion of a 2-sided fibration \cite{streetFibrationsYonedasLemma1974}, and showed that his Grothendieck construction gives a fibred equivalence of categories from the category $\Prof$ of profunctors to the category $\mathsf{2SFib}$ of 2-sided fibrations.

Restricting our attention to types $A$ and $B$ in the empty context, we could internalize Street's Grothendieck construction to our syntax by stating that given a type $a : A^\op \cxe b : B \vdash C\,\Ty$, we can obtain a new displayed type $a :A \cxe b : B \vdash G(C)\,\Ty_d$.
We could then interpret $G(C)$ as the displayed category corresponding to the functor $G(C) \xrightarrow{\langle l_C,r_C \rangle} A \times B$.
However, in many of our cases of interest, the type $B$ depends on $A$; e.g.\ see Example \ref{ex:dhom}.
Hence, we need to consider a dependent version of Street's definitions and results.

We begin with a generalization of the notion of profunctors.

\begin{definition}[Dependent profunctors]\label{def:dprof}
  Write $\ICat_{\textnormal{str}} \defeq \Cat \ctxe^- [-,\Cat]$ for the category of indexed categories and strict natural transformations and let $K : (\ICat_{\textnormal{str}})^\op \to \Cat$ be the functor given by $K(A,B) \defeq [A \ctxe B^\op, \Cat]$.
  The \emph{category of dependent profunctors} $\DProf$ is the category $\ICat_{\textnormal{str}} \ctxe^- K$, whose objects are triples $(A,B,C)$, with $C : A \ctxe B^\op \to \Cat$.
\end{definition}

Indeed, dependent profunctors generalize profunctors in the following sense.

\begin{lemma}\label{lemma:dprof-generalizes}
  There is a fibred embedding $\Prof \hookrightarrow \DProf$.
\end{lemma}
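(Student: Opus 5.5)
The plan is to make $\Prof$ precise as a fibred category and then embed it into $\DProf$ by sending a profunctor to the constant dependent profunctor it determines. By analogy with Definition~\ref{def:dprof}, take $\Prof$ to be the Grothendieck construction $(\Cat\times\Cat)\ctxe^- K'$, where $K'(A,B)\defeq[A\times B^\op,\Cat]$; it is fibred over $\Cat\times\Cat$ by reindexing along pairs of functors. (We use $\Cat$-valued profunctors to match $\DProf$. If $\Set$-valued ones are intended, we first postcompose with the discrete embedding $\Set\hookrightarrow\Cat$, which is fully faithful and commutes with reindexing, so nothing below changes.) The first ingredient is a functor $\Delta:\Cat\times\Cat\to\ICat_{\textnormal{str}}$ that sends $(A,B)$ to $(A,\underline{B})$, where $\underline{B}:A\to\Cat$ is the constant functor at $B$, and sends $(F,G)$ to $(F,\underline{G})$. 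Here $\underline{G}$ is the strict natural transformation $\underline{B}\to\underline{B'}\circ F$ whose components are all $G$. We check that $\Delta$ is fully faithful: a strict natural transformation between constant functors indexed over a functor $F$ is a family of functors $G_a:B\to B'$, natural in $a$, and naturality forces $G_a=G_{a'}$ whenever $a$ and $a'$ are joined by a morphism. This holds on connected components only, so either $\Delta$ is only faithful (which suffices for an embedding), or we define "embedding" as injective on objects and faithful. I will aim for the latter and state it explicitly.

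The second ingredient is a natural identification $A\ctxe\underline{B}^\op\cong A\times B^\op$. The covariant Grothendieck construction of a constant functor is the product, and this isomorphism is natural in $(A,B)\in\Cat\times\Cat$, i.e.\ it commutes on the nose with the functors induced by $(F,G)$. From it we get an isomorphism $K\circ\Delta^\op\cong K'$ of functors $(\Cat\times\Cat)^\op\to\Cat$. I would verify this on a morphism $(F,G)$: reindexing $C:A'\times B'^\op\to\Cat$ in $\Prof$ is $C\circ(F\times G^\op)$, and reindexing in $\DProf$ precomposes with $A\ctxe\underline{B}^\op\to A'\ctxe\underline{B'}^\op$, which under the identification is $F\times G^\op$. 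The two agree strictly, so the isomorphism is split-compatible.

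Finally, a general fact about Grothendieck constructions does the rest: a functor $\Delta:\mathcal{B}\to\mathcal{B}'$ together with a natural transformation $K'\Rightarrow K\circ\Delta^\op$ whose components are fully faithful and injective on objects induces a functor $\mathcal{B}\ctxe^-K'\to\mathcal{B}'\ctxe^-K$ over $\Delta$. This functor sends $(b,x)$ to $(\Delta b,\phi(x))$, it preserves cartesian morphisms (it maps cleavage to cleavage), and it is faithful and injective on objects whenever $\Delta$ and the $\phi_b$ are. Applying this gives the fibred embedding $\Prof\hookrightarrow\DProf$ over $\Delta$. I expect the main obstacle to be bookkeeping rather than mathematics: making "fibred" precise when the two bases differ, meaning fibred over the functor $\Delta$, and keeping the isomorphism $A\ctxe\underline{B}^\op\cong A\times B^\op$ strictly natural so that cartesian lifts go to cartesian lifts on the nose. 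I would settle this by choosing the Grothendieck construction's explicit object set $\coprod_a B$ and identifying it literally with the product.
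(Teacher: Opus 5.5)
Your construction is the paper's. Its one-line proof sends $C:A\times B^\op\to\Cat$ to $C$ precomposed with the canonical isomorphism $A\ctxe(\mathsf{const}_B)^\op\cong A\times B^\op$. Your base functor $\Delta$, the on-the-nose identification of the Grothendieck construction of a constant functor with the product, and the check that the cleavage is preserved spell out what that line leaves implicit. You are also right that $\Delta$ is not full: a strict natural transformation between constant functors need only be constant on each connected component of $A$. The paper does not address this.

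There is, however, one false step in your extra verification: $\Delta$ is neither faithful nor injective on objects when $A=\emptyset$. The constant functor $\emptyset\to\Cat$ at $B$ is the empty functor for every $B$, and your family $\underline{G}$ is empty for every $G$. Consequently, all objects $(\emptyset,B,C)$ of $\Prof$, with $C$ the unique functor $\emptyset\to\Cat$, go to the same object of $\DProf$. Morphisms $(F,G,\theta)$ out of them that differ only in $G$ are also identified. So the total functor $\Prof\to\DProf$ is faithful and injective on objects only once you restrict to nonempty $A$. What holds without restriction is that the functor is fibred over $\Delta$ and is an isomorphism $[A\times B^\op,\Cat]\cong[A\ctxe(\mathsf{const}_B)^\op,\Cat]$ on each fibre. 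You should say which of these two senses of ``embedding'' you are proving, and exclude the empty category if you want the stronger one. This is as much a defect of the statement as of your argument, since the paper's one-line proof has the same edge case.
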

\begin{proof}
  The embedding maps a profunctor $C:A \times B^\op \to \Cat$ to the dependent profunctor $C':A \ctxe (\mathsf{const}_B)^\op \xrightarrow{\cong} A \times B^\op \xrightarrow{ C} \Cat$.
\end{proof}

\begin{notation}
  Because of the previous lemma, we will conflate profunctors with their dependent version.
  Further, given two categories $A$ and $B$, we will write $A \ctxe B$ for $A \ctxe \mathsf{const}_B$.
\end{notation}

Note that dependent profunctors correspond to types $a : A \cxe b:B^\op \vdash C\,\Ty$ in the category model. More generally, they also correspond to types $\Gamma \cxe b : B^\op \vdash C\,\Ty$, by considering $\Gamma$ as one single category.
Indeed, this is why we chose the contravariant variable in the dependent profunctor to be the second one: with the opposite convention, a dependent profunctor would correspond to a type  $\Gamma^\op \cxe a : A \vdash B\,\Ty$.
While we will introduce the opposite of contexts later (Proposition \ref{prop:op-ctx}), they are not as well-behaved as opposite types.

\begin{example}\label{ex:dhom}
  For a type $\Gamma \vdash A\,\Ty$, the type $\Gamma \cxe b:A \cxe a:A^\op\vdash \Hom_A(a,b)\,\Ty$ induces the dependent profunctor $\Interp{\Hom_A(a,b)} : (\Interp{\Gamma} \ctxe \Interp{b:A}) \ctxe \Interp{a : A}^\op \to \Cat$.
\end{example}

Next, we generalize Street's notion of Grothendieck construction for profunctors.

\begin{definition}[Unstraightening of dependent profunctors]\label{def:unstraightening-dprof}
  Let $C:A \ctxe B^\op \to \Cat$ be a dependent profunctor.
  Define the functor $B \tye C : A \to \Cat$ by
  \[
    B \tye C \defeq \left\llbracket a:A \vdash \left(\sum_{b:B^\op}C^\op\right)^\op \,\Ty \right\rrbracket.
  \]
  The \emph{unstraightening} of $C$ is the category $A \ctxe (B \tye C)$.
\end{definition}

\begin{notation}
  By convention, $\tye$ binds more tightly than $\ctxe$.
\end{notation}

Concretely, the objects of $A \ctxe B \tye C$ are triples $(a,b,c)$ with $a\in A$, $b\in B(a)$ and $c\in C(a,b)$. A morphism $(a,b,c) \to (a',b',c')$ is a triple $(\alpha,\beta,\theta)$ with $\alpha:a \to a'$, $\beta: B(\alpha)(b) \to b'$ and $\theta: C(\alpha,\id_{B(\alpha)b})c \to C(\id_{a},\beta)c'$.
From this description, it's clear that there is a canonical projection $\pi_2:A \ctxe B \tye C \to A \ctxe B$ given by forgetting the last component.

\begin{proposition}[Twisted types]\label{prop:twists}
  The following rules are validated.
  \begin{mathparpagebreakable}
    \inferrule*[right=\textnormal{\textsc{Twist}}]{
      \Gamma \cxe b:B^\op \vdash C\,\Ty
    }{
      \Gamma \cxe b:B \vdash \mathsf{Tw}_b(C)\,\Ty_d
    } \and
    \inferrule*[right=\textnormal{\textsc{Twist-Weak}}]{
      \Gamma \vdash B\,\Ty \and \Gamma \vdash C\,\Ty
    }{
      \Gamma \cxe b:B \vdash \mathsf{Tw}_b(C) \jeq C\,\Ty_d
    }
  \end{mathparpagebreakable}
\end{proposition}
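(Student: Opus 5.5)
To validate \textsc{Twist}, we interpret $\mathsf{Tw}_b(C)$ as an object of the fibre $\DCat_{\Interp{\Gamma}\ctxe\Interp{B}}$. Write $\Gamma$, $B$, $C$ for the interpretations, so that $C:\Gamma\ctxe B^\op\to\Cat$ is a dependent profunctor in the sense of Definition \ref{def:dprof}. By Definition \ref{def:unstraightening-dprof} we have the category $\Gamma\ctxe B\tye C$ together with the projection $\pi_2:\Gamma\ctxe B\tye C\to\Gamma\ctxe B$. This projection is an object of $\Cat^\to$ over $\Gamma\ctxe B$. We then apply the inverse of the equivalence $(\ctxe^d)^*$ of Proposition \ref{prop:dcat-model} to obtain a lax normal functor $\Gamma\ctxe B\to\Cat_\Prof$, and declare this to be $\Interp{\mathsf{Tw}_b(C)}$. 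Concretely, this functor sends $(\gamma,b)$ to the category $C(\gamma,b)$. It sends a morphism $(\alpha,\beta)$ to the profunctor $(c,c')\mapsto \hom(C(\alpha,\id)c,\,C(\id,\beta)c')$, which we can write down directly and so avoid choosing an inverse of the equivalence.

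The real content is stability under substitution, since the model is split. Given $\sigma:\Delta\to\Gamma$, we must check the strict equation
\[
  \Interp{\mathsf{Tw}_b(C)}\circ(\sigma\ctxe B) = \Interp{\mathsf{Tw}_b(C[\sigma])}.
\]
Here $C[\sigma]=C\circ(\sigma\ctxe B^\op)$, with $\sigma\ctxe B^\op$ the functor $\Delta\ctxe (B\sigma)^\op\to\Gamma\ctxe B^\op$. Using the explicit formula above, both sides send objects to $C(\sigma\delta,b)$. On a morphism $(\alpha,\beta)$, both sides give the hom-profunctor built from $C(\sigma\alpha,\id)$ and $C(\id,\beta)$. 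So the equation holds on the nose because the formula is functorial in the base. A more conceptual argument is also available: Definition \ref{def:unstraightening-dprof} builds $B\tye C$ from $\Sigma$ and $\op$, and both of these are already stable under substitution in the category model by Proposition \ref{prop:Sigma}. The pullback of $\pi_2$ along $\sigma\ctxe B$ is then literally $\pi_2$ for $C[\sigma]$.

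I expect the main obstacle to be this strictness, and specifically making sure that the lax normal structure on the displayed-category side is defined canonically rather than through a choice of equivalence inverse. For this reason I would commit to the explicit formula rather than to an abstract inversion.

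For \textsc{Twist-Weak}, assume $C$ is weakened from $\Gamma$. Then $C(\gamma,b)=C(\gamma)$, and $C$ acts on a morphism $(\alpha,\beta)$ by $C(\alpha)$ alone. The profunctor assigned to $(\alpha,\beta)$ is therefore $\hom(C(\alpha)c,c')$, which is exactly the representable profunctor used by the embedding $\Cat\hookrightarrow\Cat_\Prof$ in \textsc{$\Ty$-Disp}, applied to the weakening of $C$ along $\Gamma\ctxe B\to\Gamma$. The two interpretations agree strictly provided that this embedding uses the same convention, namely the covariant representable $(c,c')\mapsto\hom(F c,c')$. If the conventions differ, I would adjust the formula for $\mathsf{Tw}$ so that they match.
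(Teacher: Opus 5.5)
Your proposal is correct and follows the paper's own argument. Like the paper, you interpret $\mathsf{Tw}_b(C)$ as the displayed category corresponding to $\pi_2:\Interp{\Gamma}\ctxe\Interp{B}\tye\Interp{C}\to\Interp{\Gamma}\ctxe\Interp{B}$ and obtain \textsc{Twist-Weak} by unwinding definitions. Your explicit fibre/profunctor formula and strict substitution check supply details the paper leaves implicit, and your convention hedge is unnecessary: \textsc{Ctx-Ext-Disp} already forces $\Cat\hookrightarrow\Cat_\Prof$ to send $F$ to $\hom(F-,-)$, which is exactly what your formula gives for weakened $C$.
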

\begin{proof}
  The interpretation of $\mathsf{Tw}_b(C)$ is given by the displayed category corresponding to the canonical projection $\pi_2:A \ctxe B \tye C \to A \ctxe B$.
  The equality required by \textsc{Twist-Weak} readily follows.
\end{proof}

\begin{notation}
  We write $C[\bar b/b]$ for the type $\mathsf{Tw}_b(C)$.
  Notice how this notation makes \textsc{Twist-Weak} automatic: If $C$ does not depend on $b$ then $C[\bar b/b]$ is just $C$.
  The $C[\bar b/b]$ notation was inspired by Laretto et al.\ \cite{larettoDiDirectedFirstOrder2026}.
\end{notation}


In the following, it will be important to keep track of which displayed types come from directly applying the \textsc{Twist} rule. Hence, we introduce the following notation.

\begin{notation}\label{not:tscxe}
  For a type $\Gamma \cxe b : B^\op \vdash C \,\Ty$, we write $\Gamma \cxe b : B \tscxe c : C[\bar b/b]$ for the context $\Gamma \cxe b : B \cxe^d c : C[\bar b/b]$.
  In particular, uses of $\tscxe$ will imply that no variable in the context $\Gamma \cxe b : B$ has been (non-trivially) substituted. With this notation we have $\Interp{\Gamma \cxe b : B \tscxe c : C[\bar b/b]} \cong \Interp{\Gamma} \ctxe \Interp{B} \tye \Interp{C}$.
\end{notation}

We now return to our original motivation, and recover our desired projection $A^\to \to A\times A$.

\begin{example}\label{ex:sctxe-hom}
  Let $A$ be a category. By definition, the category $\Interp{b:A \cxe a:A \tscxe \hom_A(\bar{a},b)} \defeq A \ctxe A \tye \overline{\hom}_A$ has as objects triples $(b,a,f)$, with $f:a \to b$ in $A$, while a morphism $(b,a,f) \to (b',a',f')$ is a triple $(\beta, \alpha, e)$ with $\beta:b \to b'$, $\alpha: a \to a'$, and $e:\beta \circ f = f' \circ \alpha$. That is, this is the arrow category $A^\to$. The projection $\Interp{b:A \cxe a:A \tscxe \hom_A(\bar{a},b)} \to \Interp{b:A \cxe a:A}$ is then given by $\langle \mathsf{cod}, \mathsf{dom} \rangle:A^\to \to A \times A$.

  By contrast, the category $\Interp{b:A \cxe a:A^\op \cxe \hom_A(a,b)}$ has the same objects, but a morphism $(b,a,f)\to (b',a',f')$ is a triple $(\beta,\alpha,e)$ with $\beta:b \to b'$, $\alpha: a'\to a$ and $e:\beta \circ f \circ \alpha = f'$. That is, it is the twisted arrow category $A^{\mathsf{tw}}$. Hence, applying the \textsc{Twist} rule to the type $\Hom_A(a,b)$ appears to have made a ``twist''.
\end{example}



The generalization of a 2-sided fibration is given as follows.

\begin{definition}[D2SFib]\label{def:d2sfib}
  Consider a category $A$ and a functor $B:A \to \Cat$.
  A (split) \emph{dependent 2-sided fibration (D2SFib)} from $A$ to $B$ consists of the following data:
  \begin{enumerate}
    \item A category $C$;
    \item A functor $q:C \to A \ctxe B$;
    \item Data specifying that $q$ is a  ``local fibration''. That is, for each $a \in A$ the functor $q_{|a}:C_a \to B(a)$ given by postcomposing $q_a:C_a \to (A \ctxe B)_a$ with the isomorphisms $(A \ctxe B)_a \cong B(a)$ is a (split) fibration, so that each morphism $\beta: b \to qe$ in $B(a)$ has a specified cartesian lift $\beta^*e \to e$;
    \item Data specifying that $p \defeq \pi_A \circ q : C \to A$ is a (split) opfibration, so that each morphism $\alpha: pe \to a$ has a specified opcartesian lift $e \to \alpha_!e$;
  \end{enumerate}
  such that:
  \begin{enumerate}
      \setcounter{enumi}{4}
    \item $q$ is a cartesian functor of opfibrations from $p : C \to A$ to $\pi:A \ctxe B \to A$; and
    \item For each $\alpha: pe \to a$ in $A$ and $\beta: b \to qe$ in $B(p(e))$, the canonical morphism $\alpha_! \beta^* e \to (B(\alpha)\beta)^*\alpha_!e$ given by any of the universal properties is an identity.
  \end{enumerate}
\end{definition}

For simplicity, as they are our main objects of interest in this paper, we will focus only on split D2SFibs.
Therefore, we will leave the qualifier ``split'' implicit.
We will also identify a D2SFib as above with its first component, i.e.\ the functor $q:C \to A \ctxe B$.



\begin{definition}[Cartesian functors between D2SFibs]\label{def:d2sfib-1cells}
  A \emph{cartesian functor} between two D2SFibs $q:C \to A \ctxe B$ and $q' : C' \to A \ctxe B$ is a functor $\varphi:C \to C'$ satisfying the following two conditions.
  \begin{enumerate}
    \item It is a cartesian functor of opfibrations from $\pi_A \circ q: C \to A$ to $\pi_A \circ q': C' \to A$.
    \item For each $a \in A$, the restriction $\varphi_{a}:C_a \to C'_a$ is a cartesian functor of fibrations from $q_{|a}:C_a \to B(a)$ to $q'_{|a}:C'_a \to B(a)$
  \end{enumerate}
  Note that $q' \circ \varphi = q$ follows from these conditions.
\end{definition}

With this notion of cartesian functors, D2SFibs from $A$ to $B$ form a category $\DTSFib(A,B)$.
In fact, these assemble to a fibred category, as the following routine lemma shows.

\begin{lemma}
  Given a D2SFib $q:C \to A \ctxe B$ and a morphism $(F,G):(A',B')\to(A,B)$ in $\ICat_\textnormal{str}$, the pullback of $q$ along the cartesian functor ${\ctxe}^*(F,G):A' \ctxe B' \to A \ctxe B$ induces a D2SFib from $A'$ to $B'$.
  Further, this operation is pseudofunctorial on $\ICat_\textnormal{str}$, and hence it induces a non-split fibred category $\DTSFib$.
\end{lemma}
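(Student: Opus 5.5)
We first pin down the functor ${\ctxe}^*(F,G)$. A morphism $(F,G):(A',B')\to(A,B)$ in $\ICat_\textnormal{str}$ consists of a functor $F:A'\to A$ and a strict natural transformation $G:B'\Rightarrow B\circ F$. The induced functor $A'\ctxe B'\to A\ctxe B$ sends $(a',b')\mapsto(Fa',G_{a'}b')$. It is a cartesian functor of split opfibrations over $F$, i.e.\ it preserves the chosen opcartesian lifts $(\alpha',\id)$ on the nose, by strict naturality of $G$.

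Given $q:C\to A\ctxe B$, form the strict pullback
\[
C' \defeq (A'\ctxe B')\times_{A\ctxe B} C,
\]
with $q':C'\to A'\ctxe B'$ the projection. Objects of $C'$ are pairs $((a',b'),e)$ with $qe=(Fa',G_{a'}b')$. We then verify conditions (3)--(6) of Definition \ref{def:d2sfib} componentwise, working in order.

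\emph{Opfibration.} The functor $p'=\pi_{A'}\circ q'$ is the pullback of $p$ along $F$, composed with the identification $C'\cong A'\times_A C'$. For $\alpha':a'\to a''$ we choose the lift
\[
\bigl((\alpha',\id),\,(F\alpha')_!e\bigr).
\]
This is well defined because $q$ is cartesian over $A$ (condition 5). Hence $q((F\alpha')_!e)$ is the chosen opcartesian lift of $F\alpha'$ at $qe$, and applying ${\ctxe}^*(F,G)$ to $(\alpha',\id)$ gives exactly that lift. Opcartesianness in $C'$ follows from the universal property of the pullback together with opcartesianness in $C$ and in $A'\ctxe B'$. Splitness is inherited. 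Condition (5) for $q'$ holds by construction.

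\emph{Local fibration.} The fibre $C'_{a'}$ is the pullback of $q_{|Fa'}:C_{Fa'}\to B(Fa')$ along $G_{a'}:B'(a')\to B(Fa')$. Pullbacks of split fibrations are split fibrations, with chosen lift $\bigl(\beta',(G_{a'}\beta')^*e\bigr)$.

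\emph{Condition (6).} The canonical comparison in $C'$ has the form $(\id,\text{comparison in }C)$. This uses $B(F\alpha')\circ G_{a'} = G_{a''}\circ B'(\alpha')$, which is again strict naturality of $G$. The second component is an identity by condition (6) for $q$.

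For pseudofunctoriality, define the reindexing functor $\DTSFib(A,B)\to\DTSFib(A',B')$ on cartesian functors by $\mathrm{id}\times_{A\ctxe B}\varphi$. It preserves the chosen lifts because those lifts are built from the chosen lifts of $C$, which $\varphi$ preserves. For composable morphisms we have ${\ctxe}^*$ functorial on $\ICat_\textnormal{str}$. The canonical comparison isomorphisms between iterated strict pullbacks are cartesian functors, and they satisfy the coherence axioms by the usual pasting argument for pullbacks. This gives a pseudofunctor $\ICat_\textnormal{str}^\op\to\Cat$, and the Grothendieck construction then yields the non-split fibred category $\DTSFib$.

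The main obstacle is bookkeeping. One must check that the chosen lifts in $C'$ really live in the pullback and are preserved strictly, which is exactly where the strictness of $G$ and condition (5) are used. Everything else is standard stability of (op)fibrations under pullback.
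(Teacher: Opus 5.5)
Your argument is correct and is the routine verification that the paper leaves implicit (the lemma is stated without proof): you build the chosen lifts in the strict pullback from those of $C$ and $A'\ctxe B'$, use condition (5) and strict naturality of $G$ to ensure they land in the pullback, and get pseudofunctoriality from the canonical pullback-pasting isomorphisms. Two minor points. First, $p'$ is not the pullback of $p$ along $F$; in general $C'\not\cong A'\times_A C$, because $C'$ also records a $b'$ with $G_{a'}b'=q_{|Fa'}e$. Nothing depends on that remark, though. Second, your direct opcartesianness check should also invoke that $q$ sends the chosen lift to the opcartesian morphism $(F\alpha',\mathrm{id})$ of $\pi:A\ctxe B\to A$; that is what lets you cancel it and match the factorization in $C$ with the one in $A'\ctxe B'$.
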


We are now ready to state our generalization of Street's equivalence between profunctors and 2-sided fibrations to the dependent case.

\begin{proposition}[Straightening-unstraightening for D2SFibs]\label{prop:straight-d2sfibs}
  Unstraightening of dependent profunctors (Definition \ref{def:unstraightening-dprof}) extends to a fibred equivalence of categories $\tye^* : \mathsf{DProf} \xrightarrow{\simeq} \DTSFib$ over $\ICat_{\textnormal{str}}$.
\end{proposition}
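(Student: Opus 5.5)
We prove the statement in three steps. First we show that the unstraightening of Definition \ref{def:unstraightening-dprof} lands in $\DTSFib$. Then we make it a fibred functor over $\ICat_{\textnormal{str}}$. Finally we show it is fibrewise an equivalence.

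For the first step, take a dependent profunctor $C : A \ctxe B^\op \to \Cat$. We check that $\pi_2 : A \ctxe B \tye C \to A \ctxe B$ satisfies the conditions of Definition \ref{def:d2sfib}, using the explicit description of objects $(a,b,c)$ and morphisms $(\alpha,\beta,\theta)$.
\begin{itemize}
\item The opcartesian lift of $\alpha : a \to a'$ at $(a,b,c)$ is $(\alpha, \id, \id)$, with codomain $(a', B(\alpha)b, C(\alpha,\id)c)$. Splitness follows from the strict functoriality of $B$ and $C$.
\item In the fibre over $a$, the cartesian lift of $\beta : b' \to b$ at $(a,b,c)$ is $(\id, \beta, \id)$, with domain $(a,b',C(\id,\beta)c)$.
\item Condition (5) holds because $\pi_2$ sends $(\alpha,\id,\id)$ to the chosen opcartesian lift $(\alpha,\id)$ in $A \ctxe B$.
\item Condition (6) reduces to the equation $C(\alpha,\id)C(\id,\beta) = C(\id,B(\alpha)\beta)C(\alpha,\id)$. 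This is functoriality of $C$ applied to the two factorizations of the morphism $(\alpha,\beta)$ in $A \ctxe B^\op$, and it holds on the nose.
\end{itemize}

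For the second step, a morphism of dependent profunctors over a fixed $(A,B)$ is a natural transformation $C \Rightarrow C'$, and it acts componentwise on the third coordinate. This visibly preserves the chosen lifts, so it is a cartesian functor in the sense of Definition \ref{def:d2sfib-1cells}. For reindexing along $(F,G) : (A',B') \to (A,B)$ we must show that $A' \ctxe B' \tye (C \circ (F,G))$ is the pullback of $\pi_2$ along ${\ctxe}^*(F,G)$. The objects of the pullback are triples $(a',b',c)$ with $c \in C(Fa',G b')$. The morphisms agree because $G$ is strictly natural. Hence $\tye^*$ preserves cartesian morphisms, with canonical comparison isomorphisms.

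For the third step, we show that $\tye^*$ restricted to the fibre over $(A,B)$ is an equivalence onto $\DTSFib(A,B)$; a fibred functor that is a fibrewise equivalence is a fibred equivalence.

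Given a D2SFib $q : E \to A \ctxe B$, define its straightening $S(q)(a,b)$ to be the fibre of $q$ over $(a,b)$. This is a category whose morphisms are those $\theta$ with $q\theta = \id$. The action on a morphism $(\alpha,\beta) : (a,b) \to (a',b')$ of $A \ctxe B^\op$, where $\beta : b' \to B(\alpha)b$, is the composite
\[
\beta^* \circ \alpha_! : E_{(a,b)} \to E_{(a',B(\alpha)b)} \to E_{(a',b')}.
\]
Splitness of the opfibration and of the local fibrations gives strict functoriality of $\alpha_!$ and $\beta^*$ separately. Showing that these composites assemble into a strict functor requires commuting an $\alpha_!$ past a $\beta^*$. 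Condition (6) provides exactly the equality $\alpha_!\beta^* = (B(\alpha)\beta)^*\alpha_!$ needed for this. Condition (5) ensures that $\alpha_!$ lands in the correct fibre.

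A cartesian functor $\varphi$ restricts to fibres and commutes with $\alpha_!$ and $\beta^*$ on the nose, so $S$ is a functor.

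Since the construction of $\beta^*\alpha_!$ for $\tye^*C$ just returns $C(\alpha,\beta)$, we get $S \circ \tye^* = \id$ directly.

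For the other composite we need a natural isomorphism $A \ctxe B \tye S(q) \cong E$ over $A \ctxe B$. Send $(a,b,e)$ to $e$. A morphism $(\alpha,\beta,\theta)$ goes to the composite
\[
e \to \alpha_! e \xrightarrow{\theta} \beta^* e' \to e',
\]
using the opcartesian lift, then $\theta$, then the cartesian lift. The inverse factors an arbitrary morphism $f : e \to e'$ of $E$ with $qf = (\alpha,\beta)$ uniquely through these two lifts. It first factors through the opcartesian lift of $\alpha$. It then uses the local fibration structure in the fibre over $a'$ to factor the vertical remainder as a morphism over $\beta$ followed by the cartesian lift.

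The main obstacle is verifying that this factorization is functorial, i.e.\ that composition in $E$ matches the composition in $A \ctxe B \tye S(q)$. The latter composition involves transporting $\theta$ along $C(\alpha,\id)$ and $C(\id,\beta)$. The plan is to reduce the check to the interchange law $\alpha_!\beta^* = (B(\alpha)\beta)^*\alpha_!$ together with the uniqueness clauses of the universal properties, following Street's argument for 2-sided fibrations \cite{streetFibrationsYonedasLemma1974} with $A \times B$ replaced by $A \ctxe B$.
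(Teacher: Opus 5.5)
Your proposal is correct and follows essentially the same route as the paper. Both arguments verify that $\pi_2$ is a D2SFib, straighten by taking fibres with action $\beta^*\circ\alpha_!$, and identify $E$ with $A \ctxe B \tye S(q)$ by factoring each morphism as an opcartesian lift, a vertical map and a cartesian lift; the paper isolates the morphism-level interchange $\alpha_!\beta^* = (B(\alpha)\beta)^*\alpha_!$ as a separate lemma, derived from condition (6) via uniqueness of lifts, whereas you use it implicitly. Your treatment of the fibred part is more explicit than the paper's, which only notes that the constructions are natural in $A$ and $B$: you compute the pullback along ${\ctxe}^*(F,G)$ directly and invoke the fact that a fibred functor which is a fibrewise equivalence is a fibred equivalence.
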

\begin{proof}[Proof sketch.]
  The straightening functor $S:\mathsf{D2SFib}(A,B) \to [A \ctxe B^\op,\Cat]$ is defined on objects by mapping a D2SFib $q:C \to A \ctxe B$ to the functor that maps an object $(a,b) \in A \ctxe B^\op$ to the fibre in $C$ above it.
  A complete proof is given in Section \ref{sec:staightd2sfibs} of the appendix.
\end{proof}

\begin{remark}
  It can be verified that a span $A \xleftarrow{p} C \xrightarrow{q} B$ is a 2-sided fibration, in Street's sense \cite{streetFibrationsYonedasLemma1974}, if and only if the induced functor $C \xrightarrow{\langle p,q\rangle} A \times B \xrightarrow{\cong} A \ctxe B$ is a D2SFib.
  Our work then generalizes Street's in the sense that we have the following commutative diagram.
  \[
    \begin{tikzcd}[cramped,sep=.7em]
      \Prof && \mathsf{2SFib} \\
      &&& \DProf && \DTSFib \\
      & {\Cat \times \Cat} \\
      &&&& \ICat_{\textnormal{str}}
      \arrow["\simeq", from=1-1, to=1-3]
      \arrow[hook', from=1-1, to=2-4]
      \arrow[from=1-1, to=3-2]
      \arrow[hook, from=1-3, to=2-6]
      \arrow[from=1-3, to=3-2]
      \arrow["\simeq", from=2-4, to=2-6]
      \arrow[from=2-4, to=4-5]
      \arrow[from=2-6, to=4-5]
      \arrow[hook, from=3-2, to=4-5]
    \end{tikzcd}
  \]
\end{remark}



By Proposition \ref{prop:straight-d2sfibs}, the canonical projection $\pi_2:\Interp{\Gamma \cxe b:B \tscxe c : C[\bar b/b]} \to \Interp{\Gamma \cxe b : B}$ is a D2SFib.
These D2SFibs will have an important role in TTT, as they are related the $\Hom$, $\Sigma$ and $\Pi$-types, as the next section will show.





\section{Twisted Type Theory}\label{sec:TTT}
In this section we introduce Twisted Type Theory.
We first introduce our new rules for $\Hom$-types. Afterwards, we introduce constructions already considered in the directed type theory literature, and relate them to our $\textsc{Twist}$ rule.
\subsection{\texorpdfstring{$\Hom$}{Hom}-types revisited}\label{sec:dispcat}

We begin by observing that we can now substitute $\Hom$-types along diagonals. That is, we can make the following derivation.
\begin{example}
  Let $A$ be a type in context $\Gamma$. Then $\Gamma \cxe a :A \vdash \Hom_A(\bar a, a)\,\Ty_d$ is derivable. Indeed, we have the following derivation.
  {
    \small
    \begin{mathparpagebreakable}
      \inferrule*[Right=$\Hom$-Form]{
        \Gamma \vdash A\,\Ty
      }{
        \inferrule*[Right=Twist]{
          \Gamma \cxe b:A \cxe a :A^\op \vdash \Hom_A(a, b)\,\Ty
        }{
          \inferrule*[Right=Subst]{
            \Gamma \cxe b:A \cxe a :A \vdash \Hom_A(\bar a, b)\,\Ty_d
          }{
            \Gamma \cxe a :A \vdash \Hom_A(\bar a, a)\,\Ty_d
          }
        }
      }
    \end{mathparpagebreakable}
  }
\end{example}

Our desired $\Hom$-introduction rule from desideratum (3) in the introduction is now given by the following proposition.
Note that it is analogous to the $\Id$-introduction rule of MLTT, which is given by a term $\Gamma, x:A \vdash \Refl_a:\Id(a,a)$.

\begin{proposition}[$\Hom$\textsc{-Intro}]\label{prop:hom-intro}
  The category model validates the following rule.
  {
    \small
    \begin{mathparpagebreakable}
      \inferrule*[Right=$\Hom$\textnormal{\textsc{-Intro}}]{
        \Gamma \vdash A\,\Ty
      }{
        \Gamma \cxe a :A \vdash \Refl_a \dvar \Hom_A(\bar a, a)
      }
    \end{mathparpagebreakable}
  }
\end{proposition}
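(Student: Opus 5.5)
To validate the rule we must give, for each interpretation $\Interp{A}:\Interp{\Gamma}\to\Cat$, a section of the comprehension of the displayed type $\Hom_A(\bar a, a)$ over $\Interp{\Gamma\cxe a:A} = \Interp{\Gamma}\ctxe\Interp{A}$. We must also check that this choice is stable under substitution in $\Gamma$.

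First we compute the comprehension. By Proposition \ref{prop:twists} and Notation \ref{not:tscxe}, the context $\Gamma\cxe b:A\tscxe a:A\cxe^d h:\Hom_A(\bar a,b)$ is interpreted as the unstraightening $\Interp{\Gamma}\ctxe\Interp{A}\tye\overline{\hom}$, where $\overline{\hom}$ denotes the dependent profunctor of Example \ref{ex:dhom}. Its objects are $(\gamma,b,a,f)$ with $f:a\to b$ in $\Interp{A}(\gamma)$. Substituting $b:=a$ in the displayed category model is strict pullback along the diagonal functor
\[\delta:\Interp{\Gamma}\ctxe\Interp{A}\to \Interp{\Gamma}\ctxe\Interp{A}\ctxe\Interp{A},\]
which sends $(\gamma,a)\mapsto(\gamma,a,a)$ and $(g,\alpha)\mapsto(g,\alpha,\alpha)$. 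Concretely, the comprehension of $\Hom_A(\bar a,a)$ is the category $E$ with objects $(\gamma,a,f)$, where $f:a\to a$ is an endomorphism in $\Interp{A}(\gamma)$. Unwinding the morphism description after Definition \ref{def:unstraightening-dprof}, as in Example \ref{ex:sctxe-hom} but fibred over $\Gamma$, a morphism $(\gamma,a,f)\to(\gamma',a',f')$ is a pair $(g,\alpha)$ with $g:\gamma\to\gamma'$ and $\alpha:\Interp{A}(g)a\to a'$, subject to the square $\alpha\circ\Interp{A}(g)(f)=f'\circ\alpha$. The equation is a property rather than extra data, because the hom-categories are discrete.

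Next we define the section $\Interp{\Refl}:\Interp{\Gamma}\ctxe\Interp{A}\to E$ by $(\gamma,a)\mapsto(\gamma,a,\id_a)$ and $(g,\alpha)\mapsto(g,\alpha)$. We have to check that this is well defined. Functoriality of $\Interp{A}(g)$ gives $\Interp{A}(g)(\id_a)=\id$, so the required square reads $\alpha\circ\id=\id\circ\alpha$, which holds. Preservation of identities and composition is immediate, since the functor is the identity on the first two components. Composing with the projection gives back the identity, so it is a section, and it realises the factorisation $\iota$ of desideratum (3).

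The step I expect to be the main obstacle is stability under substitution. For each $\sigma:\Delta\to\Gamma$ we need $\Interp{\Refl}[\sigma]=\Interp{\Refl}$ strictly. This should follow because the construction is defined uniformly and pointwise from $\Interp{A}\circ\sigma$. Reindexing the twist is strict pullback along $\sigma\ctxe\Interp{A}$, and the pullback of the section $(\gamma,a)\mapsto(\gamma,a,\id_a)$ is again the identity-picking section for $\Interp{A}\circ\sigma$. The twist and its reindexing are stable only because the model is split, so I would verify here that the twist commutes with substitution on the nose, including its interaction with the diagonal substitution. I would do this by comparing the explicit object and morphism descriptions on both sides, rather than passing through the non-split equivalence of Proposition \ref{prop:straight-d2sfibs}.
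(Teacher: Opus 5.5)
Your proof is correct and is essentially the paper's argument. The paper identifies the comprehension of the twisted $\Hom$-type with $\Interp{\Gamma}\ctxe\Interp{A}^\to$ and gives $\Interp{\Refl}(\gamma,a)\defeq(\gamma,\id_a)$ as a lift over the diagonal; by the universal property of the pullback, this is exactly your section $(\gamma,a)\mapsto(\gamma,a,\id_a)$ of the endomorphism category $E$. Your explicit check of the morphism condition and your attention to strict stability under substitution go slightly beyond the paper, which leaves these points implicit. There are also two small notational slips: you write $\Interp{\Gamma}\ctxe\Interp{A}\tye\overline{\hom}$ where $\Interp{\Gamma}\ctxe\Interp{A}\ctxe\Interp{A}\tye\overline{\hom}$ is meant, and you place the $\tscxe$ before $a$ rather than before the $\Hom$-variable.
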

\begin{proof}
  By an analogous argument to Example \ref{ex:sctxe-hom}, we see that $\Interp{\Gamma \cxe b:A \cxe a :A \tscxe \Hom_A(\bar a, b)}$ is isomorphic to $\Interp{\Gamma} \ctxe \Interp{A}^\to$, where $\Interp{A}^\to:\Gamma \to \Cat$ maps $\gamma \in \Gamma$ to $(A(\gamma))^\to$. The rule is then validated by the functor $\Interp{\Refl}:\Interp{\Gamma} \ctxe \Interp{A} \to \Interp{\Gamma} \ctxe \Interp{A}^\to$ given by $\Interp{\Refl}(\gamma,a) \defeq (\gamma, \id_a)$.
\end{proof}


Similarly, we could a give a $\Hom$-elimination rule that is analogous to the $\Id$-elimination rule.
However, the category model does not have all $\Pi$-types, and therefore this rule is not strong enough for our purposes.
Indeed, the lack of all $\Pi$-types had been previously observed \cite{licata_2-dimensional_2011,neumannSynthetic1CategoriesDirected2025}, and corresponds to the fact that the dependent product of an opfibration along an arbitrary functor may fail to be an opfibration.
Our approach to the $\Hom$-elimination rule therefore explicitly specifies that the motive can also depend on an additional variable of the appropriate form ($X$, in the rule below).
\begin{proposition}[$\Hom$\textsc{-Elim}]
  The category model validates the following rule.
  {
    \small
    \begin{mathparpagebreakable}
      \inferrule*[Right=$\Hom$\textnormal{\textsc{-Elim}}]
      {
        \Gamma \vdash A\,\Ty \\\\
        \Gamma \cxe a:A \vdash X\,\Ty \\\\
        \Gamma \cxe b: A \cxe a:A \tscxe f : \Hom_A(\bar a,b) \cxe x : X^\op\vdash D\,\Ty \\\\
        \Gamma \cxe a:A \cxe x:X \vdash d \dvar D[\bar{a}/b,\Refl_A/f, \bar{x}/x]
      }
      {
        \Gamma \cxe b: A \cxe a:A \tscxe f : \Hom_A(\bar a,b) \cxe x : X \vdash j(f,x,d) \dvar D
      }
    \end{mathparpagebreakable}
  }%
  It also validates the following rule, which has the same antecedents as \textnormal{\textsc{$\Hom$-Elim}}, and has the following conclusion.
  \[ \Gamma \cxe a:A \cxe x:X \vdash j(\Refl_a,x,d) \jeq d \dvar D[\bar{a}/b,\Refl_A/f, \bar{x}/x] \hspace{3em} \textnormal{\textsc{$\Hom$-Comp}}\]
\end{proposition}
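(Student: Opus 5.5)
The plan is to construct $j$ directly in the semantics, by transporting $d$ along the canonical morphism from an identity arrow to $f$ in the arrow category. We first unfold the interpretations.

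By the argument of Proposition \ref{prop:hom-intro}, the context $\Interp{\Gamma \cxe b:A \cxe a:A \tscxe f:\Hom_A(\bar a,b)}$ is $\Interp{\Gamma}\ctxe\Interp{A}^\to$. Its objects are $(\gamma,b,a,f)$ with $f:a\to b$ in $A(\gamma)$, and its morphisms are pairs $(\beta,\alpha)$ together with a commuting square in the appropriate fibre.

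The type $D$ is therefore a strict functor on $(\Interp{\Gamma}\ctxe\Interp{A}^\to)\ctxe \Interp{X}^\op$, where $X$ is pulled back along the projection $(\gamma,b,a,f)\mapsto(\gamma,a)$. The conclusion is a displayed term whose type is the twist of $D$ in $x$. By Definition \ref{def:unstraightening-dprof} and Proposition \ref{prop:straight-d2sfibs}, a term of this type is a section of the D2SFib $\Interp{\Gamma}\ctxe\Interp{A}^\to\ctxe\Interp{X}\tye\Interp{D}\to \Interp{\Gamma}\ctxe\Interp{A}^\to\ctxe\Interp{X}$.

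Similarly, the premise $d$ is a section of the twist of $D[\bar a/b,\Refl_A/f]$ over $\Interp{\Gamma}\ctxe\Interp{A}\ctxe\Interp{X}$. On objects it supplies $d(\gamma,a,x)\in D(\gamma,a,a,\id_a,x)$. On a morphism $(g,\alpha,\xi)$ it supplies a morphism $\theta_d$ in the relevant fibre, of the shape described after Definition \ref{def:unstraightening-dprof}: $D(\text{covariant part},\id)\,d \to D(\id,\xi)\,d'$.

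\emph{Definition on objects.} For an object $(\gamma,b,a,f,x)$, note that $(\id_\gamma,f,\id_a)$ is a morphism $\kappa_f:(\gamma,a,a,\id_a)\to(\gamma,b,a,f)$ in $\Interp{\Gamma}\ctxe\Interp{A}^\to$; the square commutes since $f\circ\id_a=f\circ\id_a$. Because $X$ depends only on $(\gamma,a)$, this morphism lifts to $(\gamma,a,a,\id_a,x)\to(\gamma,b,a,f,x)$ with identity on the $x$-component. We set
\[ j(\gamma,b,a,f,x) \defeq D(\kappa_f,\id_x)\bigl(d(\gamma,a,x)\bigr). \]

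\emph{Definition on morphisms.} Take a morphism $(g,\beta,\alpha,e,\xi)$ from $(\gamma,b,a,f,x)$ to $(\gamma',b',a',f',x')$. We need a morphism from $D(g,\beta,\alpha,\id)\,j(\gamma,b,a,f,x)$ to $D(\id,\xi)\,j(\gamma',b',a',f',x')$.

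The key observation is that in $\Interp{A}^\to$ the composite $(\beta,\alpha)\circ\kappa_f$ equals $\kappa_{f'}\circ(\alpha,\alpha)$. Both are the square from $\id_a$ to $f'$ with components $f'\circ A(g)\alpha$ and $\alpha$, which agree by the commuting square $e$; here $(\alpha,\alpha)$ is the image under $\Refl$ of $\alpha$. Since $D$ is a strict functor, this identity lets us rewrite the source as $D(\kappa_{f'},\id)$ applied to $D((g,\alpha,\alpha),\id)\,d(\gamma,a,x)$. We then take $D(\kappa_{f'},\id)$ of the morphism $\theta_d$ that $d$ assigns to $(g,\alpha,\xi)$. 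Its target is $D(\kappa_{f'},\id)D(\id,\xi)\,d(\gamma',a',x')=D(\id,\xi)D(\kappa_{f'},\id)\,d(\gamma',a',x')$, using that the two variables of $D$ commute (bifunctoriality of $D$ on the Grothendieck construction).

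\emph{Verification.} Functoriality of $j$ and the section property then follow from the functoriality of $d$, the strictness of $D$, and the fact that $\kappa$ is natural in the sense just described. The computation rule \textsc{$\Hom$-Comp} is immediate: $\kappa_{\id_a}$ is an identity morphism, so $D(\kappa_{\id_a},\id)$ is strictly the identity. Hence $j(\Refl_a,x,d)=d$ on the nose, both on objects and on morphisms.

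We expect the main obstacle to be the bookkeeping of the twisted morphisms. The morphism data of a twisted type mixes a covariant reindexing (in $b,a,f$) with a contravariant one (in $x$), and we must check that the commutation $D(\kappa,\id)D(\id,\xi)=D(\id,\xi)D(\kappa,\id)$ holds strictly in the dependent setting, where $X$ is reindexed along the projection. We would first reduce this to the equation $(\kappa_{f'},\id)\circ(\id,\xi)=(\id,\xi)\circ(\kappa_{f'},\id)$ in $(\Interp{\Gamma}\ctxe\Interp{A}^\to)\ctxe\Interp{X}^\op$. That equation holds because $\kappa_{f'}$ fixes $(\gamma',a')$, so the action of $X$ along it is the identity. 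Stability of the construction under substitution in $\Gamma$, needed for a split model, then follows since every ingredient is defined fibrewise and strictly.
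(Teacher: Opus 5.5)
Your proposal is correct and is essentially the paper's proof: your $\kappa_f$ is the paper's $\phi=(\id_\gamma,(\id_a,f),\id_x)$, and your $j$ on morphisms is exactly the paper's $D(\id_{\gamma'},(\id_{a'},f'),\id)(d^*(\theta,\alpha,\psi))$. Your two strict identities (the naturality $(\beta,\alpha)\circ\kappa_f=\kappa_{f'}\circ(\alpha,\alpha)$ and the commutation of $\kappa_{f'}$ with the contravariant $\xi$) are precisely the two squares of the diagram the paper uses to check the endpoints, and you additionally spell out \textsc{$\Hom$-Comp} via $\kappa_{\id_a}=\id$, which the paper leaves as ``by construction'' (one small slip: the common component should read $f'\circ\alpha$, since $\alpha$ already has domain $A(g)a$).
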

\begin{proof}[Proof sketch.]
  For readability, we conflate here contexts, terms and types with their interpretation, e.g.\ $A=\Interp{A}$.
  For this proof, we introduce two functors $X[\mathsf{dom}]:\Gamma \ctxe A^\to \to \Cat$ and $r: \Gamma \ctxe A \ctxe X \to \Gamma \ctxe A^\to \ctxe X[\mathsf{dom}]$, given by $X[\mathsf{dom}](\gamma, f:a \to b)\defeq(\gamma, X(a))$ and $r(\gamma, a, x) \defeq (\gamma, \id_a, x)$, respectively.
  Recalling that $\Interp{\Gamma \cxe b:A \cxe a :A \tscxe \Hom_A(\bar a, b)}$ is isomorphic to $\Interp{\Gamma} \ctxe \Interp{A}^\to$ (proof of Proposition \ref{prop:hom-intro}), the proposition is equivalent to the existence of a functor $j$ making the rightmost diagram below commute.
  {
    \small
    \[
      \begin{tikzcd}[column sep=small]
        \bullet & {\Gamma \ctxe A^\to \ctxe X[\mathsf{dom}] \tye D } && {\Gamma \ctxe A \ctxe X} & {\Gamma \ctxe A^\to \ctxe X[\mathsf{dom}] \tye D } \\
        {\Gamma \ctxe A \ctxe X} & {\Gamma \ctxe A^\to \ctxe X[\mathsf{dom}]} && {\Gamma \ctxe A^\to \ctxe X[\mathsf{dom}]} & {\Gamma \ctxe A^\to \ctxe X[\mathsf{dom}]}
        \arrow[from=1-1, to=1-2]
        \arrow[from=1-1, to=2-1]
        \arrow["\lrcorner"{anchor=center, pos=0.085, rotate=0}, shift right=1.5, draw=none, from=1-1, to=2-2]
        \arrow["{\pi_2}"', from=1-2, to=2-2]
        \arrow["{d'}", from=1-4, to=1-5]
        \arrow["r"', from=1-4, to=2-4]
        \arrow["{\pi_2}", from=1-5, to=2-5]
        \arrow["d"{description}, shift left=3, curve={height=-6pt}, from=2-1, to=1-1]
        \arrow["{d'}"{description}, dashed, from=2-1, to=1-2]
        \arrow["r"', from=2-1, to=2-2]
        \arrow["j"{description}, dashed, from=2-4, to=1-5]
        \arrow["{=}"', no head, from=2-4, to=2-5]
      \end{tikzcd}
    \]
  }%
  Here, $d'$ is defined by the universal property of the pullback, as shown in the leftmost diagram.
  Further, $D$ is being considered as a dependent profunctor $D:\Gamma \ctxe A^\to \ctxe X[\mathsf{dom}]^\op \to \Cat$.

  We now define the action of $j$ on an object $(\gamma,f:a\to b,x) \in \Gamma \ctxe A^\to \ctxe X[\mathsf{dom}]$.
  First, writing $d^*=\pi_4 \circ d'$, note that $d^*(\gamma,a,x)$ is in $D(\gamma,\id_a,x)$.
  Further, observe that the there is a morphism from $\id_a$ to $f$ in $A^\to(\gamma)$, given by the tuple $(\id_a, f)$.
  It follows that $\phi \defeq (\id_\gamma, (\id_a, f), \id_x)$ is a morphism from $(\gamma, \id_a,x)$ to $(\gamma, f,x)$ in $\Gamma \ctxe A^\to \ctxe X[\mathsf{dom}]^\op$.
  Hence, we can define $j(\gamma,f:a\to b,x) \defeq (\gamma,f:a\to b,x, D(\phi)(d^*(\gamma,a,x)))$, which is readily seen to make the required diagram commute.
  The action of $j$ on morphisms and is similarly defined and is included in Section \ref{subsec:Hom-elim} of the appendix for completeness.
\end{proof}

Briefly, the elimination rule states that $r$ lifts against D2SFibs, in the sense made precise by the rightmost commutative diagram above.
This is analogous to how, for a path object $A \xrightarrow{r} A^I \xrightarrow{p} A \times A$ in a model category, the morphism $r$ lifts against fibrations \cite{awodeyHomotopyTheoreticModels2009}.

\subsection{Other operations in TTT}
In this section, we adapt rules from the directed type theory literature \cite{licata_2-dimensional_2011,neumannSynthetic1CategoriesDirected2025} to our current context, and then relate them to our new constructions.
We begin with the so-called deep polarization \cite{neumannSynthetic1CategoriesDirected2025}.
\begin{proposition}[\cite{licata_2-dimensional_2011,neumannSynthetic1CategoriesDirected2025}]\label{prop:op-ctx}
  The category model validates opposite contexts. That is, it validates the following rules.
  \begin{mathparpagebreakable}
    \inferrule*[Right=\textnormal{\textsc{Ctx-Op}}]
    {
      \Gamma\,\Ctx
    }
    {
      \Gamma^\op\,\Ctx
    }
    \and
    \inferrule*[Right=\textnormal{\textsc{Ctx-Op-Op}}]
    {
      \Gamma\,\Ctx
    }
    {
      (\Gamma^\op)^\op \jeq \Gamma \,\Ctx
    }
  \end{mathparpagebreakable}
  Semantically, $\Interp{\Gamma^\op}$ is interpreted as $\Interp{\Gamma}^\op$.
\end{proposition}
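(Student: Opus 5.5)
The plan is to interpret a context $\Gamma$ as the category $\Interp{\Gamma}$ and to set $\Interp{\Gamma^\op} \defeq \Interp{\Gamma}^\op$. Then \textsc{Ctx-Op-Op} holds strictly, because the opposite-category construction $(-)^\op : \Cat \to \Cat$ is a strict involution: $(\mathcal{C}^\op)^\op = \mathcal{C}$ on the nose, with the same objects, the same morphisms and the same composition. So the real content is to check that this assignment is well defined on the syntax of contexts. Contexts are built inductively, so we must say what $\Gamma^\op$ means for each context former and check that its interpretation is indeed (strictly) $\Interp{\Gamma}^\op$.

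Following the deep polarization of \cite{neumannSynthetic1CategoriesDirected2025}, I would set $\diamond^\op \defeq \diamond$ and $(\Gamma \cxe a:A)^\op \defeq \Gamma^\op \cxe a : A^{\op}$, with $A$ reindexed along the evident identification. Semantically this amounts to exhibiting a canonical isomorphism
\[
  (\Interp{\Gamma} \ctxe \Interp{A})^\op \cong \Interp{\Gamma}^\op \ctxe^- \Interp{A}^\op ,
\]
which holds on the nose for the concrete presentation of the Grothendieck construction: both sides have objects $(\gamma,a)$, and their morphisms are pairs of reversed arrows. The difficulty is that the right-hand side is a \emph{contravariant} Grothendieck construction. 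It is therefore not of the form $\Interp{\Gamma^\op} \ctxe F$ for a type $F$ over $\Interp{\Gamma}^\op$ in the category model, since $\Interp{A}$ is covariant on $\Interp{\Gamma}$ and not on its opposite.

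The cleanest route is to treat $\Gamma^\op$ as a new context former whose interpretation is simply $\Interp{\Gamma}^\op$, rather than as something computed by recursion. This is how \cite{licata_2-dimensional_2011} handles it. We then verify that substitution is compatible with it: a context morphism $\sigma : \Delta \to \Gamma$, i.e.\ a functor $\Interp{\Delta} \to \Interp{\Gamma}$, yields $\sigma^\op : \Delta^\op \to \Gamma^\op$ by applying the functor $(-)^\op$. Functoriality of $(-)^\op$ on $\Cat$, being strict, then gives coherence with composition and identities. Equivalently, $(-)^\op$ is a strict automorphism of the base category $\Cat$ of the comprehension category of Definition \ref{def:catmodel}, so adding it as a context operation is sound, and \textsc{Ctx-Op-Op} follows from $\op \circ \op = \id_{\Cat}$.

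I expect the main obstacle to be the interaction with context extension, i.e.\ justifying the claim in the paper that opposite contexts are ``not as well-behaved as opposite types''. The rules stated only require the context former and the involution law, so I would deliberately avoid committing to any rule relating $(\Gamma \cxe a:A)^\op$ to an extension of $\Gamma^\op$. I would also check that no splitness issue arises, since no reindexing of types is involved; only strict functoriality of $(-)^\op$ on $\Cat$ and on functors between categories is used.
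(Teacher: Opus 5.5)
Your proposal is correct and matches the paper. The paper gives no argument beyond citing Licata--Harper and Neumann--Altenkirch and stipulating $\Interp{\Gamma^\op} \defeq \Interp{\Gamma}^\op$ for a primitive context former, with \textsc{Ctx-Op-Op} holding because $(-)^\op$ is a strict involution of $\Cat$. Your extra remarks are consistent with the paper but not needed for the two stated rules: functoriality on substitutions, and the identity $(\Interp{\Gamma}\ctxe\Interp{A})^\op \cong \Interp{\Gamma}^\op \ctxe^- \Interp{A}^\op$, which is what the paper's notation $\Gamma \cxe^- A \defeq (\Gamma^\op \cxe A^\op)^\op$ encodes.
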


We can now capture syntactically the contravariant Grothendieck construction as follows.

\begin{notation}[\cite{neumannSynthetic1CategoriesDirected2025}]
  For a type $\Gamma^\op \vdash A\,\Ty$, we write $\Gamma \cxe^- A$ for the context $(\Gamma^\op \cxe A^\op)^\op$. It follows that $\Interp{\Gamma \cxe^- A} = \Interp{\Gamma} \ctxe^- \Interp{A}$.
\end{notation}

We also recall the existence of a universe of discrete types.
The same points of Remark \ref{remark:univs} apply to these universes.
\begin{proposition}[\cite{licata_2-dimensional_2011,neumannSynthetic1CategoriesDirected2025}]
  Under the assumption of a countable hierarchy of inaccessible cardinals, the category model validates countable universes $\Set_i$ classifying discrete categories.
  For these \emph{discrete types} $\Gamma \vdash A:\Set_i$, the category model validates the judgemental equality $A^\op \jeq A\,\Ty$, as well as discreteness of $\Hom$-types, i.e.\ $\Hom_A(a,b) : \Set_i$.
\end{proposition}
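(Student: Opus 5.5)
We construct the universes by the standard Hofmann--Streicher recipe, restricted to discrete categories, and then check the two judgemental facts directly on the semantics. Fix the hierarchy of inaccessibles $\kappa_0<\kappa_1<\cdots$ and let $\Set_i$ also denote the (small, relative to the next universe) category of $\kappa_i$-small sets, viewed as a discrete category inside $\Cat$. To interpret $\Gamma\vdash\Set_i\,\Ty$ we use the constant functor $\Interp{\Gamma}\to\Cat$ with value the category $\mathbf{Set}_{\kappa_i}$ of $\kappa_i$-small sets and functions. This choice is stable under reindexing, so it is a split type.

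A term $\Gamma\vdash A:\Set_i$ is a section of the projection $\Interp{\Gamma}\ctxe\mathbf{Set}_{\kappa_i}\to\Interp{\Gamma}$, equivalently a functor $\Interp{\Gamma}\to\mathbf{Set}_{\kappa_i}$. Its decoding $\mathsf{El}(A)$ is the composite $\Interp{\Gamma}\to\mathbf{Set}_{\kappa_i}\hookrightarrow\Cat$, which sends a set to the corresponding discrete category. Decoding commutes with substitution strictly, since both are given by precomposition. Coquand-style code/decode bijections (Remark \ref{remark:univs}) hold up to equality because discrete categories are determined by their object sets on the nose. The inclusions $\Set_i\subseteq\Set_{i+1}$ are given by the inclusions $\mathbf{Set}_{\kappa_i}\subseteq\mathbf{Set}_{\kappa_{i+1}}$, and the codes of $\Set_i$ themselves live in $\UU_{i+1}$ as in Proposition \ref{prop:Sigma}. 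If closure properties are wanted (for example $\Sigma$), they follow from the formula in the proof of Proposition \ref{prop:Sigma}, since a $\Sigma$ of discrete categories over a discrete base is discrete and $\kappa_i$-small by inaccessibility.

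For $A^\op\jeq A$, note that $\Interp{A^\op}=\op\circ\Interp{A}$. The functor $\op:\Cat\to\Cat$ restricts to the identity on discrete categories, not just up to isomorphism, because a discrete category and its opposite have literally the same objects and identity morphisms. Likewise $\op$ fixes every functor between discrete categories. Hence $\op\circ\mathsf{El}(A)=\mathsf{El}(A)$ strictly, which gives the judgemental equality.

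For discreteness of $\Hom$, recall that $\Interp{\Hom_A}(\gamma,x,y)=\hom_{\Interp{A}(\gamma)}(x,y)$, which is already a set viewed as a discrete category. We need it to be $\kappa_i$-small and to factor through $\mathbf{Set}_{\kappa_i}$ as a functor. For discrete $A(\gamma)$ the hom-sets are empty or singletons, so smallness is immediate; more generally, local $\kappa_i$-smallness of $\UU_i$-types would suffice. The functorial action on $\Interp{\Gamma}\ctxe A\ctxe A^\op$ is given by pre- and post-composition, which are functions between sets, so $\Interp{\Hom_A}$ is the decoding of a code $\Interp{\Gamma\cxe a:A^\op\cxe b:A}\to\mathbf{Set}_{\kappa_i}$.

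The main obstacle is keeping every equality strict rather than up to isomorphism. The delicate points are the Coquand-universe $\mathsf{El}$/code round-trip and the literal identity $\op\circ\mathsf{El}=\mathsf{El}$. Both reduce to fixing the convention that a discrete category is identified with its object set and trivial hom-sets. With that convention in place, the remaining checks are routine.
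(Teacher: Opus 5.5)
Your proposal is correct. It is the standard Hofmann--Streicher-style construction that the paper takes for granted: the paper gives no argument beyond the citations and the interpretations of $\op$ and $\Hom$ in the proof sketch of Proposition~\ref{prop:Sigma}. Concretely, you take $\Set_i$ to be the constant functor at $\mathbf{Set}_{\kappa_i}$, use the strict identity $\op\circ\mathsf{El}(A)=\mathsf{El}(A)$, and observe that $\Interp{\Hom_A}(\gamma,x,y)=\hom_{\Interp{A}(\gamma)}(x,y)$ factors through $\mathbf{Set}_{\kappa_i}$.

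One shift of emphasis: the paper uses discreteness of $\Hom$-types for arbitrary $A$, not only discrete ones. See the $\Hom$-Form rule in Appendix~\ref{sec:add-rules}, and the Yoneda lemma, where $\Hom_A(a,-)$ must be a copresheaf for any $A:\UU$. So your ``more generally, local $\kappa_i$-smallness'' remark is the case that actually carries the weight, and it should be the main argument rather than an aside.
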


Next, we introduce $\Pi$-types. For conciseness, here and in the next propositions we only state the bijection between terms in the category models.
The rules that follow from these bijections can be found in Section \ref{sec:add-rules} of the appendix.
\begin{notation}
  Write $\Tm(\Gamma, A)$ (resp. $\Tm_d(\Gamma, A)$) for the set of terms of type $A$ in context $\Gamma$ in the category model (resp. displayed category model). Recall that this set is defined as the set of sections of the canonical projection $\pi:\Interp{\Gamma \cxe a : A} \to \Interp{\Gamma}$ (resp. $\pi:\Interp{\Gamma \cxe^d a : A} \to \Interp{\Gamma}$).
\end{notation}

\begin{proposition}[\cite{licata_2-dimensional_2011,neumannSynthetic1CategoriesDirected2025}]\label{prop:pi-types}
  Given types $\Gamma^\op \vdash A\,\Ty$ and $\Gamma \cxe^- a : A \vdash B\,\Ty$, the category model has the $\Pi$-type $\Gamma \vdash \Pi_AB\,\Ty$, where its terms satisfy the following natural isomorphism.
  \[ \mathsf{app} : \Tm(\Gamma, \Pi_AB) \cong \Tm(\Gamma \cxe^- a : A, B) : \mathsf{\lambda}\]
\end{proposition}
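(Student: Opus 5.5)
We define $\Interp{\Pi_AB}$ directly as a functor $\Interp{\Gamma} \to \Cat$ and then check the bijection on sections. Conflating syntax with semantics, $A : \Gamma^\op \to \Cat$ and $B : \Gamma \ctxe^- A \to \Cat$.

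\textbf{The category $(\Pi_AB)(\gamma)$.} For $\gamma \in \Gamma$, we take $(\Pi_AB)(\gamma)$ to be the category of sections of the opfibration $\pi : (A(\gamma)) \ctxe B(\gamma,-) \to A(\gamma)$ together with vertical natural transformations. An object is a pair $(s_0, s_1)$ where $s_0$ assigns to each $a \in A(\gamma)$ an object $s_0(a) \in B(\gamma,a)$. The component $s_1$ assigns to each $\alpha : a \to a'$ in $A(\gamma)$ a morphism $B(\id_\gamma,\alpha)s_0(a) \to s_0(a')$, subject to functoriality.

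\textbf{Functoriality in $\gamma$.} For $g : \gamma \to \gamma'$ in $\Gamma$, we have the functor $A(g) : A(\gamma') \to A(\gamma)$, contravariant since $A$ is indexed over $\Gamma^\op$. The morphism $(g, \id) : (\gamma', A(g)a') \to (\gamma', a')$ is not the one we want, so instead we use the morphism $(g,\id) : (\gamma, A(g)a') \to (\gamma',a')$ of $\Gamma \ctxe^- A$. Given a section $s$ over $\gamma$, we set
\[ (\Pi_AB)(g)(s)(a') \defeq B(g,\id)\bigl(s(A(g)a')\bigr). \]
This is exactly the point where the mixed variance ($A$ over $\Gamma^\op$, $B$ over $\Gamma \ctxe^- A$) makes the formula typecheck. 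It is the reason the general dependent product, along an arbitrary functor, fails to be an opfibration while this restricted one succeeds. We must check strict functoriality: $(\Pi_AB)(hg) = (\Pi_AB)(h)\circ(\Pi_AB)(g)$ follows from strictness of $A$ and $B$ as functors.

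\textbf{The bijection.} A term of $\Pi_AB$ is a section $\Gamma \to \Gamma \ctxe \Pi_AB$. Concretely, it gives for each $\gamma$ a section $t_\gamma$, and for each $g : \gamma \to \gamma'$ a vertical map $(\Pi_AB)(g)(t_\gamma) \to t_{\gamma'}$, i.e.\ components
\[ B(g,\id)\, t_\gamma(A(g)a') \to t_{\gamma'}(a'), \]
natural in $a'$ and functorial in $g$. A term of $B$ over $\Gamma \ctxe^- A$ gives, for each object $(\gamma,a)$, an element of $B(\gamma,a)$. For each morphism $(g,\alpha) : (\gamma,a)\to(\gamma',a')$ with $\alpha : a \to A(g)a'$, it gives a map $B(g,\alpha)(u(\gamma,a)) \to u(\gamma',a')$.

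We define $\mathsf{app}$ by factoring each morphism $(g,\alpha)$ as $(\id,\alpha)$ followed by $(g,\id)$. This lets us assemble the morphism data of $u$ from the two pieces of data of $t$. We define $\lambda$ by restricting $u$ to the two classes of morphisms. The mutual inverse property and the naturality in $\Gamma$ (stability under substitution) are then routine, as is the case of the groupoid/category model for $\Sigma$-types.

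\textbf{Main obstacle.} We expect the hardest step to be checking that the factorization above is compatible with composition, so that the assembled data form a functor. This amounts to an interchange (Beck–Chevalley–type) identity between $B$ applied to vertical and horizontal morphisms. It holds strictly because $\Gamma \ctxe^- A$ is a split Grothendieck construction.
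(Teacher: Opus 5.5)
Your construction is correct and complete at the level of a proposal. The paper does not prove this proposition itself; it imports it from Licata--Harper and Neumann--Altenkirch. Your argument is the standard one those works rely on:
\begin{itemize}
\item interpret $(\Pi_AB)(\gamma)$ as the category of sections of $A(\gamma)\ctxe B(\gamma,-)\to A(\gamma)$, with vertical transformations as morphisms;
\item reindex along the morphisms $(g,\id):(\gamma,A(g)a')\to(\gamma',a')$ of $\Gamma\ctxe^- A$;
\item obtain the bijection by factoring $(g,\alpha)=(g,\id)\circ(\id,\alpha)$.
\end{itemize}
So it is essentially the same approach.

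Two small corrections. First, delete the aside about $(g,\id):(\gamma',A(g)a')\to(\gamma',a')$. It is ill-typed, since $A(g)a'$ lies in $A(\gamma)$ rather than $A(\gamma')$.

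Second, the composition check for $\mathsf{app}(t)$ does not reduce to the interchange identity $(h,\beta)\circ(g,\id)=(hg,\id)\circ(\id,A(g)\beta)$ in $\Gamma\ctxe^- A$ alone. That identity holds strictly, as you say. But the check also genuinely uses three further facts:
\begin{itemize}
\item the components $(t_g)_{a'}$ are natural in $a'$, applied at $\beta$;
\item $t_{hg}=t_h\circ(\Pi_AB)(h)(t_g)$;
\item the section $t_\gamma$ is functorial.
\end{itemize}
You should state these explicitly.
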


As usual, if the type $B$ does not depend on $A$, we write $A\to B$ instead of $\prod_AB$.

Now, consider the type $\Gamma \cxe B : \UU \cxe A^\op : \UU \vdash A \to B \,\Ty$ and apply the \textsc{Twist} rule to obtain the type $\Gamma \cxe B : \UU \cxe A : \UU \vdash \bar{A} \to B \,\Ty_d$.
The following proposition characterizes the terms of this type in the category model, after substituting for some types $A$ and $B$.

\begin{proposition}\label{prop:function-twist}
  Let $\Gamma \vdash A\,\Ty$ and $\Gamma \vdash B\,\Ty$ be two types. Then we have the following natural isomorphism of terms.
  \[ \mathsf{app} : \Tm_d(\Gamma, \bar{A} \to B) \cong \Tm(\Gamma \cxe a : A, B) : \mathsf{\lambda}\]
\end{proposition}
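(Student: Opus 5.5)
We unfold both sides semantically and compare them directly, conflating syntax with interpretation ($A=\Interp{A}$, etc.).

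\textbf{Step 1: the twisted function type.} The type $\Gamma \cxe b:B \cxe a:A^\op \vdash A \to B\,\Ty$, obtained by substituting into the universe-level type, is interpreted as the dependent profunctor
\[ (\gamma,a,b)\mapsto B(\gamma)^{\{a\}}\cong B(\gamma). \]
Here $\Pi$ over the singleton is the internal hom from the terminal category, by Proposition~\ref{prop:pi-types} applied in the context $\Gamma \cxe b:B\cxe a:A^\op$. More usefully, I will compute the whole context rather than this fibre. By Definition~\ref{def:unstraightening-dprof} and Proposition~\ref{prop:twists}, the interpretation of $\Gamma \cxe \bar A\to B$ in the displayed model is obtained by unstraightening the dependent profunctor $C$ on $\Gamma\ctxe \mathsf{const}_{\UU}\ldots$. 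After substituting $A,B$ for the universe variables, this is the pullback of that unstraightening along the classifying maps.

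So I first compute $\Interp{\Gamma \cxe B:\UU\cxe A:\UU\tscxe \bar A\to B}$ over $\Cat$. Its objects are $(\gamma, B_0, A_0, F:A_0\to B_0)$. A morphism $(\gamma,B_0,A_0,F)\to(\gamma',B_0',A_0',F')$ consists of $g:\gamma\to\gamma'$, functors $\beta:B_0\to B_0'$ and $\alpha:A_0\to A_0'$, and a morphism in the hom category $[A_0,B_0']$ from $\beta F$ to $F'\alpha$, i.e.\ a natural transformation. This is exactly the pattern of Example~\ref{ex:sctxe-hom}, with $\Hom$ replaced by the functor-category profunctor $(A_0,B_0)\mapsto [A_0,B_0]$: it is a lax comma-type construction. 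Pulling back along $\langle A,B\rangle:\Gamma\to\Cat\times\Cat$ gives the displayed category $E\to\Gamma$ described as follows. Its objects are pairs $(\gamma, F_\gamma : A(\gamma)\to B(\gamma))$. A morphism over $g:\gamma\to\gamma'$ is a natural transformation $B(g)\circ F_\gamma \Rightarrow F_{\gamma'}\circ A(g)$.

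\textbf{Step 2: identify the sections.} A section of $E\to\Gamma$ is then the following data:
\begin{itemize}
\item for each $\gamma$, a functor $F_\gamma:A(\gamma)\to B(\gamma)$;
\item for each $g:\gamma\to\gamma'$, a transformation $\theta_g : B(g)F_\gamma\Rightarrow F_{\gamma'}A(g)$;
\item functoriality of the section: $\theta_{\id}=\id$ and $\theta_{hg}$ equal to the evident pasting of $\theta_h$ and $\theta_g$. This uses that composition in $E$ is pasting.
\end{itemize}
These are exactly lax natural transformations $A\Rightarrow B$ in the direction of $\ICat_{\mathrm{lax}}$.

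On the other side, $\Tm(\Gamma\cxe a:A, B)$ consists of sections of $\Gamma\ctxe A\ctxe B\to \Gamma\ctxe A$, i.e.\ functors $s:\Gamma\ctxe A\to B\circ\pi$ over $\Gamma\ctxe A$ in the sense of the Grothendieck construction. Concretely, $s$ assigns $s(\gamma,a)\in B(\gamma)$. A morphism $(g,u):(\gamma,a)\to(\gamma',a')$, with $u:A(g)a\to a'$, is sent to $B(g)s(\gamma,a)\to s(\gamma',a')$, functorially.

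\textbf{Step 3: the bijection.} Define $\mathsf{app}(F,\theta)(\gamma,a) \defeq F_\gamma(a)$. On $(g,u)$, it is the composite
\[ B(g)F_\gamma a\xrightarrow{(\theta_g)_a}F_{\gamma'}A(g)a\xrightarrow{F_{\gamma'}u}F_{\gamma'}a'. \]
Conversely, $\lambda(s)$ sets $F_\gamma \defeq s(\gamma,-)$, restricted to fibre morphisms $(\id,u)$, and $(\theta_g)_a \defeq s(g,\id_{A(g)a})$. The factorization $(g,u)=(\id,u)\circ(g,\id)$ in $\Gamma\ctxe A$ gives that these are inverse. Functoriality of $s$ corresponds to functoriality of each $F_\gamma$, naturality of $\theta_g$, and the pasting and identity laws for $\theta$. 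This is the standard equivalence between lax transformations and functors out of the Grothendieck construction. Naturality in $\Gamma$ holds because both constructions commute with reindexing along $\Gamma'\to\Gamma$: pullback of the displayed category, respectively precomposition.

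\textbf{Main obstacle.} The main obstacle is Step 1: correctly computing the twist of $A\to B$, including the direction of the 2-cell. The direction is $\beta F\Rightarrow F'\alpha$ rather than the reverse, and it comes from the morphism formula $\theta: C(\alpha,\id)c\to C(\id,\beta)c'$ after Definition~\ref{def:unstraightening-dprof}. Getting it wrong would produce oplax transformations and break the match with sections of $\Gamma\ctxe A\ctxe B$. I will check this against the formula with $C(A_0,B_0)=[A_0,B_0]$, where $C(\alpha,\id)F=F\circ\alpha$ on the contravariant side. This should be read up to the paper's convention that the contravariant variable is $a$.
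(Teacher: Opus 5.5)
Your proposal is correct and follows the route the paper intends. The paper's proof is only ``analogous to Proposition~\ref{prop:pi-types}'', meaning a direct unwinding of sections, and your computation of the pulled-back unstraightening supplies exactly those omitted details, including the 2-cell direction $B(g)F_\gamma \Rightarrow F_{\gamma'}A(g)$ needed to match sections over $\Gamma \ctxe A$; the only blemish is your discarded opening claim that the weakened $A \to B$ has fibre $B(\gamma)^{\{a\}}$ (it is $[A(\gamma),B(\gamma)]$), but nothing in your argument depends on it.
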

\begin{proof}
  Analogous to Proposition \ref{prop:pi-types}.
\end{proof}

Twists also allow characterizing the opposite of a $\Sigma$-type, as the next proposition shows.

\begin{proposition}\label{prop:op-sigma}
  Let $\Gamma \vdash A\,\Ty$ and $\Gamma \cxe a:A \vdash B\,\Ty$ be two types. Then we have the following natural isomorphism.
  \[ \langle \mathsf{fst},\mathsf{snd}\rangle : \Tm\left(\Gamma, z : \bigg(\sum_{a:A}B\bigg)^\op\right) \cong
  \sum_{x :\Tm(\Gamma, A^\op)} \Tm_d(\Gamma, B^\op[\bar x/a]) : \mathsf{pair} \]
  Here, the type $\Gamma \vdash B^\op[\bar x/a]$ has been obtained by substituting $x$ for $a$ in the twisted type $\Gamma \cxe a : A^\op \vdash B[\bar a/a]\,\Ty_d$.
\end{proposition}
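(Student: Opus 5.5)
The plan is to unfold both sides concretely in the category model. Each side is a set of sections of an explicitly described functor into $\Interp{\Gamma}$, and the aim is to match the data component by component. For readability I conflate syntax with its interpretation, writing $\Gamma = \Interp{\Gamma}$, $A = \Interp{A}$ and $B = \Interp{B} : \Gamma \ctxe A \to \Cat$.

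\emph{Left-hand side.} A term of $(\sum_{a:A}B)^\op$ is a section $s$ of $\Gamma \ctxe (\sum_A B)^\op \to \Gamma$. Using the description of $\Interp{\sum_AB}$ from the proof of Proposition \ref{prop:Sigma}, such an $s$ consists of:
\begin{itemize}
\item for each $\gamma$, a pair $(x(\gamma), y(\gamma))$ with $x(\gamma) \in A(\gamma)$ and $y(\gamma) \in B(\gamma, x(\gamma))$;
\item for each $g : \gamma \to \gamma'$, a morphism $(\alpha_g, \beta_g) : (x(\gamma'), y(\gamma')) \to (A(g)x(\gamma), B(g,\id)y(\gamma))$ in $\sum_A B(\gamma')$, that is, $\alpha_g : x(\gamma') \to A(g)x(\gamma)$ in $A(\gamma')$ and $\beta_g : B(\id, \alpha_g)y(\gamma') \to B(g,\id)y(\gamma)$;
\end{itemize}
all subject to the functoriality equations for identities and composites. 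The pair $(x, \alpha)$ is exactly a section of $\Gamma \ctxe A^\op \to \Gamma$, so $\mathsf{fst}$ is evidently well defined as a map into $\Tm(\Gamma, A^\op)$. Its equations are the $A$-components of the equations for $s$.

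\emph{Right-hand side.} Fix $x \in \Tm(\Gamma, A^\op)$. The twisted type $\Gamma \cxe a : A \vdash B^\op[\bar a/a]\,\Ty_d$ is the displayed category of the D2SFib $\Gamma \ctxe A \tye B^\op \to \Gamma \ctxe A$ obtained by applying \textsc{Twist} to $B^\op$ regarded as a type over $\Gamma \cxe a:(A^\op)^\op$, using $\op$\textsc{-Inv}. Substituting $x$ pulls this back along the section determined by $x$. A displayed section is then a section of the pulled-back functor. Using the explicit description of morphisms $(\alpha, \beta, \theta)$ of $A \ctxe B \tye C$ given after Definition \ref{def:unstraightening-dprof}, such a section amounts to:
\begin{itemize}
\item objects $y(\gamma) \in B(\gamma, x(\gamma))$;
\item for each $g$, a heteromorphism $\theta_g$ of the form $C(g, \id)y(\gamma) \to C(\id, \alpha_g)y(\gamma')$ with $C = B^\op$, which after undoing the $\op$ is precisely a morphism $B(\id,\alpha_g)y(\gamma') \to B(g,\id)y(\gamma)$;
\item functoriality inherited from the section condition.
\end{itemize}

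\emph{The bijection and naturality.} Define $\langle \mathsf{fst}, \mathsf{snd}\rangle(s) \defeq ((x,\alpha), (y,\beta))$, and let $\mathsf{pair}$ reassemble this data. The two maps are inverse on the nose, so the remaining work is to check that the functoriality equations agree. The equation for a composite $g' \circ g$ in $\sum_A B$ splits into an $A$-part and a $B$-part. The $B$-part involves reindexing $\beta_g$ along $\alpha_{g'}$, and this should coincide with the composition law of morphisms in the pulled-back unstraightening. Naturality in $\Gamma$, i.e.\ stability under substitution along a functor $\sigma : \Delta \to \Gamma$, should follow because every construction involved is given by strict pullback or precomposition, and these commute strictly with reindexing.

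\emph{Expected obstacle.} I expect the main difficulty to be the variance bookkeeping in the second step. One has to identify correctly which arguments of the dependent profunctor get the $\op$, given the convention of Definition \ref{def:dprof} that the contravariant variable comes second, together with the double $\op$ coming from $B^\op$ over $(A^\op)^\op$. The payoff is confirming that the twisted morphism $\theta$ lands with exactly the same domain and codomain as $\beta_g$. I would first do this check in the non-dependent case $\Gamma = 1$ with a general $g$ in a one-object-context test, and then transcribe it to a general $\Gamma$.
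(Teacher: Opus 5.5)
Your proposal is correct and is essentially the paper's argument, unfolded at the level of sections. The paper just observes that $\Interp{\Gamma \cxe a:A^\op \tscxe B^\op} \cong \Interp{\Gamma} \ctxe \Interp{(\sum_{a:A}B)^\op}$ over $\Interp{\Gamma}$; in fact, by Definition \ref{def:unstraightening-dprof} and $\op$\textsc{-Inv}, $A^\op \tye B^\op$ is literally $(\sum_{a:A}B)^\op$, so your ``should coincide'' check of the composition laws is automatic, and sections of the composite $\Gamma \ctxe A^\op \tye B^\op \to \Gamma \ctxe A^\op \to \Gamma$ then split into pairs exactly as you describe. One slip to fix: the twisted type and its D2SFib live over $\Gamma \cxe a:A^\op$, so the projection is $\Gamma \ctxe A^\op \tye B^\op \to \Gamma \ctxe A^\op$ rather than over $A$, and this is what your subsequent computation with $x \in \Tm(\Gamma, A^\op)$ actually uses.
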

\begin{proof}
  The bijection follows from the isomorphism $\Interp{\Gamma \cxe A^\op \tscxe B^\op} \cong \Interp{\Gamma} \cxe \Interp{(\sum_{a:A}B)^\op}$.
\end{proof}

A complete description of TTT is given in Appendix \ref{sec:add-rules}.
It includes additional rules omitted here for reasons of space, e.g.\ the empty context $\diamond$ satisfies $\diamond^\op \jeq \diamond$.
Nonetheless, a concise definition is given as follows.

\begin{definition}
  \emph{Twisted Type Theory (TTT)} is MLTT extended with the $\Sigma$-types, $\Pi$-types, $\Hom$-types, twists, opposite types, displayed types, universes and opposite contexts, as described this paper. The interpretation of TTT given is called the \emph{category model} of TTT.
\end{definition}

\section{Applications}\label{sec:applications}
In this section, we use TTT to develop some synthetic category theory.
Any mentions to an interpretation of the syntax will therefore be about the category model of TTT.
\subsection{Directed univalence}
By definition, the semantics of the type of functions $A \to B$ is the category of functors $[\llbracket A \rrbracket, \llbracket B \rrbracket]$.
When $A$ and $B$ are sets, this category has no non-identity morphisms; i.e.\ it is a set.
This suggests the following definition.

\begin{definition}\label{def:dunivalence}
  For sets $\Gamma^\op \vdash A : \Set$ and $\Gamma \vdash B : \Set$, the conversion rule $\Gamma \vdash A \to B \equiv \Hom_\Set(A,B) : \Set$ is called \emph{directed univalence}.
\end{definition}

By the previous discussion, we obtain the following proposition.

\begin{proposition}\label{prop:dunivalence}
  The category model validates directed univalence.
\end{proposition}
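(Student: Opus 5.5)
To prove Proposition \ref{prop:dunivalence} we will compute both sides of the conversion rule in the category model and check that they are the same indexed category over $\Interp{\Gamma}$. Write $\Interp{A}:\Interp{\Gamma}^\op \to \Cat$ and $\Interp{B}:\Interp{\Gamma}\to\Cat$; both land in discrete categories, i.e.\ sets, since $A$ and $B$ are classified by $\Set$. The judgement requires both sides to be terms of $\Set$, so we must also check that both interpretations are discrete at each $\gamma$.

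\textbf{Left-hand side.} Using the interpretation of $\Pi$-types from Proposition \ref{prop:pi-types} in the non-dependent case, $\Interp{A\to B}(\gamma)$ is the functor category $[\Interp{A}(\gamma),\Interp{B}(\gamma)]$. Its action on a morphism $g:\gamma\to\gamma'$ is precomposition with $\Interp{A}(g):\Interp{A}(\gamma')\to\Interp{A}(\gamma)$ and postcomposition with $\Interp{B}(g)$. Since $\Interp{A}(\gamma)$ is discrete, this functor category is the discrete category on the set of functions $\Interp{A}(\gamma)\to\Interp{B}(\gamma)$. A natural transformation between two functors out of a discrete category into a discrete category is forced to consist of identities, and exists only when the functors are equal.

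\textbf{Right-hand side.} We first make $\Hom_\Set(A,B)$ well formed. We take $A$ as a term of $\Set^\op$ in context $\Gamma$, using the context opposite from Proposition \ref{prop:op-ctx}, the identification $\Set^\op \equiv \Set$ in the relevant sense, and the fact that terms of $\Set^\op$ in $\Gamma$ are the same as terms of $\Set$ in $\Gamma^\op$. We then substitute $A$ and $B$ into $\Gamma\cxe a:\Set^\op\cxe b:\Set\vdash\Hom_\Set(a,b)$. By the proof of Proposition \ref{prop:Sigma}, $\Interp{\Hom_\Set}(\gamma,X,Y)=\hom_{\Set}(X,Y)$, viewed as a discrete category. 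After substitution, the value at $\gamma$ is $\hom_\Set(\Interp{A}(\gamma),\Interp{B}(\gamma))$, and a morphism $g$ acts by precomposing with $\Interp{A}(g)$ and postcomposing with $\Interp{B}(g)$. This agrees with the left-hand side on objects and on morphisms. Because $\Interp{\Hom}$ is covariant in $b$ and contravariant in $a$, the variances also match the $\Pi$-type construction.

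\textbf{Main obstacle.} The hard part is making sure the equality is strict rather than merely up to isomorphism, since the rule is a judgemental equality. To handle this, we will fix the universe $\Set_i$ to classify small discrete categories literally presented by their sets of objects, and choose the interpretation of the exponential so that a discrete functor category is defined as exactly the set of functions. This must be compatible with the splitness of substitution. If the standard constructions only give a canonical isomorphism, we will instead adjust the interpretation of $\Pi$-types between discrete types so that they coincide on the nose. This is legitimate because the adjustment preserves stability under substitution: both constructions are defined pointwise in $\gamma$.
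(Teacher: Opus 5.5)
Your proposal is correct and follows the paper's argument. The paper's proof is only the preceding remark: $\Interp{A \to B}$ is the functor category $[\Interp{A},\Interp{B}]$, which is discrete when $A$ and $B$ are sets and so coincides with $\hom_{\Set}(\Interp{A},\Interp{B})$; your variance check and strictness discussion make explicit what the paper leaves implicit. One small correction: drop the appeal to an identification $\Set^\op \equiv \Set$, which does not hold because the category of sets is not self-dual --- what you actually need, and also state, is that since $\Set$ is a closed type, terms $\Gamma \vdash A : \Set^\op$ correspond to terms $\Gamma^\op \vdash A : \Set$.
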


An alternative to the judgemental equality $\Hom_\Set(A,B) : \Set$ would be requiring that the canonical function $\Hom_\Set(A,B) \to (A \to B)$ induces an equivalence of types.
However, such a function can only be formed when $A$ and $B$ are types in the empty context.
Our current definition applies in more general contexts, as well as being more useful in pen and paper proofs, at the cost of severely complicating typechecking.

We will freely use directed univalence in the next sections.


\subsection{Natural transformations}\label{subsec:nats}

As was stated in the introduction, twisted types allow us to capture syntactically the semantic notion of natural transformations between functors.

\begin{proposition}\label{prop:internal-nat-hom}
  For two terms $\IsTm[a:A]{Fa}{B}$ and $\IsTm[a:A]{Ga}{B}$, terms $\IsTmD[a:A]{t}{\Hom(\overline{Fa},Ga)}$ correspond to natural transformations $\Interp{F} \to \Interp{G}$ in the category model.
\end{proposition}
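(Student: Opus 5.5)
We compute the interpretation of the type $a:A \vdash \Hom(\overline{Fa},Ga)\,\Ty_d$ and then identify its sections. For readability we work in the empty context $\Gamma = \diamond$; the general case is identical fibrewise over $\Interp{\Gamma}$. We also conflate syntax with its interpretation, writing $A=\Interp{A}$, $F=\Interp{F}:A\to B$ and $G=\Interp{G}:A\to B$.

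\emph{Step 1: interpret the type as a pullback of the arrow category.} The type is obtained in three stages. Start from $b:B \cxe b':B^\op \vdash \Hom_B(b',b)\,\Ty$. Apply \textsc{Twist} to get the displayed type $b:B\cxe b':B \vdash \Hom_B(\bar b',b)\,\Ty_d$. Finally substitute $b' \mapsto Fa$ and $b\mapsto Ga$. By Example~\ref{ex:sctxe-hom}, the twisted type corresponds to the displayed category $\langle \mathsf{cod},\mathsf{dom}\rangle : B^\to \to B\times B$. Substitution in the displayed category model is pullback along $\langle G,F\rangle : A \to B\times B$, via Proposition~\ref{prop:dcat-model} and the fact that the displayed category model is a split CC. So $\Interp{a:A \cxe^d t:\Hom(\overline{Fa},Ga)}$ is the strict pullback $P = A\times_{B\times B} B^\to$. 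Concretely:
\begin{itemize}
\item its objects are pairs $(a, f: Fa\to Ga)$;
\item its morphisms $(a,f)\to(a',f')$ are arrows $\alpha:a\to a'$ such that $G\alpha\circ f = f'\circ F\alpha$, since the morphism in $B^\to$ is forced to be $(G\alpha,F\alpha)$ together with a commuting square.
\end{itemize}
The projection $P\to A$ forgets $f$.

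\emph{Step 2: identify sections with natural transformations.} A term $t$ is by definition a section $s:A\to P$ of this projection.
\begin{itemize}
\item On objects, $s$ picks for each $a$ a morphism $t_a:Fa\to Ga$.
\item On morphisms, $s(\alpha)$ must lie over $\alpha$, so it is $\alpha$ itself. This is a legitimate morphism of $P$ exactly when $G\alpha\circ t_a = t_{a'}\circ F\alpha$, which is the naturality square.
\item Functoriality of $s$ is automatic, because morphisms of $P$ over a given $\alpha$ are unique when they exist (the projection $P\to A$ is faithful).
\end{itemize}
Hence sections correspond bijectively to families $(t_a)_a$ satisfying naturality, i.e.\ to natural transformations $F\Rightarrow G$. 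In context $\Gamma$ the same argument is run in each fibre, giving natural transformations between the functors $\Interp{F},\Interp{G}:\Interp{\Gamma}\ctxe A\to \Interp{\Gamma}\ctxe B$ over $\Interp{\Gamma}$.

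\emph{Main obstacle.} The delicate point is Step 1: checking that substitution into the twisted $\Hom$-type really computes as this strict pullback. This needs two things. First, the \textsc{Twist} interpretation must be stable under substitution, so that reindexing the dependent profunctor $\hom_B$ along $(G,F)$ and then unstraightening agrees with unstraightening and then pulling back. I would verify this directly from the explicit description of $A\ctxe B\tye C$ after Definition~\ref{def:unstraightening-dprof}. Second, the orientation must match Example~\ref{ex:sctxe-hom}: since the projection there is $(\mathsf{cod},\mathsf{dom})$, the covariant variable is $Ga$ and the twisted variable is $Fa$, so the resulting squares point from $F$ to $G$ and not the other way. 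Once Step 1 is settled, Step 2 is a routine check.
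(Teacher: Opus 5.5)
Your proposal is correct and follows essentially the paper's argument. The paper also reads the displayed type as the pullback of $\langle\mathsf{cod},\mathsf{dom}\rangle : B^\to \to B\times B$ along $\langle G,F\rangle$, and uses the pullback's universal property to identify terms with functors $A\to B^\to$ having domain $F$ and codomain $G$, i.e.\ natural transformations; your section-by-section check just unpacks that step, and your first ``obstacle'' is unnecessary, since $\Hom(\overline{Fa},Ga)$ is formed by twisting first and then substituting in the displayed model, which is plain pullback.
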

\begin{proof}
  By the universal property of the pullback, terms $t$ as above correspond to functors $\tau:\Interp{A} \to \Interp{A}^\to$ satisfying that for all $a \in \Interp{A}$, the morphism $\tau(a)$ has as domain $\Interp{F}(a)$ and as codomain $\Interp{G}(a)$.
  That is, they correspond to natural transformations $\Interp{F} \to \Interp{G}$.
\end{proof}

\begin{remark}\label{remark:lax-ends}
  For the reader familiar with lax ends, the previous proposition follows from a more general fact, which is readily verified: for a profunctor $F:A \times A^\op \to \Cat$, the category of lifts from $A$ to $A \ctxe A \tye F$ through the diagonal $\delta : A \to A \times A$ is the lax end of $F$. The previous result then follows by setting $F$ to be $\Hom(F-,G-):A \times A^\op \to \Cat$.
\end{remark}

Generalizing to arbitrary contexts and dependent functions, an argument analogous to the previous one gives the following result.
\begin{proposition}\label{prop:dfunext}
  Let $\Gamma \vdash F,G:\prod_AB$ be two functions.
  Then the canonical function $\Tm_d(\Gamma, \hom_{\Pi_AB}(\bar F,G)) \to \Tm_d(\Gamma \cxe^- a :A, \hom_{Ba}(\overline{F(a)}, G(a))$
    given by $\Hom$-elimination induces a bijection between the interpretation of both sets.
  \end{proposition}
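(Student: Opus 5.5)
The plan is to compute both sides semantically, in the same way as Proposition \ref{prop:internal-nat-hom}, and to check that the map obtained from $\Hom$-elimination is the evident identification between them. Conflate syntax with its interpretation and write $P \defeq \Interp{\Pi_AB}:\Gamma \to \Cat$. Recall from Proposition \ref{prop:pi-types} that $P(\gamma)$ is the category of sections of $\pi:\Gamma^{\op}\text{-fibre}$, namely sections over $A(\gamma)$ of the relevant Grothendieck construction. Its morphisms are natural transformations between such sections.

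\emph{Left-hand side.} By the argument in the proof of Proposition \ref{prop:hom-intro}, $\Interp{\Gamma \cxe f:\Pi_AB \cxe g:\Pi_AB \tscxe \hom(\bar f,g)}$ is $\Gamma \ctxe P^\to$, pulled back along $(F,G)$. Hence a term in $\Tm_d(\Gamma,\hom_{\Pi_AB}(\bar F,G))$ is a functor $\tau:\Gamma \to \Gamma \ctxe P^\to$ over $\Gamma$ with $\mathsf{dom}\,\tau = F$ and $\mathsf{cod}\,\tau = G$. Unfolding, this consists of:
\begin{itemize}
\item for each $\gamma$, a natural transformation $\tau_\gamma:F(\gamma)\Rightarrow G(\gamma)$ in $P(\gamma)$;
\item for each $u:\gamma\to\gamma'$, the commutativity of the square relating $P(u)\tau_\gamma$, $\tau_{\gamma'}$, and the components $F(u)$, $G(u)$.
\end{itemize}
The second condition is a condition rather than extra data, since $\Gamma \ctxe P^\to$ is thin over $\Gamma \ctxe P\times P$.

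\emph{Right-hand side.} Likewise, a term of $\hom_{Ba}(\overline{F(a)},G(a))$ over $\Gamma \cxe^- a:A$ is a functor $\Gamma \ctxe^- A \to (\Gamma\ctxe^- A)\ctxe B^\to$ lying over $\mathsf{app}(F)$ and $\mathsf{app}(G)$. Componentwise this is a family of morphisms $\sigma_{\gamma,a}:F(\gamma)(a)\to G(\gamma)(a)$ in $B(\gamma,a)$, natural in all morphisms $(u,\alpha)$ of $\Gamma\ctxe^- A$.

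\emph{Comparing the two.} The key step is to show that such a family is precisely the data of the $\tau$'s. Any morphism $(u,\alpha)$ of $\Gamma \ctxe^- A$ factors as a vertical morphism (in $A(\gamma)$) followed by a cartesian morphism. Naturality along vertical morphisms says that each $\tau_\gamma$ is a natural transformation of sections. Naturality along cartesian morphisms is exactly the square condition in $u$.

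We then identify the canonical map. Applying $\Hom$-elimination with motive $\hom_{Ba}(\overline{f(a)},g(a))$, and using $\mathsf{app}$ and $\Hom$-\textsc{Comp}, we follow the explicit formula for $j$ from the proof of $\Hom$-\textsc{Elim}. There $j$ sends $\tau$ to $D(\phi)$ applied to the identity, where $\phi=(\id,(\id,\tau),\id)$. Evaluated at $(\gamma,a)$, this gives the composite $\id\circ\tau_{\gamma,a}$, that is, the component $\tau_{\gamma,a}$. So the map is $\tau\mapsto(\tau_{\gamma})_a$, which is evidently injective. The inverse regroups a family $\sigma$ by $\gamma$.

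\emph{Main obstacle.} The hard part is the bookkeeping of variance. The variable $a$ enters through $\cxe^-$, so $\Gamma\ctxe^- A$ has cartesian rather than opcartesian lifts. We must therefore check that the restriction maps $P(u)$ of the $\Pi$-type (precomposition with $A(u)$ combined with the lax structure of $B$) match the transport in $B$ used on the right-hand side. Once the factorisation of morphisms in the contravariant Grothendieck construction is written out, the rest is routine, since both sides are subsets of the same product $\prod_{\gamma,a}\hom_{B(\gamma,a)}(F\gamma a,G\gamma a)$ cut out by the same equations.
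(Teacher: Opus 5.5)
Your proposal is correct and takes the paper's route. The paper only says the argument is analogous to Proposition \ref{prop:internal-nat-hom}: both sets of terms are identified, via the universal property of the pullback into the arrow-category interpretation of the twisted $\Hom$-type, with families of components satisfying naturality. Your vertical-then-cartesian factorization of morphisms in $\Gamma\ctxe^- A$ is exactly the bookkeeping that makes that analogy precise, matching naturality of each $\tau_\gamma$ and the square condition in $u$. One small caveat: the motive of $\Hom$-\textsc{Elim} must be an ordinary type, so the twisted $\hom_{Ba}(\overline{f(a)},g(a))$ is not literally admissible, and one would instead use an untwisted $\Hom$ together with the auxiliary parameter $X$. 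Semantically, though, the canonical map is indeed $\tau\mapsto\tau_{\gamma,a}$, as you say.
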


  Internalizing this result, we present the following axiom.
  \begin{definition}
    Let $\Gamma \vdash F,G:\prod_AB$ be two functions.
    The axiom \emph{directed function extensionality} states that given a term $\Gamma \cxe^- a :A \vdash t : \hom_{Ba}(\overline{F(a)}, G(a))$, we can form a new term $\Gamma \vdash \mathsf{dfunext}(t) : \hom_{\Pi_AB}(F,G)$.
  \end{definition}

  Note that the natural transformations are syntactically analogous to the homotopies $\IsTm[a:A]{t}{\Id(Fa,Ga)}$ in HoTT.
  However, unlike in that setting, we cannot form the type of all natural transformations by using a $\Pi$-type, since we only have $\Hom(\overline{Fa},Ga)\,\Ty_d$. The problem is resolved for functors into $\Set$, by uncurrying, as the next proposition shows.

  \begin{proposition}\label{prop:internal-nat-pi}
    Let $A$ be a type and let $F$ and $G$ be copresheaves from $A$, i.e.\ $a:A \vdash Fa : \Set$ and $a:A \vdash Ga : \Set$. The interpretation of the type
    \[ \Nat(F,G) \defeq \prod_{(a,x):\sum_{a:A}Fa}Ga \]
    is the set of natural transformations $F \to G$.
  \end{proposition}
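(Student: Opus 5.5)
The plan is to compute the interpretation of $\Nat(F,G)$ directly in the category model, in three stages: first the domain $\sum_{a:A}Fa$, then the $\Pi$-type over it, and finally a comparison with natural transformations.

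\emph{The domain.} Since we are in the empty context and $\diamond^\op \jeq \diamond$, the domain $\sum_{a:A}Fa$ is a legitimate type in context $\diamond^\op$, so the $\Pi$-type of Proposition \ref{prop:pi-types} can be formed. By the proof sketch of Proposition \ref{prop:Sigma}, $\Interp{\sum_{a:A}Fa}$ is the covariant Grothendieck construction $\Interp{A}\ctxe\Interp{F}$. Because $\Interp{F}$ lands in discrete categories, this is the category of elements $\mathrm{el}(F)$. Its objects are pairs $(a,x)$ with $x\in F(a)$. A morphism $(a,x)\to(a',x')$ is a map $f:a\to a'$ with $F(f)x=x'$. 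The type $Ga$, weakened to the context $(a,x):\sum_{a:A}Fa$, is interpreted as the functor $\mathrm{el}(F)\to\Cat$ sending $(a,x)$ to the discrete category $G(a)$.

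\emph{The $\Pi$-type.} Next I would identify $\Interp{\prod_{(a,x)}Ga}$ as the category of sections of the opfibration $\pi:\mathrm{el}(F)\ctxe G\to \mathrm{el}(F)$, whose morphisms are the natural transformations lying over identities. This is the standard interpretation of restricted $\Pi$-types from \cite{licata_2-dimensional_2011,neumannSynthetic1CategoriesDirected2025}.
\begin{itemize}
  \item At the level of objects, this description is exactly Proposition \ref{prop:pi-types} with $\Gamma=\diamond$. Indeed, $\Interp{\diamond\cxe^-\sum_{a:A}Fa}=(\mathrm{el}(F)^\op)^\op=\mathrm{el}(F)$.
  \item Terms of $Ga$ in that context are therefore sections $s$ of $\pi$. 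Concretely, each $s$ chooses $s(a,x)\in G(a)$ for every object.
  \item Functoriality of $s$, together with discreteness of the fibres, forces $G(f)(s(a,x))=s(a',x')$ whenever $F(f)x=x'$.
\end{itemize}
I expect this step to be the main obstacle. Proposition \ref{prop:pi-types} only records a bijection on terms, so the morphism part of the $\Pi$-type must be unwound from its actual semantic definition. If that definition is not directly available, I would recover the morphisms from terms in the context $\Delta^1$, using naturality of the isomorphism in $\Gamma$.

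\emph{Discreteness.} With this description, I would show that the category has only identity morphisms. A morphism $s\Rightarrow s'$ has components that are vertical morphisms in the fibres $G(a)$. These fibres are discrete, so every component is an identity, and hence $s=s'$. The interpretation is therefore a set, consistent with $\Nat(F,G):\Set$.

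\emph{Comparison with natural transformations.} Finally, I would construct mutually inverse maps between sections $s$ and natural transformations $\eta:F\to G$:
\[ s\mapsto \big(\eta_a(x)\defeq s(a,x)\big), \qquad \eta\mapsto \big(s(a,x)\defeq\eta_a(x)\big). \]
The functoriality condition $G(f)(s(a,x))=s(a',F(f)x)$ is literally the naturality square $G(f)\circ\eta_a=\eta_{a'}\circ F(f)$, evaluated at $x$. The two assignments are evidently inverse, which gives the required identification of $\Interp{\Nat(F,G)}$ with the set of natural transformations $F\to G$.
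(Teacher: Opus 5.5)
Your proof is correct, but after the first step it takes a different route from the paper's. Both arguments start the same way. Using $\diamond^\op \jeq \diamond$ and Proposition~\ref{prop:pi-types}, they reduce to terms of $Ga$ over $\Interp{\sum_{a:A}Fa}$, which is the category of elements of $\Interp{F}$, equivalently the context $a:A \cxe x:Fa$. The paper then continues with a chain of term bijections that stays inside the twisted machinery. It rewrites these as displayed terms $\Tm_d(a:A,\overline{Fa}\to Ga)$ via Proposition~\ref{prop:function-twist}. It then replaces the twisted function type by $\Hom(\overline{Fa},Ga)$ using directed univalence (Proposition~\ref{prop:dunivalence}). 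Finally it invokes Proposition~\ref{prop:internal-nat-hom}, where natural transformations appear as lifts into the arrow category. You instead check by hand that a section of the discrete opfibration $\mathrm{el}(F)\ctxe G \to \mathrm{el}(F)$ is exactly a family $s(a,x)\in G(a)$ satisfying the naturality equations. The paper's route is more synthetic. It reuses earlier results and shows, as a by-product, that the $\Pi$-type presentation $\Nat(F,G)$ agrees with the twisted-$\Hom$ presentation of natural transformations. However, it only identifies the global elements $\Tm(\diamond,\Nat(F,G))$, i.e.\ the objects of the interpreted category, and leaves discreteness implicit. Your route is elementary and avoids twists and directed univalence altogether. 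It also verifies that the interpreted category has only identity morphisms, so ``is the set'' holds literally. The price is that you need the morphism-level semantics of $\Pi$-types, which the paper records only through the term bijection. Your fallback, using naturality of that bijection in the semantic context $\Gamma$ evaluated at $\Delta^1$, does supply this.
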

  \begin{proof}
    Indeed, we have the following chain of isomorphisms.
    \begin{align*}
      \Tm\bigg(\diamond, \prod_{(a,x):\sum_{a:A}Fa}Ga\bigg) &\cong
      \Tm\big(\sum_{a:A}Fa, Ga\big)  && \text{by Proposition \ref{prop:pi-types}}\\[-.5em]
      & \cong
      \Tm(a:A \cxe x: Fa, Ga) && \text{by Proposition \ref{prop:Sigma}}\\
      & \cong
      \Tm_d(a:A, \overline{Fa} \to Ga) && \text{by Proposition \ref{prop:function-twist}} \\
      &\cong
      \Tm_d(a:A, \Hom(\overline{Fa}, Ga))  && \text{by Proposition \ref{prop:dunivalence}} \\
      &\cong \Nat(\Interp{F},\Interp{G}) && \text{by Proposition \ref{prop:internal-nat-hom}} \qedhere
    \end{align*}
  \end{proof}

  When working with copresheaves, we can also consider equalities of natural transformations. In the following, for ease of reading, for a natural transformation $\varphi : \Nat(F,G)$ we write $\varphi_a(x)$ for $\varphi(a,x)$.

  \begin{proposition}
    Let $A$ be a type, $F$ and $G$ be two copresheaves from $A$, and $\varphi,\psi:\Nat(F,G)$ be two natural transformations. Then, the judgement
    \[ a:A \cxe x:Fa \vdash t:\Hom_{Ga}(\varphi_a(x),\psi_a(x))\]
    typechecks and its interpretation is an equality of natural transformations $\varphi$ and $\psi$.
  \end{proposition}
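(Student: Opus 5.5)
The proof has two parts: first the judgement typechecks, then its interpretation is identified with an equality of natural transformations.

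\emph{Typechecking.} From $\varphi : \Nat(F,G)$, Proposition~\ref{prop:internal-nat-pi} unfolds $\Nat(F,G)$ as $\prod_{(a,x):\sum_{a:A}Fa}Ga$. Applying Proposition~\ref{prop:pi-types} (application) and then the $\Sigma$-structure of Proposition~\ref{prop:Sigma} gives a term $a:A \cxe x:Fa \vdash \varphi_a(x) : Ga$, and likewise for $\psi$. Since $Ga : \Set$ is discrete, we have $(Ga)^\op \jeq Ga$. Substituting $\varphi_a(x)$ into the contravariant slot and $\psi_a(x)$ into the covariant slot of $\Hom_{Ga}$ is therefore well typed, and no twist is needed. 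So $\Hom_{Ga}(\varphi_a(x),\psi_a(x))$ is an honest type in the category model, and it lies in $\Set$ by discreteness of $\Hom$-types.

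\emph{Semantics.} A term $t$ is a section of $\pi:\Interp{\Gamma'\cxe \Hom_{Ga}(\ldots)} \to \Interp{\Gamma'}$, where $\Gamma' \defeq a:A \cxe x:Fa$. By Proposition~\ref{prop:straight-icat}, $\Interp{\Gamma'}$ is the category of elements $\int \Interp{F}$, whose objects are pairs $(a,x)$ with $x\in F(a)$. The type is the functor sending $(a,x)$ to the discrete category $\hom_{G(a)}(\varphi_a(x),\psi_a(x))$. Because $G(a)$ is a discrete category, this hom-set is a singleton when $\varphi_a(x)=\psi_a(x)$ and empty otherwise.

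A section therefore exists if and only if $\varphi_a(x)=\psi_a(x)$ for every object $(a,x)$, and when it exists it is unique on objects. On morphisms there is no further choice, since the fibres are discrete and singletons, so the action is forced and functoriality is automatic. Hence the set $\Tm(\Gamma',\Hom_{Ga}(\varphi_a(x),\psi_a(x)))$ is a subsingleton that is inhabited exactly when $\varphi_a=\psi_a$ for all $a$, that is, when $\varphi=\psi$ as natural transformations. This is what the statement means by the interpretation being an equality of natural transformations.

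\emph{Main obstacle.} The delicate step is the typechecking. One must justify replacing the contravariant occurrence by a covariant term without invoking \textsc{Twist}, which rests on the judgemental equality $(Ga)^\op\jeq Ga$ for discrete types. One must also check that substituting $\varphi_a(x)$, a term over the context $\Gamma'$, into a slot of type $(Ga)^\op$ is legitimate. I would verify this directly in the category model: $\Interp{Ga}^\op = \Interp{Ga}$ strictly for discrete categories, so the reindexing is along an ordinary functor.
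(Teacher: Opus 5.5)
Your proposal is correct and follows essentially the same route as the paper. Typechecking comes from the judgemental equality $(Ga)^\op \jeq Ga$ for discrete types, which lets $\varphi_a(x)$ fill the contravariant slot of $\Hom_{Ga}$, and the semantic claim comes from the triviality of $\Hom$-types of sets, which forces $\varphi_a(x)=\psi_a(x)$ for all $(a,x)$; your extra detail (sections are unique and their action on morphisms is forced) just makes explicit what the paper leaves implicit.
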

  \begin{proof}
    Since $Ga$ is a set, we have $(Ga)^\op\jeq Ga$. Therefore, we have $a:A \cxe x : Fa \vdash \varphi_a(x) : (Ga)^\op$ and so the judgement in the proposition typechecks. Further, the $\Hom$-type of a set is trivial, hence we have that $\varphi_a(x)$ and $\psi_a(x)$ are equal for each $x:Fa$, i.e.\ $\varphi=\psi$.
  \end{proof}

  \begin{corollary}\label{cor:nat-isos}
    Let $A$ be a type and $F$ and $G$ be two copresheaves from $A$. Then the interpretation of a tuple $(\varphi,\psi,t,r)$ satisfying
    \begin{gather*}
      \varphi:\Nat(F,G) \  \psi:\Nat(G,F) \\
      a:A \cxe x : Fa \vdash t : \Hom_{Fa}(x, \psi_a(\varphi_a(x))) \\
      a:A \cxe y : Ga \vdash t : \Hom_{Ga}(\varphi_a(\psi_a(y)), y)
    \end{gather*}
    is an isomorphism of $F$ and $G$.
  \end{corollary}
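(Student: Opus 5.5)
The plan is to reduce the corollary to the preceding proposition, which says that a term of type $\Hom_{Ga}(\varphi_a(x),\psi_a(x))$ over $a:A \cxe x:Fa$ is interpreted as an equality of natural transformations. By Proposition \ref{prop:internal-nat-pi}, the terms $\varphi$ and $\psi$ are interpreted as natural transformations $\Interp{\varphi}:\Interp{F}\to\Interp{G}$ and $\Interp{\psi}:\Interp{G}\to\Interp{F}$. It therefore suffices to show that $\Interp{\psi}\circ\Interp{\varphi}=\id_{\Interp{F}}$ and $\Interp{\varphi}\circ\Interp{\psi}=\id_{\Interp{G}}$. We read the second $t$ in the statement as a second witness $r$.

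First, we construct syntactically the composites and identities as elements of $\Nat$. Using the uncurried description $\Nat(F,G)\defeq\prod_{(a,x):\sum_{a:A}Fa}Ga$, define $\psi\circ\varphi : \Nat(F,F)$ by $(a,x)\mapsto \psi(a,\varphi(a,x))$, and $\id : \Nat(F,F)$ by $(a,x)\mapsto x$. The one point to check is that these typecheck. Under the isomorphisms of Proposition \ref{prop:internal-nat-pi}, a term $\varphi$ corresponds to a term $a:A\cxe x:Fa \vdash \varphi_a(x):Ga$. Substituting this into $a:A \cxe y:Ga\vdash \psi_a(y):Fa$ gives a term in context $a:A\cxe x:Fa$. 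The substitution is legitimate because the variable $a$ is used covariantly throughout, so no twisting is needed. Abstracting via Propositions \ref{prop:Sigma} and \ref{prop:pi-types} then returns an element of $\Nat(F,F)$. Semantically, this term is the composite natural transformation, since the chain of isomorphisms in Proposition \ref{prop:internal-nat-pi} is natural in the context and sends substitution to composition of functors $\Interp{A}\ctxe\Interp{F}\to\ldots$. The identity corresponds to the projection of the second component.

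Next, we apply the preceding proposition with $G:=F$ and the pair of natural transformations $\id$ and $\psi\circ\varphi$. Since $(\psi\circ\varphi)_a(x)\jeq\psi_a(\varphi_a(x))$ and $\id_a(x)\jeq x$ by $\beta$-reduction, the term $t$ has exactly the type required by that proposition, so $\Interp{\id}=\Interp{\psi\circ\varphi}$. Symmetrically, $r$ gives $\Interp{\varphi\circ\psi}=\Interp{\id_G}$. One subtlety remains. In the proposition, the left endpoint $\varphi_a(x)$ must be regarded as a term of $(Fa)^\op$, which is justified because $Fa:\Set$ and hence $(Fa)^\op\jeq Fa$. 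The non-symmetric direction of $\Hom$ (from $x$ to $\psi\varphi x$ in one case, and towards $y$ in the other) is also harmless, because $\Hom$-types of sets are trivial and both directions yield equalities.

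The main obstacle is the semantic identification in the first step: that the syntactic composite built through the chain of isomorphisms of Proposition \ref{prop:internal-nat-pi} interprets as the composite of natural transformations. I would handle this by working at the level $\Tm(a:A\cxe x:Fa, Ga)$, where the terms are interpreted as functors $\Interp{A}\ctxe\Interp{F}\to\Interp{A}\ctxe\Interp{G}$ over $\Interp{A}$, and composition is simply composition of these functors. I would then check that the remaining isomorphisms in the chain (twisting, directed univalence, Proposition \ref{prop:internal-nat-hom}) send this composition to the vertical composite of the corresponding natural transformations. This last check is a componentwise verification: both sides send $(a,x)$ to $\Interp{\psi}_a(\Interp{\varphi}_a(x))$.
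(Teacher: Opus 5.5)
Your proposal is correct and follows the paper's intended argument. The corollary is an immediate consequence of the preceding proposition: because $\Hom$-types of sets are trivial, $t$ and $r$ become the componentwise equalities $\Interp{\psi}_a(\Interp{\varphi}_a(x))=x$ and $\Interp{\varphi}_a(\Interp{\psi}_a(y))=y$. Your explicit construction of $\psi\circ\varphi$ and $\id$ as terms of $\Nat(F,F)$, together with the check that substitution is interpreted as composition, is a careful spelling-out of that step.
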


  \subsection{Yoneda lemma}
  In this section, we sketch a proof of the Yoneda lemma. A complete proof with the omitted details can be found in Section \ref{subsec:yoneda-full} in the appendix.


  \begin{lemma}[Yoneda]\label{lemma:yoneda}
    Let $A:\UU$ be a type in the empty context. Then the two following functors are naturally isomorphic.
    \begin{align*}
      Y, \bar Y   & :(A \to \Set) \times A \to \Set             \\
      Y(F,a)      & \defeq \Nat(\Hom_A(a,-), F) \\
      \bar Y(F,a) & \defeq Fa
    \end{align*}
  \end{lemma}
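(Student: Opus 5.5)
We plan a synthetic proof that imitates the HoTT proof of Yoneda, with the final isomorphism certified by Corollary~\ref{cor:nat-isos}. Regard $Y$ and $\bar Y$ as copresheaves on $(A\to\Set)\times A$. Concretely, in context $F:A\to\Set \cxe a:A$ we have the types $Y(F,a)\defeq\Nat(\Hom_A(a,-),F):\Set$ and $\bar Y(F,a)\defeq Fa:\Set$. Both typecheck: $\Hom_A(a,-)$ is a copresheaf once $a:A^\op$, and $\Nat(\Hom_A(a,-),F)$ depends contravariantly on $a$ but lands in $\Set$. Since $\Set^\op\jeq\Set$, the contravariant occurrence is harmless and we can view $Y$ as depending covariantly on $a$. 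By Corollary~\ref{cor:nat-isos}, it suffices to give $\varphi:\Nat(Y,\bar Y)$ and $\psi:\Nat(\bar Y,Y)$, together with the two $\Hom$-witnesses of the round trips. Because the targets are sets, these witnesses are just pointwise equalities.

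\emph{Step 1: the maps.} We define $\varphi_{(F,a)}(\eta)\defeq\eta_a(\Refl_a)$, which evaluates $\eta$ at the pair $(a,\Refl_a)$ in $\sum_{b:A}\Hom_A(a,b)$. Here we use $\Hom$\textsc{-Intro} together with directed univalence (Proposition~\ref{prop:dunivalence}) to regard $\Refl_a$ as an element of the set $\Hom_A(a,a)$ at the diagonal. For $\psi$, given $x:Fa$ we need, for $b:A$ and $f:\Hom_A(a,b)$, an element of $Fb$. We obtain it by $\Hom$\textsc{-Elim}, taking $X\defeq Fa$ (with $F$ in $\Gamma$) and motive $D\defeq Fb$, which does not depend on $x$ contravariantly. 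The base case is $d\defeq x$, so $\psi_{(F,a)}(x)_b(f)\defeq j(f,x,x)$, i.e.\ transport of $x$ along $f$. Uncurrying via Propositions~\ref{prop:pi-types}, \ref{prop:function-twist} and \ref{prop:internal-nat-pi} turns these terms into elements of the $\Nat$-types. Naturality in $(F,a)$ is then automatic, since every term is interpreted as a functor.

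\emph{Step 2: the round trips.} For $\varphi\circ\psi$ we compute $\varphi(\psi(x))=j(\Refl_a,x,x)\jeq x$ by $\Hom$\textsc{-Comp}, so $\Refl$ provides the required witness. For $\psi\circ\varphi$ we must show $j(f,\eta_a(\Refl_a),\eta_a(\Refl_a))=\eta_b(f)$ for all $b$ and $f:\Hom_A(a,b)$. We prove this by a second $\Hom$\textsc{-Elim} on $f$ with $X\defeq Y(F,a)$, discrete and hence admissible. The motive is the (discrete) $\Hom$-type in $Fb$ between the two sides. At $f\jeq\Refl_a$ both sides reduce to $\eta_a(\Refl_a)$ by $\Hom$\textsc{-Comp}, so the base case is $\Refl$.

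\emph{Main obstacle.} We expect the variance bookkeeping in this last elimination to be the delicate point. The motive must be a type over $\Gamma\cxe b:A\cxe a:A\tscxe f:\Hom_A(\bar a,b)\cxe x:X^\op$, and the term $\eta_b(f)$ must be well typed there, with $\eta$ used covariantly in $f$. Our first attempt is to exploit discreteness. Both $Fb$ and $Y(F,a)$ are sets, so $X^\op\jeq X$, and the $\Hom$-type in $Fb$ satisfies $\Hom^\op\jeq\Hom$. Applying $\eta$ to $(b,f)$ should then be licensed by the twisted reading of $\Nat$ in Proposition~\ref{prop:internal-nat-pi}. If this fails syntactically, we fall back on the semantic argument: the required equality is an equality of functions between sets, and it follows from naturality of the interpreted $\eta$, via Proposition~\ref{prop:internal-nat-hom}.
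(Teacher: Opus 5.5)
Your plan has the same architecture as the paper's proof. It uses Corollary~\ref{cor:nat-isos}, defines $\Phi$ by evaluation at $(a,\Refl_a)$, and defines $\Psi$ by $\Hom$\textsc{-Elim} with $X\defeq Fa$, motive $Fb$ and base case $x$. It then uses $\Hom$\textsc{-Comp} for $\Phi\circ\Psi$ and a second $\Hom$\textsc{-Elim} with $X\defeq Y(F,a)$ and base case $\Refl$ for $\Psi\circ\Phi$.

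\textbf{The gap is the step you call ``uncurrying''.} $\Hom$\textsc{-Elim} delivers $j(f,x,x):Fb$ in the context $F:A\to\Set\cxe b:A\cxe a:A\tscxe f:\Hom(\bar a,b)\cxe x:Fa$. But $\Pi$-introduction (Proposition~\ref{prop:pi-types}) into $Y(F,a)=\prod_{(b,f):\sum_{b:A}\Hom(a,b)}Fb$ needs a term in $F:A\to\Set\cxe a:A\cxe x:Fa\cxe^-(b,f):\sum_{b:A}\Hom(a,b)$. None of the results you cite bridges these two contexts:
\begin{itemize}
\item Proposition~\ref{prop:function-twist} only trades an extension $\cxe a:A$ for a twisted function type over the \emph{same} context, and reorders nothing.
\item Proposition~\ref{prop:internal-nat-pi} describes, via closed terms, the interpretation of $\Nat(F,G)$ for fixed copresheaves on $A$. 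It is not a rule producing terms of $\Nat(\Hom_A(a,-),F)$ with $F$, $a$, $x$ free.
\end{itemize}
Your claim that naturality in $(F,a)$ is automatic presupposes that such a term already exists.

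The paper supplies the missing piece as Lemma~\ref{lemma:yoneda-perm-proof}. For closed $A$, it gives the context isomorphism $\Gamma\cxe b:A\cxe a:A\tscxe\Hom(\bar a,b)\cong\Gamma\cxe a:A\cxe^-(b,f):\sum_{b:A}\Hom(a,b)$; semantically, both sides are the arrow category. It is derived through the $\Sigma^\op$ and $\mathsf{swap}$ rules, opposite contexts, and the rule $\Gamma\cxe A\jeq\Gamma\cxe^- A$ for closed $A$. A permutation then moves $x:Fa$ in front of the $\cxe^-$-extension.

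The same lemma dissolves the ``main obstacle'' you flag: $\eta_b(f)$ only typechecks in the motive of the second elimination via this rearrangement. The resulting pointwise witness must then be turned by $\mathsf{dfunext}$ into the element of $\Hom_{Y(F,a)}(\eta,\psi(\varphi(\eta)))$ that Corollary~\ref{cor:nat-isos} requires. Your semantic fallback via naturality of $\eta$ is correct, but it abandons the synthetic argument exactly where this lemma is needed.

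\textbf{Two typing justifications are also wrong, although their conclusions hold.}
\begin{itemize}
\item $Y(F,a)$ is not ``contravariant in $a$, made harmless by $\Set^\op\jeq\Set$''. That is not a rule: only elements $T:\Set$ satisfy $T^\op\jeq T$. Self-duality of fibres never changes the variance in a base variable, or every presheaf would be a copresheaf. $Y$ is covariant in $a$ because the $\Pi$-domain $\sum_{b:A}\Hom(a,b)$ is formed over $(F\cxe a)^\op$, where $a:A^\op$ has exactly the variance $\Hom$ demands.
\item Directed univalence plays no role in forming $(a,\Refl_a)$. The paper obtains it from the $\Sigma^\op$-introduction rule, whose second component may be a displayed term of twisted type, here $\Refl_a\dvar\Hom(\bar a,a)$. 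This gives, over $(F\cxe a)^\op$, an argument of type $(\sum_{b:A}\Hom(a,b))^\op$ to which $\Pi$-elimination applies.
\end{itemize}
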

  \begin{proof}[Proof sketch]
    We use Corollary \ref{cor:nat-isos}. First, we define a natural transformation $\Phi:Y \to \bar Y$ by evaluation on $\Refl_a$, as follows.
    {
      \small
      \begin{mathpar}
        \inferrule*[Right=$\Pi$-Elim]
        {
          F:A \to \Set \cxe a :A \cxe \varphi: Y(F,a) \vdash (a,\Refl_a) \cvar \sum_{x:A}\hom(a,x)
        }
        {
          F:A \to \Set \cxe a :A \cxe \varphi: Y(F,a)\vdash \varphi_a(\Refl_a) : Fa
        }
      \end{mathpar}
    }
    The first step in the derivation of $\Psi:\bar Y \to Y$ is given by $\Hom$-elimation, as follows.
    {
      \small
      \begin{mathpar}
        \inferrule*[Right=$\Hom$-Elim]
        {
          F:A \to \Set \cxe a:A \vdash Fa\,\Ty \\\\
          F:A \to \Set \cxe x:A \cxe a:A \tscxe f : \hom(a,x), r:Fa \vdash Fx\,\Ty \\\\
          F:A \to \Set \cxe a :A \cxe r:Fa \vdash r : Fa
        }
        {
          F:A \to \Set \cxe x:A \cxe a:A \tscxe f : \hom(a,x) \cxe r:Fa \vdash j(f,r,r) : Fx
        }
      \end{mathpar}
    }%
    By rearranging the context, we obtain the following term.
    {
      \small
      \[
        F:A \to \Set \cxe a :A \cxe r:Fa \vdash \lambda (x,f) . j(f,r,r) : \prod_{(x,f):\sum_{x:A}\hom(a,x)}Fx
      \]
    }%
    That $\Phi \circ \Psi$ is the identity follows from the computation rule for $\Hom$-types, while $\Psi \circ \Phi = \id$ follows from another application of $\Hom$-elimination.
  \end{proof}

  \section{Conclusion and future work}\label{sec:future-work}
  We have introduced TTT as an extension of MLTT and showed that it allows for synthetic reasoning about categories.
  While we argued for the three desiderata given in the introduction, we emphasize that any such criteria will lead to different advantages and disadvantages.
  For example, a negative consequence of the desideratum (1), i.e.\ that closed types are interpreted as categories, is that not all $\Pi$-types will exist, and so the $\Pi$-formation rule has to be restricted, as we did in Proposition \ref{prop:pi-types}.
  Hence, we see our work as exploring and developing one of many different possibilities in the space of directed type theories.

  Nonetheless, we argue that the introduction of twisted types is an important development:
  They are linked to $\Sigma$-types and $\Pi$-types via Propositions \ref{prop:op-sigma} and \ref{prop:function-twist}, as well as to lax ends via Remark \ref{remark:lax-ends}.
  Further, the $\Hom$-elimination rule it allows is essentially a generalization of the Yoneda lemma.
  Indeed, careful inspection of the proof given in Lemma \ref{lemma:yoneda} reveals that the first step in the derivation of the natural transformation $\Psi: \bar Y \to Y$ already carries all the information of the Yoneda lemma, albeit in a disorganized way: the next steps in the proof are all invertible, and hence are just reorganizing the data.

  This being said, there are multiple avenues to explore which could improve TTT.
  For one, we need to develop a richer theory of D2SFibs.
  We expect to find a mathematical structure generalizing weak factorization systems, which would allow us to generalize North's \cite{north_type_2019} results about 2-sided fibrations to the dependent case.
  We expect that such results would clarify and refine our $\Hom$-elimination principle.
  An additional area to explore is the reintroduction of the $\mathsf{core}$-types from North \cite{north_towards_2019}.
  These types are useful as not every categorical construction is either covariant or contravariant. Indeed, in our statement of Yoneda we required that $A$ be a closed type precisely for this reason.
  Furthermore, the rules for permuting contexts involving twisted types could be improved.
  Our proof of the Yoneda lemma uses such a permutation, but our current proof of its correctness is quite technical (see Lemma \ref{lemma:yoneda-perm-proof}).
  Finally, future work should explore the higher dimensional models of TTT, and study the properties the twisted types have in this context.



  \bibliography{references}

  \appendix

  \section{Complete description of TTT}\label{sec:add-rules}
  Twisted Type Theory has two copies of MLTT, see e.g.\ Hofmann \cite{hofmannSyntaxSemanticsDependent1997} for survey of MLTT.
  Of note is that we assume the structural rules, i.e.\ weakening, substitution, and permutation; as they would not be admissible given our following rules.
  We write $\Gamma \vdash A\,\Ty$, $\Gamma \vdash a : A$ and $\Gamma \cxe a: A$ (resp. $\Gamma \vdash A\,\Ty_d$, $\Gamma \vdash a \dvar A$ and $\Gamma \cxe^d a: A$) for the typing judgement, term-in-type judgement, and context extension operation respectively of the first MLTT copy (resp. second MLTT copy).
  These two copies of MLTT are related via the following rules.

  \begin{figure}[h]
    \begin{rules}
      \begin{mathparpagebreakable}
        \inferrule*[right=$\Ty$\textnormal{\textsc{-Disp}}]{
          \Gamma \vdash A\,\Ty
        }{
          \Gamma \vdash A\,\Ty_d
        } \and
        \inferrule*[right=\textnormal{\textsc{Ctx-Ext-Disp}}]{
          \Gamma \vdash A\,\Ty
        }{
          \Gamma \cxe^d a : A \jeq \Gamma \cxe a : A
        } \and
        \inferrule*[right=\textnormal{\textsc{Twist}}]{
          \Gamma \cxe b:B^\op \vdash C\,\Ty
        }{
          \Gamma \cxe b:B \vdash \mathsf{Tw}_b(C) \,\Ty_d
        } \and
        \inferrule*[right=\textnormal{\textsc{Twist-Weak}}]{
          \Gamma \vdash B\,\Ty \and \Gamma \vdash C\,\Ty
        }{
          \Gamma \cxe b:B \vdash \mathsf{Tw}_b(C) \jeq C\,\Ty_d
        }
      \end{mathparpagebreakable}
    \end{rules}
    \caption{Rules about displayed types and twists.}
  \end{figure}

  We assume two countable hierarchies of Coquand universes \cite{coquandPresheafModelType2013,gratzer_multimodal_2021}.
  We write $\UU_i$ for the universes of types and $\Set_i$ for the universes of discrete types.
  \cite{kovacs_generalized_2022}.
  These universes are related by a strict family inclusion from the discrete types in $\Set_i$ to the base types in $\UU_i$, which commutes with the strict family inclusion in each hierarchy.
  As usual, we use these universes as if they were Russell universes, and leave the indices implicit.
  The discrete types $\Gamma \vdash A:\Set$ are postulated to satisfy $\Gamma \vdash A^\op \jeq A : \Set$.

  We additionally have rules concerning opposite types and opposite contexts. These had already appeared in other directed type theories in the literature \cite{licata_2-dimensional_2011,neumannSynthetic1CategoriesDirected2025}.


  \begin{figure}[h]
    \begin{rules}
      \begin{mathparpagebreakable}
        \inferrule*[]{
          \IsTy{A}
        }{
          \IsTy{A^\op}
        } \and
        \inferrule*[]{
          \IsTy{A}
        }{
          \EqTy{(A^\op)^\op}{A}
        } \\
        \inferrule*[]
        {
          \Gamma\,\Ctx
        }
        {
          \Gamma^\op\,\Ctx
        }
        \and
        \inferrule*[]
        {
          \Gamma\,\Ctx
        }
        {
          (\Gamma^\op)^\op \jeq \Gamma \,\Ctx
        } \and
        \inferrule*[]{
        }{
          \diamond^\op \jeq \diamond
        } \and
        \inferrule*[]{
          \diamond \vdash A\ \Ty \and \Gamma\,\Ctx
        }{
          \Gamma \cxe A \jeq \Gamma \cxe^- A
        }
      \end{mathparpagebreakable}
    \end{rules}
    \caption{Rules about opposite types and contexts.}
  \end{figure}

  \begin{notation}
    Let $\Gamma^\op \vdash B\,\Ty$ and $\Gamma \cxe b:B^\op \vdash C\,\Ty$ be types.
    \begin{enumerate}
      \item Other than in the $\textsc{Twist}$ and $\textsc{Twist-Weak}$ rules, we write $C[\bar b/b]$ for $\mathsf{Tw}_b(C)$.
      \item We write $\Gamma \cxe^- B$ for the context $(\Gamma^\op \cxe B^\op)^\op$.
      \item We write $\Gamma \cxe b:B \tscxe c : C[\bar b/b]$ for the context $\Gamma \cxe b:B \cxe^d c : C[\bar b/b]$.
        This notation has as presupposition that $\Gamma \cxe b:B^\op \vdash C\,\Ty$ and so, e.g., won't be used after a (non-trivial) substitution.
      \item We write $\sum_{b:B}^\tscxe c : C[\bar b/b]$ for the type $(\sum_{b:B^\op}C^\op)^\op$. We only use this in the appendix.
      \item We write $X \to Y$ and $X \times Y$ instead of $\Pi_{x:X}Y$ and $\Sigma_{x:X}Y$, respectively, when $Y$ does not depend on $X$.
    \end{enumerate}
  \end{notation}

  Finally, TTT includes rules for $\Sigma$, $\Pi$, and $\Hom$-types.
  These are given in the Figures \ref{fig:sigma}, \ref{fig:pi}, and \ref{fig:hom} respectively.
  This concludes the presentation of TTT.

  \begin{remark}
    The rules about $\Sigma$-types and $\Pi$-types not involving twists or opposites have been adapted from previous work in the literature \cite{licata_2-dimensional_2011,neumannSynthetic1CategoriesDirected2025}.
  \end{remark}

  \begin{remark}
    The last two rules for $\Sigma$-types in Figure \ref{fig:sigma}, involving the $\mathsf{swap}$ and $\mathsf{swap}^{-1}$ operations, are only used in the proof Lemma \ref{lemma:yoneda-perm-proof}.
    They are a technical tool to facilitate rearrangements of the data in contexts with twisted types.
  \end{remark}


  \begin{figure}[h!]
    \begin{rules}
      \begin{mathparpagebreakable}
        \inferrule*[]{
          \Gamma \vdash A\,\Ty \and \Gamma \cxe a : A \vdash B\,\Ty
        }{
          \Gamma \vdash \Sigma_{a:A}B \,\Ty
        }  \and
        \inferrule*[]{
          \Gamma \vdash A\,\Ty \and \Gamma \cxe a : A \vdash B\,\Ty \and \Gamma \vdash p : \Sigma_{a:A}B
        }{
          \Gamma \vdash \pi_1(p) : A \and \Gamma \vdash \pi_2(p) : B[\pi_1(p)/a] \and \Gamma \vdash (\pi_1(p),\pi_2(p)) \jeq p : \Sigma_{a:A}B
        } \and
        \inferrule*[]{
          \Gamma \vdash A\,\Ty \and \Gamma \cxe a : A \vdash B\,\Ty \and \Gamma \vdash u : A \and \Gamma \vdash v : B
        }{
          \Gamma \vdash (u,v) : \Sigma_{a:A}B \and
          \Gamma \vdash \pi_1(u,v) \jeq u : A \and \Gamma \vdash \pi_2(u,v) \jeq v : B[u/a]
        } \and
        \inferrule*[]{
          \Gamma \vdash A\,\Ty \and \Gamma \cxe a : A \vdash B\,\Ty \and \Gamma \vdash p : (\Sigma_{a:A}B)^\op
        }{
          \Gamma \vdash \pi_1(p) : A^\op \and \Gamma \vdash \pi_2(p) : B^\op[\overline{\pi_1(p)}/a] \and \Gamma \vdash (\pi_1(p),\pi_2(p)) \jeq p : \Sigma_{a:A}B
        } \and
        \inferrule*[]{
          \Gamma \vdash A\,\Ty \and \Gamma \cxe a : A \vdash B\,\Ty \and \Gamma \vdash u : A^\op \and \Gamma \vdash v : B^\op[\bar u/a]
        }{
          \Gamma \vdash (u,v) : (\Sigma_{a:A}B)^\op \and
          \Gamma \vdash \pi_1(u,v) \jeq u : A^\op \and \Gamma \vdash \pi_2(u,v) \jeq v : B^\op[\bar u/a]
        }\and
        \inferrule*[]{
          \Gamma \vdash X,Y\,\Ty \and \Gamma \cxe x : X \cxe y : Y^\op \vdash Z\,\Ty \and \Gamma \vdash p : (\Sigma_{x:X}\Sigma_{y:Y}^\tscxe Z(x,\bar y))^\op
        }{
          \Gamma \vdash \mathsf{swap}(p) : \Sigma_{y:Y^\op}\Sigma_{x:X^\op}^\tscxe Z(\bar x,y)^\op
          \and
          \Gamma \vdash \mathsf{swap}^{-1}(\mathsf{swap}(p) ): (\Sigma_{x:X}\Sigma_{y:Y}^\tscxe Z(x,\bar y))^\op
        } \and
        \inferrule*[]{
          \Gamma \vdash X,Y\,\Ty \and \Gamma \cxe x : X \cxe y : Y^\op \vdash Z\,\Ty \and \Gamma \vdash q : \Sigma_{y:Y^\op}\Sigma_{x:X^\op}^\tscxe Z(\bar x,y)^\op
        }{
          \Gamma \vdash \mathsf{swap}^{-1}(q) : (\Sigma_{x:X}\Sigma_{y:Y}^\tscxe Z(x,\bar y))^\op
          \and
          \Gamma \vdash \mathsf{swap}(\mathsf{swap}^{-1}(q)): \Sigma_{y:Y^\op}\Sigma_{x:X^\op}^\tscxe Z(\bar x,y)^\op
        }
      \end{mathparpagebreakable}
    \end{rules}
    \caption{Rules about $\Sigma$-types.}
    \label{fig:sigma}
  \end{figure}

  \begin{figure}[h!]
    \begin{rules}
      \begin{mathparpagebreakable}
        \inferrule*[]{
          \Gamma^\op \vdash A\,\Ty \and \Gamma \cxe^- a : A \vdash B\,\Ty
        }{
          \Gamma \vdash \Pi_{a:A}B \,\Ty
        }  \and
        \inferrule*[]{
          \Gamma^\op \vdash A\,\Ty \and \Gamma \cxe^- a : A \vdash B\,\Ty \and \Gamma \vdash f : \Pi_{a:A}B
        }{
          \Gamma \cxe^- a : A\vdash f(a) : B \and
          \Gamma \vdash \lambda a.f(a) \jeq f : \Pi_{a:A}B
        } \and
        \inferrule*[]{
          \Gamma^\op \vdash A\,\Ty \and \Gamma \cxe^- a : A \vdash B\,\Ty \and \Gamma \cxe^- a : A \vdash b : B
        }{
          \Gamma \vdash \lambda a. b : \Pi_{a:A}B \and
          \Gamma \cxe^- a : A \vdash (\lambda a.b)(a) \jeq b : B
        } \and
        \inferrule*[]{
          \Gamma \vdash A\,\Ty \and \Gamma \vdash B\,\Ty \and \Gamma \vdash f \dvar \bar A \to B
        }{
          \Gamma \cxe a : A  \vdash f(a) : B \and \Gamma \vdash \lambda a.f(a) \jeq f : \bar A \to B
        } \and
        \inferrule*[]{
          \Gamma \vdash A\,\Ty \and \Gamma \vdash B\,\Ty \and \Gamma \cxe a :A \vdash b : B
        }{
          \Gamma \vdash \lambda a. b : \bar A \to B \and
          \Gamma \cxe a : A \vdash (\lambda a.b)(a) \jeq b : B
        }
      \end{mathparpagebreakable}
    \end{rules}
    \caption{Rules about $\Pi$-types.}
    \label{fig:pi}
  \end{figure}

  \begin{figure}[h!]
    \begin{rules}
      \begin{mathparpagebreakable}
        \inferrule*[right=$\Hom$\textnormal{\textsc{-Form}}]{
          \IsTy{A}
        }{
          \Gamma \cxe b:A \cxe a:A^\op \vdash \Hom_{A}(a,b) : \Set
        }  \and
        \inferrule*[right=$\Hom$\textnormal{\textsc{-Intro}}]{
          \Gamma \vdash A\,\Ty
        }{
          \Gamma, a :A \vdash \Refl_a \dvar \Hom_A(\bar a, a)
        } \and
        \inferrule*[right=$\Hom$\textnormal{\textsc{-Elim}}]
        {
          \Gamma \vdash A\,\Ty \\\\
          \Gamma \cxe a:A \vdash X\,\Ty \\\\
          \Gamma \cxe b: A \cxe a:A \tscxe f : \Hom_A(\bar a,b) \cxe x : X^\op\vdash D\,\Ty \\\\
          \Gamma \cxe a:A \cxe x:X \vdash d \dvar D[\bar{a}/b,\Refl_A/f, \bar{x}/x]
        }
        {
          \Gamma \cxe b: A \cxe a:A \tscxe f : \Hom_A(\bar a,b) \cxe x : X \vdash j(f,x,d) \dvar D
        } \and
        \inferrule*[right=$\Hom$\textnormal{\textsc{-Comp}}]
        {
          \Gamma \vdash A\,\Ty \\\\
          \Gamma \cxe a:A \vdash X\,\Ty \\\\
          \Gamma \cxe b: A \cxe a:A \tscxe f : \Hom_A(\bar a,b) \cxe x : X^\op\vdash D\,\Ty \\\\
          \Gamma \cxe a:A \cxe x:X \vdash d \dvar D[\bar{a}/b,\Refl_A/f, \bar{x}/x]
        }
        {
          \Gamma \cxe a:A \cxe x:X \vdash j(\Refl_a,x,d) \jeq d \dvar D[\bar{a}/b,\Refl_A/f, \bar{x}/x]
        }
      \end{mathparpagebreakable}
    \end{rules}
    \caption{Rules about $\Hom$-types.}
    \label{fig:hom}
  \end{figure}


  \section{Straightening and unstraightening for D2SFibs}\label{sec:staightd2sfibs}
  In this section we give a complete proof of the equivalence between dependent profunctors and D2SFibs, stated in Proposition \ref{prop:straight-d2sfibs}.
  For the rest of this section, let $B: A \to \Cat$ be an indexed category.

  \subsection{The unstraightening functor}
  We first show that unstraightening of a dependent profunctor is indeed a D2SFib.
  \begin{proposition}
    Let $C:A \ctxe B^\op \to \Cat$ be a dependent profunctor, then the canonical projection $\pi_2:A \ctxe B \tye C \to A \ctxe B$ is a D2SFib.
  \end{proposition}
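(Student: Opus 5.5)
We verify the clauses of Definition \ref{def:d2sfib} directly, using the explicit description of $A \ctxe B \tye C$ given after Definition \ref{def:unstraightening-dprof}. Its objects are triples $(a,b,c)$, and its morphisms are triples $(\alpha,\beta,\theta)$ with $\theta: C(\alpha,\id)c \to C(\id,\beta)c'$. The functor $q$ is $\pi_2$, so $p = \pi_A\circ\pi_2$ is the projection $(a,b,c)\mapsto a$.

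\emph{Local fibration structure.} First compute the fibre over a fixed $a$. It has objects $(b,c)$ with $b\in B(a)$ and $c\in C(a,b)$. Its morphisms are pairs $(\beta,\theta)$ with $\beta: b\to b'$ and $\theta: c \to C(\id_a,\beta)c'$, recalling that $C(a,-)$ is contravariant in $b$. This is exactly the contravariant Grothendieck construction $B(a)\ctxe^- C(a,-)$, so $q_{|a}$ is a split fibration. The cartesian lift of $\beta: b\to b'$ at $(b',c')$ is $(\beta,\id): (b, C(\id,\beta)c') \to (b',c')$. Splitness follows from functoriality of $C$.

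\emph{Opfibration structure.} Given $\alpha: a\to a'$ and an object $(a,b,c)$, take the lift $(\alpha,\id_{B(\alpha)b},\id): (a,b,c)\to (a',B(\alpha)b, C(\alpha,\id_{B(\alpha)b})c)$. Here $(\alpha,\id)$ is a morphism $(a,b)\to(a',B(\alpha)b)$ in $A\ctxe B^\op$, since $B^\op$ has the same reindexing on objects. To check opcartesianness, take a morphism $(\alpha'\circ\alpha,\beta,\theta)$ out of $(a,b,c)$ into $(a'',b'',c'')$. Factor $\beta: B(\alpha'\alpha)b\to b''$ uniquely as the second component of a morphism from $(a',B(\alpha)b)$, using strictness $B(\alpha'\alpha)=B(\alpha')B(\alpha)$. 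Factor $\theta$ using strict functoriality of $C$, namely $C(\alpha'\alpha,\id) = C(\alpha',\id)C(\alpha,\id)$ with the appropriate identities. Splitness again follows from strict functoriality of $B$ and $C$.

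\emph{Conditions 5 and 6.} Condition 5 is immediate: $q$ sends the chosen opcartesian lift $(\alpha,\id,\id)$ to $(\alpha,\id)$, which is the chosen opcartesian lift for $\pi: A\ctxe B\to A$.

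For condition 6, compute both sides for $e=(a,b',c')$, $\beta: b\to b'$ and $\alpha: a\to a'$:
\begin{itemize}
\item $\alpha_!\beta^*e = (a', B(\alpha)b, C(\alpha,\id)C(\id,\beta)c')$;
\item $(B(\alpha)\beta)^*\alpha_!e = (a', B(\alpha)b, C(\id,B(\alpha)\beta)C(\alpha,\id)c')$.
\end{itemize}
Both third components equal $C((\alpha,\id)\circ(\id,\beta))c'$, taken in two ways. The composites $(\alpha,\id)\circ(\id,\beta)$ and $(\id,B(\alpha)\beta)\circ(\alpha,\id)$ agree as morphisms $(a,b')\to(a',B(\alpha)b)$ in $A\ctxe B^\op$, so by strict functoriality of $C$ the two objects are equal. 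The canonical comparison morphism is then forced to be $(\id,\id,\id)$.

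The main obstacle is bookkeeping: tracking composition in $A\ctxe B^\op$, and confirming that the morphism description of $A\ctxe B\tye C$ really arises from unwinding $\bigl(\sum_{b:B^\op}C^\op\bigr)^\op$ using the $\Sigma$-type formula in Proposition \ref{prop:Sigma}. The delicate points are the two double opposites. I would first redo that unwinding carefully, and after that every step is a direct computation.
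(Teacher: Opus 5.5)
Your proposal is correct and follows essentially the paper's proof: you check clauses (3)--(6) of Definition \ref{def:d2sfib} directly with the same chosen lifts $(\id_a,\beta,\id)$ and $(\alpha,\id_{B(\alpha)b},\id)$, and you get clause (6) from the same strict-functoriality computation showing $\alpha_!\beta^*e=(B(\alpha)\beta)^*\alpha_!e$. Identifying the fibre over $a$ with $B(a)\ctxe^- C(a,-)$ is just a tidier justification of the local fibration property; one small caveat is that equality of the two objects does not by itself force the comparison map to be an identity, so you should also check that the induced map $\alpha_!\beta^*e\to\alpha_!e$ is $(\id_{a'},B(\alpha)\beta,\id)$, the chosen cartesian lift, a step the paper likewise leaves implicit.
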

  \begin{proof}
    We show that conditions (3) to (6) of Definition \ref{def:d2sfib} are satisfied.
    \begin{enumerate}
        \setcounter{enumi}{2}
      \item We show that for each $a \in A$, the functor $(\pi_2)_{|a}: (A \ctxe B \tye C)_a \to B(a)$ is a fibration.
        Let $\beta: b \to (\pi_2)_{|a}(a,b',c)$ be a morphism in $B(a)$. We define the opcartesian lift of $\beta$ to be
        \[ \beta^* \defeq (\id_a, \beta, \id_{C(\id_a,\beta)c}): (a,b,C(\id_a,\beta)c) \to (a,b',c). \]
      \item We show that $\pi:A \ctxe B \tye C \to A$ is an opfibration. Let $\alpha:\pi(a,b,c) \to a'$ be a morphism in $A$. We define the opcartesian lift of $\alpha$ to be
        \[ \alpha_! \defeq (\alpha, \id_{B(\alpha)b}, \id_{C(\alpha,\id_{B(\alpha)b})c}) : (a,b,c) \to (a',B(\alpha)b,C(\alpha,\id_{B(\alpha)b})c). \]
      \item That $\pi_2$ is a cartesian functor follows from the following equation.
        \[ \pi_2(\alpha, \id_B(\alpha)b, \id_{C(\alpha,\id_{B(\alpha)b})c})=(\alpha, \id_B(\alpha)b)\]
      \item For $e\defeq(a,b,c)$ in $A \ctxe B \tye C$, $\alpha:\pi_{A}(e) \to a'$ and $\beta: b' \to (\pi_2)_{|a}(e)$, we have
        \[ \alpha_! \beta^* e = (a',B(\alpha)b', C(\alpha,B(\alpha)\beta)c) = (B(\alpha)\beta)^*\alpha_!e \qedhere\]
    \end{enumerate}
  \end{proof}

  Next, we show the functorial action of unstraightening.

  \begin{proposition}
    Unstraightening of dependent profunctors (Definition \ref{def:unstraightening-dprof}) extends to a functor $\Phi:[A \cxe B^\op, \Cat] \to \DTSFib(A,B)$.
  \end{proposition}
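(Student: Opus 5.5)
We already know from the preceding proposition that $\Phi$ is well defined on objects: for $C : A \ctxe B^\op \to \Cat$ we set $\Phi(C) \defeq \pi_2 : A \ctxe B \tye C \to A \ctxe B$. It remains to define $\Phi$ on morphisms and check functoriality. Take a natural transformation $\eta : C \Rightarrow C'$ between dependent profunctors. We define $\Phi(\eta) : A \ctxe B \tye C \to A \ctxe B \tye C'$ as the identity on the first two components and $\eta$ on the last. On objects this is $(a,b,c) \mapsto (a,b,\eta_{(a,b)}c)$. On a morphism $(\alpha,\beta,\theta) : (a,b,c) \to (a',b',c')$ it is $(\alpha,\beta,\theta')$, where
\[ \theta' \defeq \eta_{(a',B(\alpha)b)}(\theta) : \eta\big(C(\alpha,\id_{B(\alpha)b})c\big) \to \eta\big(C(\id_{a'},\beta)c'\big). \]

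The first thing to check is that $\theta'$ has the correct domain and codomain. By naturality of $\eta$ with respect to the morphisms $(\alpha,\id)$ and $(\id,\beta)$ of $A \ctxe B^\op$, the domain equals $C'(\alpha,\id)(\eta_{(a,b)}c)$ and the codomain equals $C'(\id,\beta)(\eta_{(a',b')}c')$, as required. Preservation of identities and composition in $A \ctxe B \tye C$ then follows from two facts. First, each component $\eta_{(a,b)}$ is a functor. Second, naturality of $\eta$ strictly intertwines the reindexing functors $C(\alpha,\beta)$ and $C'(\alpha,\beta)$, and these are exactly what appear in the composition formula of the Grothendieck construction. By construction $\pi_2 \circ \Phi(\eta) = \pi_2$.

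Next we check that $\Phi(\eta)$ is a cartesian functor of D2SFibs in the sense of Definition~\ref{def:d2sfib-1cells}. Using the explicit lifts from the preceding proof, we need the following. Applying $\Phi(\eta)$ to the opcartesian lift $(\alpha,\id,\id)$ at $(a,b,c)$ must give the opcartesian lift at $(a,b,\eta c)$; this amounts to $\eta C(\alpha,\id)c = C'(\alpha,\id)\eta c$ together with $\eta(\id)=\id$. Similarly, applying $\Phi(\eta)$ to the cartesian lift $(\id,\beta,\id)$ must give the cartesian lift, which uses $\eta C(\id,\beta)c = C'(\id,\beta)\eta c$. Both hold by strict naturality. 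Since the D2SFibs are split and the lifts are specified, we prove preservation on the nose rather than merely up to isomorphism. Finally, $\Phi(\id)=\id$ and $\Phi(\eta'\circ\eta)=\Phi(\eta')\circ\Phi(\eta)$ hold componentwise.

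The main obstacle is the bookkeeping in the first step: showing that $\eta$ applied to the third component $\theta$ lands in the correct hom-set and respects the composition law of $A \ctxe B \tye C$. That law is defined through the opposite-$\Sigma$ description in Definition~\ref{def:unstraightening-dprof} and involves reindexing along both $\alpha$ and $\beta$. To handle it, we first note that $\Phi(\eta)$ is exactly the interpretation of the functorial action of $(\sum_{b:B^\op}(-)^\op)^\op$ followed by $A \ctxe -$. Since $\op$, $\Sigma$ (Proposition~\ref{prop:Sigma}) and the Grothendieck construction are each strictly functorial in strict natural transformations, functoriality of $\Phi$ then follows at once. Only the cartesianness check remains to be done by hand.
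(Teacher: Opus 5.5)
Your proposal is correct and follows essentially the same route as the paper. You define $\Phi(\eta)$ as the identity on the first two components and $\eta_{(a',B(\alpha)b)}$ on the third, obtain functoriality from strict naturality, and check that the specified lifts $(\alpha,\id,\id)$ and $(\id,\beta,\id)$ are sent to specified lifts; your extra remark about the functorial actions of opposites, $\Sigma$ and the Grothendieck construction is only a more conceptual packaging of the paper's ``immediate by naturality.''
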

  \begin{proof}
    Given dependent profunctors $C,D:A \ctxe B^\op \to \Cat$ and a natural transformation $\varphi:C \to D$, we now construct a functor $\Phi(\varphi):A \ctxe B \tye C \to A \ctxe B \tye D$.
    On objects, it maps $(a,b,c) : A \ctxe B \tye C$ to $(a,b,\varphi_{(a,b)}(c)) : A \ctxe B \tye D$.
    On morphisms, it maps $(\alpha,\beta,\theta):(a,b,c) \to (a',b',c')$ to $(\alpha,\beta,\varphi_{(a',B(\alpha)b)}\theta):(a,b,\varphi_{(a,b)}(c)) \to (a',b',\varphi_{(a,b)}(c'))$. Functoriality is immediate by naturality.
    We now show that $\Phi(\varphi)$ is a cartesian functor of D2SFib, i.e.\ that it satisfies the two conditions of \ref{def:d2sfib-1cells}
    \begin{enumerate}
      \item We show that $\Phi(\varphi)$ is a cartesian functor of opfibrations.
        Indeed, the opcartesian morphism
        \[ (\alpha, \id_{B(\alpha)b}, \id_{C(\alpha,\id_{B(\alpha)b})c}) : (a,b,c) \to (a',B(\alpha)b,C(\alpha,\id_{B(\alpha)b})c)\]
        in $A \ctxe B \tye C$ is mapped to
        \[ (\alpha, \id_{B(\alpha)b}, \id_{D(\alpha,\id_{B(\alpha)b})\Phi(\varphi)c}) : (a,b,\Phi(\varphi)c) \to (a',B(\alpha)b,D(\alpha,\id_{B(\alpha)b})\Phi(\varphi)c),\]
        which is also opcartesian.
      \item We show that for each $a \in A$, $\Phi(\varphi)_{|a}$ is a cartesian functor.
        Indeed, the cartesian morphism
        \[ (\id_a, \beta, \id_{C(\id_a,\beta)c}): (a,b,C(\id_a,\beta)c) \to (a,b',c)$ is mapped by $\Phi(\varphi) \]
        in $(A \ctxe B \tye C)_a$ is mapped to
        \[ (\id_a, \beta, \id_{D(\id_a,\beta)\Phi(\varphi)c}): (a,b,D(\id_a,\beta)\Phi(\varphi)c) \to (a,b',\Phi(\varphi)c), \]
        which is also cartesian.
    \end{enumerate}
    Functoriality of $\Phi$ is immediate from the definitions.
  \end{proof}

  \subsection{The straightening functor}
  We begin with a simple lemma.

  \begin{lemma}\label{d2sfib:lemma}
    Let $q:C \to A \ctxe B$ be a D2SFib, and let $\alpha:a \to a'$ in $A$ and $\beta: b \to b'$ in $B(a)$. Then the following diagram commutes.
    \[
      \begin{tikzcd}[column sep=4.5em]
        {C_{(a,b')}} & {C_{(a,b)}} \\
        {C_{(a',B\alpha b')}} & {C_{(a',B\alpha b)}}
        \arrow["{\beta^*}", from=1-1, to=1-2]
        \arrow["{\alpha_!}"', from=1-1, to=2-1]
        \arrow["{\alpha_!}", from=1-2, to=2-2]
        \arrow["{(B(\alpha)\beta)^*}"', from=2-1, to=2-2]
    \end{tikzcd}\]
  \end{lemma}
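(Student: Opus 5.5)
The plan is to reduce the commutativity of the square of functors to condition (6) of Definition \ref{def:d2sfib}, together with the uniqueness parts of the universal properties of (op)cartesian lifts. First we fix what the four functors are. The fibre $C_{(a,b)}$ is the subcategory of $C$ lying over $(a,b)$ via $q$, with morphisms mapped to $(\id_a,\id_b)$. Since $q_{|a}$ is a split fibration, $\beta^*$ is the reindexing functor given by the chosen cartesian lifts. Since $p=\pi_A\circ q$ is a split opfibration and $q$ is a cartesian functor of opfibrations (condition (5)), the opcartesian lift $e\to\alpha_!e$ of $\alpha$ lies over the chosen opcartesian lift $(a,b)\to(a',B(\alpha)b)$ in $A\ctxe B$. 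Hence $\alpha_!$ restricts to a functor $C_{(a,b)}\to C_{(a',B\alpha b)}$. On morphisms, both functors are defined by the usual universal-property recipe.

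The check on objects comes next. For $e\in C_{(a,b')}$ we have $qe=(a,b')$, and $\beta:b\to b'=q_{|a}e$ is a morphism in $B(pe)$. Condition (6) says exactly that the canonical comparison $\alpha_!\beta^*e\to(B(\alpha)\beta)^*\alpha_!e$ is an identity. So the two composites agree on objects.

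For morphisms, take $u:e\to e'$ in $C_{(a,b')}$. The aim is to show that $\alpha_!\beta^*u$ and $(B(\alpha)\beta)^*\alpha_!u$ are both the unique morphism $\alpha_!\beta^*e\to\alpha_!\beta^*e'$ satisfying one fixed condition. The condition is that it lies over the identity of $(a',B\alpha b)$ and makes a specific square commute. One side of that square is the composite $\beta^*e\to e\to\alpha_!e$, that is, the cartesian lift followed by the opcartesian lift. This composite equals the composite $\beta^*e\to\alpha_!\beta^*e\to\alpha_!e$, because condition (6) identifies the comparison map used to build it with an identity. Then $\alpha_!\beta^*u$ is characterised by opcartesianness of $\beta^*e\to\alpha_!\beta^*e$ (with respect to $p$). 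The other, $(B(\alpha)\beta)^*\alpha_!u$, is characterised by cartesianness of $\alpha_!\beta^*e'=(B\alpha\beta)^*\alpha_!e'\to\alpha_!e'$ in the fibre over $a'$. We then verify that each of the two morphisms satisfies both characterisations. Uniqueness then forces them to be equal.

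I expect the main obstacle to be this morphism check. It needs care because the cartesian lifts are cartesian only within the fibres $C_a$ and $C_{a'}$, not in $C$ globally. So we must make sure every factorisation we invoke takes place inside a single fibre over $A$. Concretely, we first use opcartesianness to produce the mediating map over $\alpha$. Only after that do we compare in $C_{a'}$ using the cartesian property there, and condition (5) guarantees that the relevant maps land in the right fibres.
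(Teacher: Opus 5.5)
Your proposal is correct and follows the same route as the paper. Objects agree by condition (6), and on morphisms both composites are identified through the uniqueness clauses of the (op)cartesian universal properties. You are more explicit than the paper about which universal property is used where: opcartesianness of $\beta^*e\to\alpha_!\beta^*e$ holds globally over $A$, while cartesianness of $(B(\alpha)\beta)^*\alpha_!e'\to\alpha_!e'$ holds only inside $C_{a'}$, and this is what makes the paper's one-line ``by either of the universal properties'' rigorous.
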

  \begin{proof}
    By definition of D2SFib, the diagram commutes on objects. To see that it commutes on morphisms, let $f:e \to d$ be a morphism in $C_{(a,b')}$. Then by either of the universal properties, there exists a unique diagonal arrow as below.
    \[
      \begin{tikzcd}[sep=2.25em]
        {\alpha_!\beta^*e} & {(B(\alpha)\beta)^*\alpha_!e} \\
        {\alpha_!\beta^*d} & {(B(\alpha)\beta)^*\alpha_!d}
        \arrow["{=}", no head, from=1-1, to=1-2]
        \arrow["{\alpha_!\beta^*f}"', from=1-1, to=2-1]
        \arrow[dashed, from=1-1, to=2-2]
        \arrow["{(B(\alpha)\beta)^*\alpha_!f}", from=1-2, to=2-2]
        \arrow["{=}"', no head, from=2-1, to=2-2]
      \end{tikzcd}
    \]
    This shows that $\alpha_!\beta^*f = (B(\alpha)\beta)^*\alpha_!f$, as desired.
  \end{proof}

  \begin{proposition}
    Let $q:C \to A \ctxe B$ be a D2SFib. Taking the fibre at each $(a,b) \in A \ctxe B$ produces a functor $\Psi(q):A \ctxe B^\op \to \Cat$, the \emph{straightening} of $q$.
  \end{proposition}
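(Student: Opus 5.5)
I will define $\Psi(q)$ directly on objects and morphisms, using the reindexing functors that come with a D2SFib, and then check functoriality. The main input is Lemma \ref{d2sfib:lemma}.

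\emph{Objects.} For $(a,b)\in A\ctxe B^\op$, set $\Psi(q)(a,b)\defeq C_{(a,b)}$. This is the strict fibre of $q$ over $(a,b)$, i.e.\ the subcategory of $C$ of objects over $(a,b)$ and morphisms over $(\id_a,\id_b)$.

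\emph{Morphisms.} A morphism $(a,b)\to(a',b')$ in $A\ctxe B^\op$ is a pair $(\alpha,\beta)$ with $\alpha:a\to a'$ in $A$ and $\beta:b'\to B(\alpha)b$ in $B(a')$. I define
\[ \Psi(q)(\alpha,\beta)\defeq \beta^*\circ\alpha_! : C_{(a,b)}\to C_{(a',B(\alpha)b)}\to C_{(a',b')}. \]
The two pieces are functors between fibres for the following reasons.
\begin{itemize}
\item[(i)] By condition (5), $q$ is a cartesian functor of opfibrations. Hence the opcartesian lift $e\to\alpha_!e$ lies over $(\alpha,\id_{B(\alpha)b})$, so $\alpha_!$ sends $C_{(a,b)}$ into $C_{(a',B(\alpha)b)}$. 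Its action on morphisms comes from the universal property, and splitness of the opfibration makes this functorial.
\item[(ii)] $\beta^*$ is the split reindexing of the fibration $q_{|a'}$ restricted to fibres.
\end{itemize}

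\emph{Identities.} Splitness gives $(\id_a)_!=\id$ and $(\id_b)^*=\id$, so $\Psi(q)$ preserves identities.

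\emph{Composition.} Take $(\alpha,\beta):(a,b)\to(a',b')$ and $(\alpha',\beta'):(a',b')\to(a'',b'')$. Their composite is $(\alpha'\alpha,\ B(\alpha')\beta\circ\beta')$. I need
\[ \beta'^*\,\alpha'_!\,\beta^*\,\alpha_! \;=\; (B(\alpha')\beta\circ\beta')^*\,(\alpha'\alpha)_!. \]
\begin{enumerate}
\item Apply Lemma \ref{d2sfib:lemma} to $\alpha'$ and $\beta$, which gives $\alpha'_!\beta^*=(B(\alpha')\beta)^*\alpha'_!$ as functors.
\item The left side becomes $\beta'^*\,(B(\alpha')\beta)^*\,\alpha'_!\,\alpha_!$.
\item Split (op)fibrations give $\alpha'_!\alpha_!=(\alpha'\alpha)_!$ and $\beta'^*(B(\alpha')\beta)^*=(B(\alpha')\beta\circ\beta')^*$.
\end{enumerate}
This yields the required equality. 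Note that Lemma \ref{d2sfib:lemma} is stated for $\beta$ in $B(a)$ and $\alpha:a\to a'$, which is exactly this situation.

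The main obstacle is making sure that $\alpha_!$ and $\beta^*$ are well-defined functors between the strict fibres and compose strictly, not merely up to isomorphism. Two points need care.
\begin{itemize}
\item The fibre $C_a$ over $a$ used in condition (3) must be compatible with the fibres $C_{(a,b)}$. Morphisms of $C_{(a,b)}$ lie over $\id_a$, so they live in $C_a$ and lie over $\id_b$ for $q_{|a}$. Hence $\beta^*$ acts on them via the cartesian universal property in $C_a$.
\item For $\alpha_!$ acting on a morphism $f$ in $C_{(a,b)}$, I would take the unique map $\alpha_!e\to\alpha_!d$ over $\id_{a'}$ induced by opcartesianness. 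Then I use condition (5) together with faithfulness of the uniqueness to see that it lies over $\id_{B(\alpha)b}$. The key fact is that $q$ sends the opcartesian lift to the canonical opcartesian morphism $(\alpha,\id)$ of $A\ctxe B$, and that the image of the induced map is determined by uniqueness in $A\ctxe B$.
\end{itemize}
Once these well-definedness checks are done, functoriality follows from the calculation above.
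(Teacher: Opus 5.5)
Your proposal is correct and is essentially the paper's proof. The paper uses the same definition $\Psi(q)(\alpha,\beta)=\beta^*\circ\alpha_!$, gets identities from splitness, and gets composition from Lemma~\ref{d2sfib:lemma} (applied to $\alpha'$ and $\beta$) together with splitness; your extra check that $\alpha_!$ restricts to the strict fibres, via condition (5) and uniqueness of factorization through $(\alpha,\id)$ in $A\ctxe B$, is a detail the paper leaves implicit, and your reindexing $(B(\alpha')\beta)^*$ is the right one where the paper's diagram has a small typo.
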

  \begin{proof}
    For $(a,b) \in A \ctxe B^\op$, we've defined $\Psi(C)(a,b) \defeq C_{(a,b)}$. Given $(\alpha,\beta) : (a,b) \to (a',b')$ in $A \ctxe B^\op$, we define $\Psi(C)(\alpha,\beta)$ by $C(a,b) \xrightarrow{\alpha_!} C(a',B(\alpha)b) \xrightarrow{\beta^*} C(a',b')$.

    We now check functoriality.
    First, $\Psi(C)$ preserves identities by splitness of the (op)fibrational data.
    It also preserves composition, since for $(a,b) \xrightarrow{(\alpha,\beta)} (a',b')  \xrightarrow{(\alpha',\beta')} (a'',b'')$, by Lemma \ref{d2sfib:lemma}, the following diagram commutes.
    \[
      \begin{tikzcd}
        C(a,b) & C(a',B(\alpha)b) & C(a'',B(\alpha'\alpha)b) \\
        & C(a',b') & C(a'',B(\alpha')b') & C(a'',b'')
        \arrow["\alpha_!", from=1-1, to=1-2]
        \arrow["\alpha'_!", from=1-2, to=1-3]
        \arrow["\beta^*"', from=1-2, to=2-2]
        \arrow["(B(\alpha)\beta')^*", from=1-3, to=2-3]
        \arrow["\alpha'_!"', from=2-2, to=2-3]
        \arrow["(\beta')^*"', from=2-3, to=2-4]
    \end{tikzcd}\]
  \end{proof}

  Next, we show the functorial action of straightening.

  \begin{proposition}
    Straightening of D2SFibs extends to a functor $\Psi:\DTSFib(A,B) \to [A \cxe B^\op, \Cat]$.
  \end{proposition}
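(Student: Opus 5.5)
Given a cartesian functor $\varphi : C \to C'$ between D2SFibs $q : C \to A \ctxe B$ and $q' : C' \to A \ctxe B$, I will define $\Psi(\varphi) : \Psi(C) \to \Psi(C')$ componentwise. Since $q' \circ \varphi = q$ (noted after Definition \ref{def:d2sfib-1cells}), $\varphi$ maps the fibre $C_{(a,b)}$ into $C'_{(a,b)}$. I therefore take the component at $(a,b) \in A \ctxe B^\op$ to be the restriction $\varphi_{(a,b)} : C_{(a,b)} \to C'_{(a,b)}$, which is a functor because $\varphi$ is one.

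The main work is naturality. Recall that for $(\alpha,\beta) : (a,b) \to (a',b')$ in $A \ctxe B^\op$, the straightening acts as $\Psi(C)(\alpha,\beta) = \beta^* \circ \alpha_!$. So it suffices to prove two squares separately: $\varphi \circ \alpha_! = \alpha_! \circ \varphi$ on fibres over $a$, and $\varphi \circ \beta^* = \beta^* \circ \varphi$ on fibres over $a'$. Both must hold strictly, on objects and on morphisms.

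For the first square, I use condition (1) of Definition \ref{def:d2sfib-1cells}. It says $\varphi$ sends the chosen opcartesian lift $e \to \alpha_! e$ to an opcartesian morphism over $\alpha$. That image only agrees with the chosen lift $\varphi e \to \alpha_!\varphi e$ up to a unique vertical isomorphism, not on the nose. This is the main obstacle: strict naturality seems to need $\varphi$ to preserve the chosen cleavage. I would first read ``cartesian functor'' between split (op)fibrations in the split sense, meaning it preserves the specified lifts, as is standard in the split setting the paper adopts. Under that reading the squares commute on objects. On morphisms $f : e \to d$ in the fibre, $\alpha_! f$ is characterised by the universal property, as in Lemma \ref{d2sfib:lemma}. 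Applying $\varphi$ to the defining square and using uniqueness of factorisation through the chosen lifts in $C'$ gives $\varphi(\alpha_! f) = \alpha_!(\varphi f)$.

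The second square is the same argument: use condition (2) of Definition \ref{def:d2sfib-1cells} on each $C_{a'}$ with the chosen cartesian lifts of the local fibrations, together with the universal property for morphisms. If instead the notion is only the non-split one, my fallback is to get a pseudonatural transformation with these canonical isomorphisms as components. That would require adjusting the target category, so I would adopt the split reading.

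Finally, functoriality of $\Psi$ is immediate. Components are restrictions to fibres, so $\Psi(\id_C)$ is the identity and $\Psi(\psi \circ \varphi) = \Psi(\psi) \circ \Psi(\varphi)$ holds componentwise.
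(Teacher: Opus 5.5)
Your proposal is correct and follows the paper's proof. The components of $\Psi(\varphi)$ are the restrictions of $\varphi$ to the fibres $C_{(a,b)}$, naturality is checked by splitting each square into an $\alpha_!$-square and a $\beta^*$-square, and functoriality of $\Psi$ is immediate. The paper handles both squares in one line (``by the corresponding theorem for (op)fibrations''), and the hypothesis that appeal silently needs is the one you made explicit: $\varphi$ must preserve the chosen (op)cartesian lifts, which is your split reading and matches the paper's standing split convention.
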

  \begin{proof}
    Let $q : C \to A \ctxe B$ and $q' : D \to A \ctxe B$ be D2SFibs.
    Given a cartesian functor of D2SFibs $\varphi : C \to D$, we give a natural transformation $\Psi(\varphi):\Psi(C) \to \Psi(D)$. For $(a,b):A \ctxe B^\op$, the component $\Psi(\varphi)_{(a,b)}:C(a,b) \to D(a,b)$ is given by restricting $\varphi : C \to D$. It is natural since each square below commutes:
    \[
      \begin{tikzcd}
        C(a,b) & C(a',B(\alpha)b) & C(a',b') \\
        D(a,b) & D(a',B(\alpha)b) & D(a',b')
        \arrow["\alpha_!", from=1-1, to=1-2]
        \arrow["\varphi"', from=1-1, to=2-1]
        \arrow["\beta^*", from=1-2, to=1-3]
        \arrow["\varphi"{description}, from=1-2, to=2-2]
        \arrow["\varphi", from=1-3, to=2-3]
        \arrow["\alpha_!"', from=2-1, to=2-2]
        \arrow["\beta^*"', from=2-2, to=2-3]
    \end{tikzcd}\]
    by the corresponding theorem for (op)fibrations.
    Functoriality of $\Psi$ is immediate from the definitions.
  \end{proof}

  \subsection{The equivalence}

  \begin{proposition}\label{prop:d2sfibequiv}
    Straightening and unstraightening induce an equivalence of categories.
    \[ \Phi : [A \ctxe B^\op,\Cat] \simeq \mathsf{D2SFib}(A,B) : \Psi \]
  \end{proposition}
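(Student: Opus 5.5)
I will exhibit two natural isomorphisms, $\Psi\Phi \cong \mathrm{Id}$ on $[A \ctxe B^\op,\Cat]$ and $\Phi\Psi \cong \mathrm{Id}$ on $\mathsf{D2SFib}(A,B)$. The first should in fact be an identity.

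\textbf{The composite $\Psi\Phi$.} Let $C : A \ctxe B^\op \to \Cat$. The fibre of $\pi_2 : A \ctxe B \tye C \to A \ctxe B$ over $(a,b)$ has objects $(a,b,c)$ with $c \in C(a,b)$. Its morphisms are triples $(\id_a,\id_b,\theta)$ with $\theta : c \to c'$ in $C(a,b)$. So this fibre is literally $C(a,b)$.

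On a morphism $(\alpha,\beta) : (a,b) \to (a',b')$ of $A \ctxe B^\op$, the functor $\Psi\Phi(C)(\alpha,\beta)$ is by definition $\beta^*\circ\alpha_!$. Using the explicit lifts given in the proof that $\pi_2$ is a D2SFib, this sends $c$ to $C(\id_{a'},\beta)\,C(\alpha,\id_{B(\alpha)b})\,c$, which equals $C(\alpha,\beta)c$ by functoriality of $C$. The same computation works on morphisms. Hence $\Psi\Phi(C) = C$ on the nose.

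For a natural transformation $\varphi$, the functor $\Phi(\varphi)$ acts as $\varphi_{(a,b)}$ on fibres, so $\Psi\Phi(\varphi) = \varphi$. Thus $\Psi\Phi = \mathrm{Id}$.

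\textbf{The composite $\Phi\Psi$.} Let $q : C \to A \ctxe B$ be a D2SFib. I define a comparison $\epsilon_q : A \ctxe B \tye \Psi(C) \to C$.
\begin{itemize}
\item \emph{On objects:} $\epsilon_q(a,b,e) = e$, where $e \in C_{(a,b)}$.
\item \emph{On morphisms:} take $(\alpha,\beta,\theta) : (a,b,e) \to (a',b',e')$. Here $\theta : \alpha_! e \to \beta^* e'$ lies in the fibre $C_{(a',B(\alpha)b)}$. Send it to the composite
\[ e \to \alpha_! e \xrightarrow{\theta} \beta^* e' \to e', \]
whose first map is the opcartesian lift and whose last map is the cartesian lift.
\end{itemize}

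\textbf{Key step: $\epsilon_q$ is fully faithful.} This is the main obstacle. I need a factorization lemma: every morphism $g : e \to e'$ in $C$ with $q(g) = (\alpha,\beta)$ factors uniquely as
\[ e \to \alpha_! e \xrightarrow{\theta} \beta^* e' \to e' \]
with $\theta$ vertical. The plan is as follows.
\begin{itemize}
\item Factor $g$ through the opcartesian lift $e \to \alpha_! e$. This uses that $p$ is an opfibration and gives $\alpha_! e \to e'$ lying over $\id_{a'}$ in $A$.
\item That map lives in the fibre $C_{a'}$ over $a'$ and lies over $\beta$ in $B(a')$. Since $q_{|a'}$ is a fibration, it factors uniquely through the cartesian lift $\beta^* e' \to e'$.
\end{itemize}
Uniqueness follows from the two universal properties. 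This yields the bijection on hom-sets.

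\textbf{Functoriality of $\epsilon_q$.} Composition is preserved because the composite of two factored maps refactors according to the composition law of $A \ctxe B \tye \Psi(C)$. That law involves the action $\Psi(C)(\alpha',\id)$ and $\Psi(C)(\id,\beta)$ on $\theta$. The required interchange is exactly Lemma~\ref{d2sfib:lemma} together with condition~(6) of Definition~\ref{def:d2sfib}. I expect this to be the messiest verification.

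\textbf{The rest of the argument.}
\begin{itemize}
\item \emph{Isomorphism in $\mathsf{D2SFib}(A,B)$.} $\epsilon_q$ is bijective on objects and fully faithful, so it is an isomorphism of categories. It commutes with the projections to $A \ctxe B$. It sends the chosen (op)cartesian lifts of $\Phi\Psi(C)$, namely the triples with identity third component, to the chosen lifts $\alpha_!$ and $\beta^*$ of $C$. Hence it is a cartesian functor, and so an isomorphism in $\mathsf{D2SFib}(A,B)$.
\item \emph{Naturality.} Given a cartesian functor $\varphi : C \to D$, we have $\varphi \circ \epsilon_q = \epsilon_{q'} \circ \Phi\Psi(\varphi)$. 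This holds because $\varphi$ preserves the chosen lifts, so it preserves the factorization.
\end{itemize}
Together with $\Psi\Phi = \mathrm{Id}$, this gives the equivalence.
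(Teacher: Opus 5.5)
Your proof is correct and is essentially the paper's: $\Psi\Phi$ is the identity by functoriality of $C$, and your $\epsilon_q$ is exactly the paper's $\epsilon^{-1}$, while your unique-factorization argument for full faithfulness is precisely how the paper defines $\epsilon$ on morphisms. Packaging it as ``bijective on objects plus fully faithful'' spares you a second functoriality check. Your explicit naturality in $q$ also fills in a step the paper leaves implicit; like the paper's definition of $\Psi$ on morphisms, it tacitly requires cartesian functors to preserve the chosen lifts.
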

  \begin{proof}
    We begin by showing that $\Phi \circ \Psi \cong \id_{\mathsf{D2SFib}(A,B)}$. Let $q:C\to A \ctxe B$ be a D2SFib, we define a functor $\epsilon: C \to A \ctxe B \tye \Psi(q)$ as follows:
    \begin{itemize}
      \item \textbf{Objects:} For $c:C$, $\epsilon(c)\defeq(p(c), q_{|p(c)}(c), c)$.
      \item \textbf{Morphisms:} For $f:c\to c':C$, and setting $(a,b,c) \defeq \epsilon(c)$, $(a',b',c')\defeq \epsilon(c')$, $\alpha \defeq p(f):a \to a'$ and $\beta\defeq q_{|a'}(f):B(\alpha)b \to b'$, we define $\epsilon(f):\alpha_!c \to \beta^*c'$, by first obtaining a map $\alpha_!c \to c'$ by the universal property of $\alpha_!c$, and then obtaining $\epsilon(f)$ by using the universal property of $\beta^*c'$.
        \[
          \begin{tikzcd}
            {\alpha_!c} & {\beta^*c'} \\
            c & {c'}
            \arrow["\epsilon(f)", dashed, from=1-1, to=1-2]
            \arrow[dashed, from=1-1, to=1-2]
            \arrow[from=2-1, to=1-1]
            \arrow[dashed, from=1-1, to=2-2]
            \arrow[from=1-2, to=2-2]
            \arrow["f"', from=2-1, to=2-2]
        \end{tikzcd}\]
      \item \textbf{Functoriality:} Preservation of identities is immediate. For functoriality, let $c \xrightarrow{f} c' \xrightarrow{f'} c''$ be morphisms whose image under $q$ give the morphisms $(a,b) \xrightarrow{(\alpha,\beta)} (a',b') \xrightarrow{(\alpha',\beta')} (a'',b'')$. Then, preservation of composition follows from the diagram below.
        \[
          \begin{tikzcd}[column sep=1em,row sep=0.5em]
            & {(\alpha'\alpha)_!c} && {\alpha'_!\beta^*c'=(B(\alpha')\beta)^*\alpha'_!c'} && {(\beta' \circ B(\alpha')\beta)^*c''} \\
            \\
            {\alpha_!c} && {\beta^*c'} && {\alpha'_!c'} && {\beta'^*c''} \\
            \\
            & c && {c'} && {c''}
            \arrow["{\alpha'_!(Ff)}", from=1-2, to=1-4]
            \arrow["{(B\alpha'\beta)^*Ff'}", from=1-4, to=1-6]
            \arrow[from=1-4, to=3-5]
            \arrow[from=1-6, to=3-7]
            \arrow[from=3-1, to=1-2]
            \arrow["Ff", from=3-1, to=3-3]
            \arrow[from=3-3, to=1-4]
            \arrow[from=3-3, to=5-4]
            \arrow["{Ff'}", from=3-5, to=3-7]
            \arrow[from=3-7, to=5-6]
            \arrow[from=5-2, to=3-1]
            \arrow["f"', from=5-2, to=5-4]
            \arrow[from=5-4, to=3-5]
            \arrow["{f'}"', from=5-4, to=5-6]
        \end{tikzcd}\]
    \end{itemize}

    The inverse $\epsilon^{-1}:A \ctxe B \tye \Psi(q) \to C$ of $\epsilon$ is defined as follows.
    \begin{itemize}
      \item \textbf{Objects:} For $(a,b,c) : A \ctxe B \tye (\Psi q)$, we set $G(a,b,c) \defeq c$. \\
      \item \textbf{Morphisms:} For $(\alpha,\beta,\theta):(a,b,c) \to (a',b',c')$ in $A \ctxe B \tye (\Psi q)$, its image under $G$ is given by $c \to \alpha_! c \xrightarrow{\theta} \beta^* c' \to c'$
      \item \textbf{Functoriality:} Analogous to the proof of functoriality of $\epsilon$.
    \end{itemize}
    That these two functors are cartesian functors of D2SFibs is readily seen. Further, by construction, these are inverses.

    Finally, we show that $\Psi \circ \Phi \cong \id_{\mathsf{Functor}(A \ctxe B^\op,\Cat)}$.
    Let $C:A \ctxe B^\op \to \Cat$, we define a natural transformation $\eta:C \to \Psi(A \ctxe B \tye C)$, which is readily seen to be an isomorphism.
    \begin{itemize}
      \item \textbf{Components:} For $(a,b) : A \ctxe B^\op$, we define $\eta{(a,b)}:C(a,b) \to C(a,b)$ to be the identity functor.
      \item \textbf{Naturality:} For $(\alpha,\beta):(a,b) \to (a',b')$, naturality requires that $C(\alpha,\beta)=C(\id_{a'},\beta) \circ C(\alpha,\id_{B(\alpha)b})$, which holds by functoriality of $C$. \qedhere
    \end{itemize}
  \end{proof}

  We are now ready to prove Proposition \ref{prop:straight-d2sfibs}.
  \setcounter{proposition}{27}
  \begin{proposition}[Straightening-unstraightening for D2SFibs]
    Unstraightening of dependent profunctors (Definition \ref{def:unstraightening-dprof}) extends to a fibred equivalence of categories $\tye^* : \mathsf{DProf} \xrightarrow{\simeq} \DTSFib$ over $\ICat_{\textnormal{str}}$.
  \end{proposition}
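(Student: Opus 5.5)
The plan is to assemble the fibrewise equivalences of Proposition \ref{prop:d2sfibequiv} into a fibred equivalence over $\ICat_{\textnormal{str}}$. Both $\DProf \to \ICat_{\textnormal{str}}$ and $\DTSFib \to \ICat_{\textnormal{str}}$ are fibrations: the first is a split fibration by construction as a contravariant Grothendieck construction, the second a non-split one by the routine lemma on pullbacks of D2SFibs. A fibred functor that restricts to an equivalence on every fibre is a fibred equivalence; this standard fact holds for cloven fibrations, with the inverse built from the fibrewise inverses and the reindexing isomorphisms. So I have to do three things: define $\tye^*$ globally, show that it preserves cartesian morphisms, and note that its restriction to the fibre over $(A,B)$ is the functor $\Phi$ of Proposition \ref{prop:d2sfibequiv}.

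\textbf{Step 1: definition on morphisms.} A morphism $(A',B',C') \to (A,B,C)$ in $\DProf$ is a pair consisting of $(F,G):(A',B')\to(A,B)$ in $\ICat_{\textnormal{str}}$ and a natural transformation $\varphi: C' \to C\circ (F,G^\op)$. Here $(F,G^\op): A'\ctxe B'^\op \to A \ctxe B^\op$ is the functor induced by $F$ and the strict natural transformation $G$. I send this to the functor
\[ A'\ctxe B'\tye C' \to A\ctxe B\tye C, \qquad (a,b,c) \mapsto (Fa, G_a b, \varphi_{(a,b)} c), \]
acting on morphisms by $(\alpha,\beta,\theta)\mapsto (F\alpha, G\beta, \varphi\,\theta)$. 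Strictness of $G$ gives $G_{a'}(B'(\alpha) b) = B(F\alpha)(G_a b)$, so the formula for morphisms is well typed. This functor lies over ${\ctxe}^*(F,G)$. Functoriality in the morphism of $\DProf$ follows from naturality exactly as in the fibrewise case. On the fibre over $(A,B)$ the construction is precisely $\Phi$.

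\textbf{Step 2: preservation of cartesian morphisms.} A cartesian morphism in $\DProf$ over $(F,G)$ has $\varphi$ an identity, so its domain is $C\circ(F,G^\op)$. I need to show that the induced functor $A'\ctxe B'\tye (C\circ(F,G^\op)) \to A\ctxe B\tye C$ exhibits its domain as the pullback of $\pi_2$ along ${\ctxe}^*(F,G)$, up to isomorphism of D2SFibs. Objects of the pullback are pairs $((a,b),(Fa,G_ab,c))$ with $c\in C(Fa,G_ab)$, which is the same data as $(a,b,c)$. Morphisms match as well, since both sides record $\theta$ in $C(Fa',G_{a'}B'(\alpha)b)$ with the same reindexings. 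So the comparison is an isomorphism that commutes with the projections. It also preserves the chosen lifts, because in both cases the lifts are the triples with identity third components.

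\textbf{The main obstacle.} I expect the subtle point to be that $\DTSFib$ is only a non-split fibration, and that its cartesian morphisms are characterised up to isomorphism. I will therefore phrase Step 2 as: the canonical comparison into the chosen pullback is an isomorphism. This is enough for $\tye^*$ to be a fibred functor. Combined with Step 1 and Proposition \ref{prop:d2sfibequiv}, the fibrewise-equivalence criterion then gives the fibred equivalence over $\ICat_{\textnormal{str}}$.
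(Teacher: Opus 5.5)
Your proposal is correct and takes the same route as the paper, whose proof combines the fibrewise equivalence of Proposition~\ref{prop:d2sfibequiv} with the remark that the constructions are natural in $A$ and $B$. Your Steps~1--2 spell out that naturality for unstraightening, and the standard criterion (a fibred functor that is a fibrewise equivalence is a fibred equivalence) means you need not check naturality of straightening and the unit/counit separately; the only slip is that a general cartesian morphism of $\DProf$ has $\varphi$ invertible rather than an identity, which is harmless since it suffices to treat the chosen lifts.
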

  \begin{proof}
    This follows from the previous proposition, by noting that each construction given is natural in $A$ and $B$.
  \end{proof}

  \section{Expanded proofs of results}\label{sec:proofs}
  In this section, we present the proofs omitted in the main body of the paper.

  \subsection{The \texorpdfstring{$\Hom$}{Hom}-elimination rule}\label{subsec:Hom-elim}
  \setcounter{proposition}{31}
  \begin{proposition}[$\Hom$\textsc{-Elim}]
    The category model validates the following rule.
    {
      \small
      \begin{mathparpagebreakable}
        \inferrule*[Right=$\Hom$\textnormal{\textsc{-Elim}}]
        {
          \Gamma \vdash A\,\Ty \\\\
          \Gamma \cxe a:A \vdash X\,\Ty \\\\
          \Gamma \cxe b: A \cxe a:A \tscxe f : \Hom_A(\bar a,b) \cxe x : X^\op\vdash D\,\Ty \\\\
          \Gamma \cxe a:A \cxe x:X \vdash d \dvar D[\bar{a}/b,\Refl_A/f, \bar{x}/x]
        }
        {
          \Gamma \cxe b: A \cxe a:A \tscxe f : \Hom_A(\bar a,b) \cxe x : X \vdash j(f,x,d) \dvar D
        }
      \end{mathparpagebreakable}
    }%
    It also validates the following rule, which has the same antecedents as \textnormal{\textsc{$\Hom$-Elim}}, and has the following conclusion.
    \[ \Gamma \cxe a:A \cxe x:X \vdash j(\Refl_a,x,d) \jeq d \dvar D[\bar{a}/b,\Refl_A/f, \bar{x}/x] \hspace{3em} \textnormal{\textsc{$\Hom$-Comp}}\]
  \end{proposition}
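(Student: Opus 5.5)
We follow the sketch given in the main text and reduce the rule to a lifting problem in $\Cat$. Conflating syntax with its interpretation, we first identify $\Interp{\Gamma \cxe b:A \cxe a:A \tscxe f:\Hom_A(\bar a,b)}$ with $\Gamma \ctxe A^\to$, as in the proof of Proposition \ref{prop:hom-intro}. Under this identification, extending by $x:X$ gives $\Gamma \ctxe A^\to \ctxe X[\mathsf{dom}]$, where $X[\mathsf{dom}](\gamma,f:a\to b) \defeq X(\gamma,a)$ acts on a morphism $(\id_\gamma,(\alpha,\beta))$ via $X(\id,\alpha)$. The motive $D$ is then a functor $\Gamma \ctxe A^\to \ctxe X[\mathsf{dom}]^\op \to \Cat$, i.e.\ a dependent profunctor. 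Its twist in $x$ is the D2SFib $\pi_2:\Gamma \ctxe A^\to \ctxe X[\mathsf{dom}] \tye D \to \Gamma \ctxe A^\to \ctxe X[\mathsf{dom}]$, by Proposition \ref{prop:straight-d2sfibs}.

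The hypothesis $d$ is a section of the pullback of $\pi_2$ along $r(\gamma,a,x)\defeq(\gamma,\id_a,x)$. Equivalently, it is a functor $d':\Gamma\ctxe A\ctxe X \to \Gamma \ctxe A^\to \ctxe X[\mathsf{dom}]\tye D$ over $r$. Write $d^*(\gamma,a,x)\in D(\gamma,\id_a,x)$ for its last component; on morphisms it supplies maps $\theta$ of the twisted shape. A term $j$ is then a section $j$ of $\pi_2$ with $j\circ r = d'$. The equation $j \circ r = d'$ is exactly \textsc{$\Hom$-Comp}, so both rules reduce to constructing this strict section.

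On objects we take $j(\gamma,f,x)\defeq(\gamma,f,x,D(\phi)(d^*(\gamma,a,x)))$, where $\phi\defeq(\id_\gamma,(\id_a,f),\id_x)$. This choice is forced, since the fibre over $(\gamma,f,x)$ must be reached from the identity via the unique canonical arrow $\id_a\to f$ in $A^\to$. When $f=\id_a$ we have $\phi=\id$, so $j(\gamma,\id_a,x)=d'(\gamma,a,x)$ strictly on objects.

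On morphisms, take $(\sigma,(\alpha,\beta),\xi):(\gamma,f:a\to b,x)\to(\gamma',f',x')$, where $\xi:X(\sigma,\alpha)x\to x'$. The first step is to note that this morphism factors through the image of $r$: it is the composite of $r(\sigma,\alpha,\xi):(\gamma,\id_a,x)\to(\gamma',\id_{a'},x')$ with the canonical arrow $(\id_{a'},f')$, precomposed against $(\id_a,f)$. Concretely, $(\alpha,\beta)\circ(\id_a,f)=(\alpha,\beta f)=(\alpha,f'\alpha)=(\id_{a'},f')\circ(\alpha,\id_{a'}\alpha)$. We then define the $D$-component of $j$ on this morphism by transporting the component $\theta$ of $d'(\sigma,\alpha,\xi)$ along $D$ of these canonical arrows. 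This uses the functoriality of $D$ in the covariant slot $\Gamma\ctxe A^\to$, together with the fact that the canonical arrows are cartesian/opcartesian lifts in the unstraightening. More structurally, we can use the D2SFib structure: take the opfibration lift in the $\Gamma\ctxe A^\to$ direction of $d'$, followed by the local fibration lift in the $X^\op$ direction.

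The main obstacle is checking that this assignment is functorial and strictly restricts to $d'$ along $r$ on morphisms. Functoriality should follow from functoriality of $d'$, the uniqueness of the canonical arrows $\id_a\to f$, and Lemma \ref{d2sfib:lemma}, which lets the $\alpha_!$ and $\beta^*$ reindexings commute strictly in the split setting. For restriction along $r$, we note that if $f=\id_a$ and $f'=\id_{a'}$, all transports are identities by splitness, so $j\circ r=d'$ holds on the nose, giving \textsc{$\Hom$-Comp}. Finally, $\pi_2\circ j=\id$ holds by construction, and stability under substitution in $\Gamma$ holds because every choice made is natural in $\gamma$.
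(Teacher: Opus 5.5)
Your proposal is essentially the paper's proof. It uses the same reduction of both rules to a strict section $j$ of $\pi_2$ with $j\circ r=d'$, the same object assignment $D(\phi)(d^*(\gamma,a,x))$, and the same morphism assignment: apply $D(\id_{\gamma'},(\id_{a'},f'),\id)$ to the $D$-component of $d'(\sigma,\alpha,\xi)$, justified by $(\alpha,\beta)\circ(\id_a,f)=(\id_{a'},f')\circ(\alpha,\alpha)$. One aside needs correcting: the object choice is canonical (the opcartesian transport of $d'$ along $(\id_a,f)$) but not forced, since sections $j$ with $j\circ r=d'$ need not be unique, and the alternative description needs no local fibration lifts in the $X$ direction.
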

  \begin{proof}
    We continue the sketched proof, using the introduced notation.
    We now give the action of $j$ on morphisms.
    Let $(\theta,(\alpha,\beta),\psi):(\gamma, f:a \to b,x) \to (\gamma',f':a'\to b', x')$ be a morphism in ${\Gamma \ctxe A^\to \ctxe X[\mathsf{dom}]}$.
    Note that this gives us a morphism $(\theta,\alpha,\psi): (\gamma, a,x) \to (\gamma', a',x')$ in the category $\Gamma \ctxe A \ctxe X$.
    By applying $d^*$ to it, we obtain the morphism
    \[ d^*(\theta,\alpha,\psi) : D(\theta,(\alpha,\alpha),\id_{X(\theta,\alpha)x})(\gamma,a,x) \to D(\id_{\gamma'},\id_{\id_{a'}},\psi)(\gamma',a',x') \]
    in $D(\gamma',\id_{a'},X(\theta,\alpha)x)$.

    We now define $j(\theta,(\alpha,\beta),\psi) := D(\id_{\gamma}',(\id_{a'},f'),\id_{X(\theta,\alpha)x})(d^*(\theta,\alpha,\psi))$.
    Commutativity of the diagram below ensures correctness of the endpoints.
    {
      \small
      \[
        \begin{tikzcd}[row sep=2em, column sep=8em]
          {D(\gamma,\id_a,x)} & {D(\gamma',\id_{a'},X(\theta,\alpha)x)} & {D(\gamma',\id_{a'},x')} \\
          {D(\gamma,f,x)} & {D(\gamma',f',X(\theta,\alpha)x)} & {D(\gamma',f',x')}
          \arrow["{D(\theta,(\alpha,\alpha),\id_{X(\theta,\alpha)x})}", from=1-1, to=1-2]
          \arrow["{D(\id_\gamma,(\id_a,f),\id_x)}"{description}, from=1-1, to=2-1]
          \arrow["{D(\id_{\gamma}',(\id_{a'},f'),\id_{X(\theta,\alpha)x})}"{description}, from=1-2, to=2-2]
          \arrow["{D(\id_{\gamma'},\id_{\id_{a'}},\psi)}"', from=1-3, to=1-2]
          \arrow["{D(\id_{\gamma}',(\id_{a'},f'),\id_{x'})}"{description}, from=1-3, to=2-3]
          \arrow["{D(\theta,(\alpha,\beta),\id_{X(\theta,\alpha)x})}"', from=2-1, to=2-2]
          \arrow["{D(\id_{\gamma'},\id_{f'},\psi)}", from=2-3, to=2-2]
      \end{tikzcd}\]
    }
    Functoriality readily follows and, by construction, the desired square commutes.
  \end{proof}

  \subsection{A proof of Yoneda lemma}\label{subsec:yoneda-full}
  In this section we give a complete proof of Yoneda lemma.
  We begin with a technical lemma.
  \setcounter{proposition}{59}
  \begin{lemma}\label{lemma:yoneda-perm-proof}
    Let $A : \UU$ be a type in the empty context.
    Then, for a context $\Gamma$ of the form $\Gamma \defeq \gamma_1 : \Gamma_1 \cxe \cdots \cxe \gamma_n : \Gamma_n$, we have an isomorphism of contexts
    \[ \Gamma \cxe x:A \cxe a:A \tscxe \Hom(\bar a, x) \cong \Gamma \cxe a:A \cxe^- (x,b):\sum_{x:A}\Hom(a,x).\]
  \end{lemma}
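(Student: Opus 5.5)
The plan is to compute the interpretations of both contexts in the category model, show that each is isomorphic to $\Interp{\Gamma} \times \Interp{A}^\to$, and then compose these isomorphisms. Throughout I write $A$ for $\Interp{A}$ and $\Gamma$ for $\Interp{\Gamma}$. Since $A$ is closed, every type appearing in either context depends only on the variables $a$ and $x$, never on $\gamma_1,\dots,\gamma_n$. So both interpretations should split off $\Gamma$ as a product factor, and the real content is an isomorphism of categories over $A$.

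\emph{Left-hand side.} By Notation \ref{not:tscxe},
\[ \Interp{\Gamma \cxe x:A \cxe a:A \tscxe \Hom(\bar a, x)} \cong \Gamma \ctxe A \ctxe A \tye \overline{\hom}_A , \]
where $A$ and $\hom_A$ are constant in $\Gamma$. By the computation in Example \ref{ex:sctxe-hom} (and the proof of Proposition \ref{prop:hom-intro}), this is $\Gamma \times A^\to$. Its objects are $(\gamma,x,a,f\colon a\to x)$. Its morphisms are $(\theta,\beta,\alpha)$ with $\theta\colon \gamma\to\gamma'$, $\beta\colon x \to x'$, $\alpha\colon a\to a'$, subject to $\beta f = f'\alpha$.

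\emph{Right-hand side.} The type $\sum_{x:A}\Hom(a,x)$ lives in the context $(\Gamma\cxe a:A)^\op = \Gamma^\op \cxe a:A^\op$. Its interpretation is the functor $(\Gamma\times A)^\op \to \Cat$ given by $(\gamma,a)\mapsto a/A$, which is constant in $\gamma$. Here I use the description of $\Sigma$-types from the proof of Proposition \ref{prop:Sigma}: a morphism $(x,f)\to(x',f')$ is a $\beta$ with $\beta f = f'$, since $\Hom$ is discrete. Reindexing along $\alpha\colon a\to a'$ is precomposition $a'/A \to a/A$, $(x',f')\mapsto (x',f'\alpha)$.

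By the notation for $\cxe^-$, the context interprets as the contravariant Grothendieck construction $(\Gamma\times A)\ctxe^- (a\mapsto a/A)$. Unfolding this, an object is $(\gamma,a,x,f\colon a\to x)$. A morphism to $(\gamma',a',x',f')$ is $(\theta,\alpha)$ together with a morphism $(x,f)\to(x',f'\alpha)$ in $a/A$, i.e.\ a $\beta\colon x\to x'$ with $\beta f = f'\alpha$. This is again $\Gamma\times A^\to$.

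I then take the composite isomorphism to be $(\gamma,x,a,f)\mapsto(\gamma,a,x,f)$, identity on the morphism data. The remaining checks are that composition is preserved (both compositions reduce to composition in $A$ and $\Gamma$), and that this is an isomorphism over $\Interp{\Gamma}$ compatible with the projections.

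\emph{Main obstacle.} The hardest step is making the isomorphism of contexts precise. Semantically it is immediate, but we must say in what sense the syntactic contexts are isomorphic. This means exhibiting mutually inverse substitutions, using the pairing and projection rules for $(\Sigma)^\op$ together with $\mathsf{swap}$ and $\mathsf{swap}^{-1}$ from Figure \ref{fig:sigma}, and then checking that their interpretations are exactly the functor above and its inverse.

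My plan for that step is as follows. First rewrite $\Gamma \cxe a:A \cxe^- (x,b):\sum_{x:A}\Hom(a,x)$ as $(\Gamma^\op \cxe p : (\sum_{a:A}\sum_{x:A}\Hom(a,x))^\op)^\op$. Then apply $\mathsf{swap}$ with $X=A^\op$, $Y=A^\op$ and $Z=\Hom$, turning it into $\sum_{x:A}\sum^{\tscxe}_{a:A}\Hom(\bar a,x)$. Finally unfold the $\sum^{\tscxe}$ back into the context extension $\tscxe$.

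One point needs separate care: the $\Gamma$-part is unaffected by the double opposite. This holds because $A$ is closed, so the rule $\Gamma\cxe A\jeq\Gamma\cxe^- A$ for closed types lets us move past $\Gamma$ without changing anything. I expect some bookkeeping about variance of the individual $\gamma_i$ here, since $\Gamma^{\op\op}\jeq\Gamma$ only holds on the nose for the whole context.
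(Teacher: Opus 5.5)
Your semantic computation is correct: both contexts interpret as $\Interp{\Gamma}\times\Interp{A}^\to$, and reordering the variables gives an isomorphism over $\Interp{\Gamma}$. However, this only shows that the permutation is sound in the category model. The lemma is used as an inference step in the derivation of $\Psi$ in the Yoneda proof, and the paper proves it syntactically, as a chain of context isomorphisms. The links in that chain come from the $\Sigma$ and $\Sigma^\op$ rules, $\mathsf{swap}$, $\Hom^\op\jeq\Hom$, and the closed-type rule $\Gamma\cxe A\jeq\Gamma\cxe^- A$. You correctly identify this syntactic derivation as the real content, and your outline is the paper's chain read backwards. Your treatment of $\Gamma$ is fine, and simpler than the paper's detour through $\Sigma\Gamma$. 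Applying the closed-type rule to $\Gamma^\op$ and $A^\op$ gives $(\Gamma\cxe a:A)^\op\jeq\Gamma^\op\cxe a:A^\op$ directly, so no per-$\gamma_i$ bookkeeping is needed.

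The gap is in the one nontrivial step. Your rewriting $(\Gamma^\op\cxe p:(\sum_{a:A}\sum_{x:A}\Hom(a,x))^\op)^\op$ is ill-typed, since $\Hom(a,x)$ requires $a:A^\op$. It also tacitly uses $(\sum_{a:A}B)^\op\cong\sum_{a:A^\op}B^\op$, which is exactly what fails: by the $\Sigma^\op$ rules, the second component of $(\sum_{a:A}B)^\op$ lives in the twisted type $B^\op[\bar u/a]$. Unfolding $\cxe^-$, using the definition of $\sum^\tscxe$ and $\Hom^\op\jeq\Hom$, and packing instead gives
\[ \Big(\Gamma^\op\cxe p:\sum_{a:A^\op}\sum^\tscxe_{x:A^\op}\Hom(a,\bar x)\Big)^\op , \]
with no outer $\op$ on the packed type. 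Your $\mathsf{swap}$ with $X=Y=A^\op$ expects an argument of the \emph{opposite} of this type. Its output is $S\defeq\sum_{x:A}\sum^\tscxe_{a:A}\Hom(\bar a,x)$, and placed inside $(\Gamma^\op\cxe p:(-))^\op$ that would give $\Gamma\cxe p:S^\op$, which is off by one $\op$. The correct step is $\mathsf{swap}^{-1}$ with $X=Y=A$ and $Z=\Hom(a,x)$, which lands in $S^\op$. Then $(\Gamma^\op\cxe p:S^\op)^\op\jeq\Gamma\cxe^- p:S\jeq\Gamma\cxe p:S$ by the closed-type rule, and unpacking $S$ yields the left-hand side.
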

  \begin{proof}
    The lemma follows from the following chain of isomorphisms of contexts, all of which follow from the rules for $\Sigma$, opposite contexts, and opposite types.
    {
      \allowdisplaybreaks
      \begin{align*}
        &\Gamma \cxe x:A \cxe a:A \tscxe \Hom(\bar a, x) \\
        & \hspace{2em} \cong
        \Gamma \cxe x:A \cxe z : \sum^\tscxe_{a:A}\Hom(\bar a,x) \\
        & \hspace{2em} \cong
        \Gamma \cxe z : \sum_{x:A} \sum^\tscxe_{a:A}\Hom(\bar a,x)  \\
        & \hspace{2em} \cong
        \Gamma \cxe^- z : \sum_{x:A} \sum^\tscxe_{a:A}\Hom(\bar a,x)  \\
        & \hspace{2em} \jeq
        \left(\Gamma^\op \cxe z : \left(\sum_{x:A} \sum^\tscxe_{a:A}\Hom(\bar a,x) \right)^\op \right)^\op\\
        & \hspace{2em} \cong
        \left(\Gamma^\op \cxe z : \sum_{a:A^\op} \sum^\tscxe_{x:A^\op}\Hom(a,\bar x)^\op  \right)^\op\\
        & \hspace{2em} \jeq
        \left(\Gamma^\op \cxe z : \sum_{a:A^\op} \sum^\tscxe_{x:A^\op}\Hom(a,\bar x)  \right)^\op\\
        & \hspace{2em} \cong
        \left(\Gamma^\op \cxe a:A^\op \cxe z : \sum_{x:A^\op}^\tscxe \Hom(a,\bar x) \right)^\op\\
        & \hspace{2em} \jeq
        \left(\Gamma^\op \cxe a:A^\op \cxe z : \left(\sum_{x:A} \Hom(a,\bar x)^\op\right)^\op \right)^\op\\
        & \hspace{2em} \jeq
        \left(w : (\Sigma\Gamma \times A)^\op \cxe z : \left(\sum_{x:A}\Hom(a,x)\right)^\op \right)^\op\\
        & \hspace{2em} \jeq
        w : \Sigma\Gamma \times A \cxe^- z : \sum_{x:A}\Hom(a,x)\\
        & \hspace{2em} \cong
        \Gamma \cxe a : A \cxe^- z : \sum_{x:A}\Hom(a,x)
      \end{align*}
    }%
    Here, $\Sigma\Gamma$ is defined by $\Sigma\Gamma \defeq \Sigma_{\gamma_1 : \Gamma_1}\cdots \Sigma_{\gamma_{n-1}:\Gamma_{n-1}}\Gamma_n$.
  \end{proof}

  \setcounter{proposition}{49}
  \begin{lemma}[Yoneda]
    Let $A:\UU$ be a type in the empty context. Then the two following functors are naturally isomorphic.
    \begin{align*}
      Y, \bar Y   & :(A \to \Set) \times A \to \Set             \\
      Y(F,a)      & \defeq \Nat(\Hom_A(a,-), F) \\
      \bar Y(F,a) & \defeq Fa
    \end{align*}
  \end{lemma}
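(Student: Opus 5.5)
We apply Corollary \ref{cor:nat-isos} with base type $(A \to \Set) \times A$ and the copresheaves $Y$ and $\bar Y$. For this we need four pieces of data: natural transformations $\Phi : \Nat(Y,\bar Y)$ and $\Psi : \Nat(\bar Y, Y)$, and two families of morphisms witnessing $\Psi\circ\Phi = \id$ and $\Phi \circ \Psi = \id$. Since $Y(F,a)$ and $Fa$ are sets, the $\Hom$-types involved are trivial. So the last two pieces only need terms inhabiting $\Hom_{Y(F,a)}(\varphi, \Psi(\Phi(\varphi)))$ and $\Hom_{Fa}(\Phi(\Psi(r)), r)$. By the proposition on equalities of natural transformations, these amount to pointwise equalities.

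\textbf{Step 1: the map $\Phi$.} We define $\Phi$ as in the sketch, by evaluating at $(a,\Refl_a)$. To see that $(a,\Refl_a)$ is a legitimate argument of $\varphi : \prod_{(x,f):\sum_{x:A}\Hom(a,x)} Fx$, we use Proposition \ref{prop:op-sigma} together with the $\Sigma$-rules involving opposites. Concretely, $\Refl_a \dvar \Hom_A(\bar a, a)$ is exactly the twisted second component required for a term in the domain at the appropriate variance. By Proposition \ref{prop:internal-nat-pi}, uncurrying then packages $\Phi$ as an element of $\Nat(Y,\bar Y)$.

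\textbf{Step 2: the map $\Psi$.} We apply $\Hom$-\textsc{Elim} with $X \defeq Fa$ and motive $D \defeq Fx$, which gives $j(f,r,r) : Fx$ in the context $F \cxe x:A \cxe a:A \tscxe f:\Hom(\bar a,x) \cxe r:Fa$. The main obstacle is turning this twisted context into one to which $\Pi$-introduction applies. Lemma \ref{lemma:yoneda-perm-proof} (with $\Gamma \defeq F : A\to\Set$) rewrites it as $F \cxe a:A \cxe^- (x,f):\sum_{x:A}\Hom(a,x) \cxe r : Fa$. We still have to move $r:Fa$ before the $\cxe^-$ extension. This permutation is legitimate because $Fa$ does not depend on $(x,f)$, and because the $\Pi$-rule of Proposition \ref{prop:pi-types} requires exactly the shape $\Gamma' \cxe^- z : \sum_x \Hom(a,x)$ with $\Gamma' \defeq F \cxe a \cxe r$. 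At this step I would check carefully that $\sum_{x:A}\Hom(a,x)$ is a type over $(\Gamma')^\op$. This holds because $A$ is closed, via the rule $\Gamma \cxe A \jeq \Gamma \cxe^- A$ for closed $A$; this is precisely where closedness of $A$ is used. $\Pi$-introduction then yields $\lambda(x,f).\,j(f,r,r) : Y(F,a)$. Naturality in $(F,a)$ comes for free, since everything is a term in context; Proposition \ref{prop:internal-nat-pi} repackages it as an element of $\Nat(\bar Y, Y)$.

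\textbf{Step 3: the two round trips.} For $\Phi(\Psi(r))$ we compute
\[ (\lambda(x,f).\,j(f,r,r))(a,\Refl_a) \jeq j(\Refl_a,r,r) \jeq r \]
by $\beta$-reduction and $\Hom$-\textsc{Comp}. This inhabits the required $\Hom$-type by $\Refl$. For $\Psi(\Phi(\varphi))$ we must show, for $(x,f)$, that $j(f,\varphi_a(\Refl_a),\varphi_a(\Refl_a))$ and $\varphi_x(f)$ are related by a morphism in $Fx$. We obtain this by a second $\Hom$-\textsc{Elim} with $X \defeq Y(F,a)$ and motive
\[ D \defeq \Hom_{Fx}\bigl(j(f,\varphi_a(\Refl_a),\varphi_a(\Refl_a)),\, \varphi_x(f)\bigr). \]
By $\Hom$-\textsc{Comp}, the base case reduces to $\Hom_{Fa}(\varphi_a(\Refl_a),\varphi_a(\Refl_a))$, which is inhabited by $\Refl$. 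We must also check the variance of $\varphi$ in $D$. It appears contravariantly in $\varphi_x(f)$ as an argument, so it needs $\varphi : X^\op$; this is harmless since $Y(F,a)$ is a set, so $X^\op \jeq X$. We then repeat the context permutation from Step 2 to curry over $(x,f)$ and finish with directed function extensionality, or equivalently by the equality-of-natural-transformations proposition. I expect the permutation-of-contexts bookkeeping, together with the variance checks for this second motive, to be the delicate part; the rest is routine.
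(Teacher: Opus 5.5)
Your proposal is correct and follows essentially the same route as the paper: Corollary~\ref{cor:nat-isos}, then $\Phi$ by evaluation at $(a,\Refl_a)$, and $\Psi$ by $\Hom$-elimination with $X \defeq Fa$ and motive $Fx$, followed by Lemma~\ref{lemma:yoneda-perm-proof}, a permutation and $\Pi$-introduction; $\Phi\circ\Psi=\id$ comes from $\Hom$-computation, and $\Psi\circ\Phi=\id$ from a second $\Hom$-elimination with $X \defeq Y(F,a)$ followed by the same permutation and $\mathsf{dfunext}$. The only cosmetic difference is that your second motive points from $j(\dots)$ to $\varphi_x(f)$, which is the reverse of the direction you said Corollary~\ref{cor:nat-isos} needs (the paper uses $\Hom_{Fx}(\varphi(x,f),j_\varphi)$); this is harmless since $Fx$ is a set, but flipping the motive fixes it more cleanly.
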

  \begin{proof}
    We use Corollary \ref{cor:nat-isos}.
    First, we obtain a natural transformation $\Phi:Y \to \bar Y$ as follows.
    {
      \small
      \begin{mathparpagebreakable}
        \inferrule*[Right=$\Pi$-Elim]
        {
          \inferrule*[Right=Weak]
          {
            \inferrule*[Right=Def]
            {
              \inferrule*[Right=$\Sigma^\op$-Intro]
              {F:(A \to \Set)^\op \cxe a :A^\op  \vdash a:A^\op \\\\
              F:(A \to \Set)^\op \cxe a :A^\op  \vdash \Refl_a \dvar \hom(a,a)}
              {
                (F:A \to \Set \cxe a :A)^\op  \vdash (a,\Refl_a) : \bigg(\sum_{x:A}\hom(a,x)\bigg)^\op
              }
            }
            {
              F:A \to \Set \cxe a :A  \vdash (a,\Refl_a) \cvar \sum_{x:A}\hom(a,x)
            }
          }
          {
            F:A \to \Set \cxe a :A \cxe \varphi: Y(F,a) \vdash (a,\Refl_a) \cvar \sum_{x:A}\hom(a,x)
          }
        }
        {
          F:A \to \Set \cxe a :A \cxe \varphi: Y(F,a)\vdash \varphi(a,\Refl_a) : Fa
        }
      \end{mathparpagebreakable}
    } %
    Its inverse $\Psi:\bar Y \to Y$ is given by $\Hom$-elimination.
    {
      \small
      \begin{mathpar}
        \inferrule*[Right=$\Pi$-Intro]
        {
          \inferrule*[Right=Perm]
          {
            \inferrule*[Right=Lemma-\ref{lemma:yoneda-perm-proof}]
            {
              \inferrule*[Right=$\Hom$-Elim]
              {
                F:A \to \Set \cxe a:A \vdash Fa\,\Ty \\\\
                F:A \to \Set \cxe x:A \cxe a:A \tscxe f : \hom(a,x), r:Fa \vdash Fx\,\Ty \\\\
                F:A \to \Set \cxe a :A \cxe r:Fa \vdash r : Fa
              }
              {
                F:A \to \Set \cxe x:A \cxe a:A \tscxe f : \hom(a,x) \cxe r:Fa \vdash j(f,r,r) : Fx
              }
            }
            {
              F:A \to \Set \cxe a :A \cxe^- (x,f) : \sum_{x:A}\hom(a,x)\cxe r : Fa \vdash j(f,r,r) : Fx
            }
          }
          {
            F:A \to \Set \cxe a :A \cxe r : Fa \cxe^- (x,f) : \sum_{x:A}\hom(a,x)\vdash j(f,r,r) : Fx
          }
        }
        {
          F:A \to \Set \cxe a :A \cxe r:Fa \vdash \lambda (x,f) . j(f,r,r) : \prod_{(x,f):\sum_{x:A}\hom(a,x)} Fx
        }
      \end{mathpar}
    }

    Now we show that these are inverses. That $\Phi \circ \Psi$ is the identity follows from the computation rule for $\Hom$-types.
    The other direction, that $\Psi \circ \Phi = \id$, is derived below, using the abbreviations $j_{\varphi} \defeq j(f,\varphi(a,\Refl_a),\varphi(a,\Refl_a))$ and $j_{\Refl} \defeq j(f, \varphi, \Refl_{\varphi(a,\Refl_a)})$.
    {
      \small
      \begin{mathpar}
        \inferrule*[right=dfunext]
        {
          \inferrule*[Right=Perm]
          {
            \inferrule*[Right=Lemma-\ref{lemma:yoneda-perm-proof}]
            {
              \inferrule*[Right=$\Hom$-Elim]
              {
                F:A \to \Set\cxe a:A \vdash Y(F,a)\,\Ty \\\\
                F:A \to \Set\cxe x :A\cxe a:A \tscxe f : \Hom(\bar a,x)\cxe  \varphi: Y(F,a) \vdash \Hom_{Fx}(\varphi(x,f), j_{\varphi})\,\Ty \\\\
                F:A \to \Set\cxe a :A\cxe \varphi: Y(F,a) \vdash \Refl_{\varphi(a,\Refl_a)} : \Hom_{Fx}(\varphi(a,\Refl_a), \varphi(a,\Refl_a))
              }
              {
                F:A \to \Set\cxe x :A\cxe a:A \tscxe f : \Hom(\bar a,x) \cxe \varphi: Y(F,a) \vdash j_{\Refl} : \Hom_{Fx}(\varphi(x,f), j_{\varphi})
              }
            }
            {
              F:A \to \Set\cxe a :A\cxe^- (x,f): \sum_{x:A}\Hom(a,x) \cxe \varphi: Y(F,a) \vdash j_{\Refl} : \Hom_{Fx}(\varphi(x,f), j_{\varphi})
            }
          }
          {
            F:A \to \Set\cxe a :A\cxe \varphi: Y(F,a) \cxe^- (x,f): \sum_{x:A}\Hom(a,x) \vdash j_{\Refl} : \Hom_{Fx}(\varphi(x,f), j_{\varphi})
          }
        }
        {
          F:A \to \Set \cxe a :A \cxe \varphi: Y(F,a) \vdash \mathsf{dfunext}(j_{\Refl_{\varphi}}) : \Hom_{Y(F,a)}(\varphi, \lambda(x,f).j_{\varphi})
        }
      \end{mathpar}
    }
  \end{proof}

  \end{document}